\let\a=\alpha \let\be=\beta \let\g=\gamma \let\de=\delta
\let\e=\varepsilon   \let\th=\theta
\let\eps=\epsilon
 \let\k=\kappa \let\la=\lambda \let\m=\mu
  \let\p=\pi  \let\s=\sigma
\let\ph=\varphi \let\Ph=\phi \let\PH=\Phi \let\Ps=\Psi
\let\Om=\Omega  
 \let\G=\Gamma
\let\qd=\quad \let\qqd=\qquad 
\def\epp{\, .}
\def\epc{\, ,}
\def\tst#1{{\textstyle #1}}
\def\dst#1{{\displaystyle #1}}
\theoremstyle{plain}
\newtheorem{theorem}{Theorem}
\newtheorem*{theorem*}{Theorem}
\newtheorem{lemma}{Lemma}
\newtheorem*{lemma*}{Lemma}
\newtheorem{proposition}{Proposition}
\newtheorem{corollary}{Corollary}
\newtheorem*{corollary*}{Corollary}
\newtheorem*{conjecture*}{Conjecture}
\theoremstyle{definition}
\newtheorem*{definition}{Definition}
\newtheorem*{remark}{Remark}
\newtheorem*{question*}{Question}
\def\2{\frac{1}{2}} \def\4{\frac{1}{4}}
\def\6{\partial}
\def\+{\dagger}
\def\<{\langle} \def\>{\rangle}
\let\nodoti\i
\def\i{{\rm i}}
\def\rd{{\rm d}}
\def\re{{\rm e}}
\DeclareMathOperator{\sh}{sh}
\DeclareMathOperator{\ch}{ch}
\DeclareMathOperator{\tgh}{th}
\DeclareMathOperator{\cth}{cth}
\DeclareMathOperator{\arch}{arch}
\DeclareMathOperator{\arctg}{arctg}
\DeclareMathOperator{\tr}{tr}
\DeclareMathOperator{\one}{\mathds{1}}
\DeclareMathOperator{\Int}{Int}
\DeclareMathOperator{\Ext}{Ext}
\DeclareMathOperator{\sign}{sign}
\DeclareMathOperator{\id}{id}
\def\Re{{\rm Re\,}} \def\Im{{\rm Im\,}}
\def\vv{\mathbf{v}}
\def\xv{\mathbf{x}}
\def\Cv{\mathbf{C}}
\def\Dv{\mathbf{D}}
\def\Ev{\mathbf{E}}
\def\Fv{\mathbf{F}}
\def\fb{\mathfrak{b}}
\def\fz{\mathfrak{z}}
\renewcommand{\appendix}{%
   \renewcommand{\section}{
        \secdef\Appendix\sAppendix}%
   \setcounter{section}{0}%
   \renewcommand{\thesection}{\Alph{section}}%
   \renewcommand{\theequation}{\thesection.\arabic{equation}}%
}
\newcommand{\Appendix}[2][?]{%
     \refstepcounter{section}%
     \setcounter{equation}{0}%
     \addcontentsline{toc}{appendix}%
          {\protect\numberline{\appendixname~\thesection} #1}%
     \vspace{\baselineskip}%
     {\noindent\large\bfseries\appendixname\ \thesection: #2\par}%
     \sectionmark{#1}\vspace{\baselineskip}}
\newcommand{\sAppendix}[1]{%
     {\noindent\large\bfseries\appendixname\:: #1\par}%
     \sectionmark{#1}\vspace{\baselineskip}}
\begin{document}

\thispagestyle{empty}

\begin{center}

{\Large \bf
High-temperature analysis of the transverse dynamical two-point
correlation function of the XX quantum-spin chain}

\vspace{10mm}

{\large
Frank G\"{o}hmann,$^\dagger$
Karol K. Kozlowski$^\ast$ and Junji Suzuki$^\ddagger$}\\[3.5ex]
$^\dagger$Fakult\"at f\"ur Mathematik und Naturwissenschaften,\\
Bergische Universit\"at Wuppertal,
42097 Wuppertal, Germany\\[1.0ex]
$^\ast$Univ Lyon, ENS de Lyon, Univ Claude Bernard,\\ CNRS,
Laboratoire de Physique, F-69342 Lyon, France\\[1.0ex]
$^\ddagger$Department of Physics, Faculty of Science, Shizuoka University,\\
Ohya 836, Suruga, Shizuoka, Japan

\vspace{25mm}

{\large {\bf Abstract}}

\end{center}

\begin{list}{}{\addtolength{\rightmargin}{9mm}
               \addtolength{\topsep}{-5mm}}
\item
We analyse the transverse dynamical two-point correlation function
of the XX chain by means of a thermal form factor series. The series
is rewritten in terms of the resolvent and the Fredholm determinant
of an integrable integral operator. This connects it with a matrix
Riemann-Hilbert problem. We express the correlation function
in terms of the solution of the matrix Riemann-Hilbert problem.
The matrix Riemann-Hilbert problem is then solved asymptotically
in the high-temperature limit. This allows us to obtain the leading
high-temperature contribution to the two-point correlation function
at any fixed space-time separation.
\end{list}

\clearpage

\section{Introduction}
In our recent work \cite{GKKKS17} we have developed an approach
to the calculation of dynamical correlation functions of Yang-Baxter
integrable lattice systems in equilibrium with a heat bath of
temperature $T$. The basic idea was to combine a certain lattice
realisation of a path integral for finite temperature dynamical
correlation functions \cite{Sakai07} with a thermal form factor
expansion introduced in \cite{DGK13a}. As a result we obtained
a thermal form factor series for the dynamical two-point correlation
functions in this class of systems.

A basic example, for which we worked out the series explicitly, is
the transverse correlation function of the XX chain. The XX chain
is a spin-$\2$ model with Hamiltonian
\begin{equation} \label{hxx}
     H_L = J \sum_{j = 1}^L \bigl( \s_{j-1}^x \s_j^x + \s_{j-1}^y \s_j^y \bigr)
           - \frac{h}{2} \sum_{j=1}^L \s_j^z \epp
\end{equation}
Here the $\s_j^\a$, $\a = x, y, z$, are Pauli matrices acting on site
$j \in \{1, \dots, L\}$ of an $L$-site lattice, and periodic boundary
conditions, $\s_0^\a = \s_L^\a$, are implied. The parameters $J > 0$
and $h > 0$ are the strengths of the exchange interaction and of the
external magnetic field.

The XX quantum-spin chain is particularly simple as an integrable model
in that the derivative of the bare scattering phase in its Bethe Ansatz
solution vanishes identically. It is also special since it maps to a
model of non-interacting Fermions by means of a Jordan-Wigner transformation
\cite{LSM61}. For these reasons rather much, in comparison with more
generic integrable models, is known about its correlation functions.
The longitudinal dynamical two-point function at finite temperature was
calculated by Niemeijer \cite{Niemeijer67} using the mapping to Fermions.
We reproduced his result in a neat form from our thermal form factor
series \cite{GKKKS17}. The transverse two-point function, defined as
\begin{equation} \label{defcorrmp}
     \bigl\< \s_1^- \s_{m+1}^+ (t) \bigr\>_T =
        \lim_{L \rightarrow + \infty}
	\frac{\tr \{ \re^{- H_L/T} \s_1^- \re^{\i H_L t} \s_{m+1}^+ \re^{- \i H_L t} \}}
	     {\tr \{ \re^{- H_L/T}\}} \epc
\end{equation}
where $t$ is the time variable and $\s^\pm = \2 (\s^x \pm \i \s^y)$,
is also well studied, but is much harder to access within the free
Fermion approach. In fact, the finite-temperature analysis of (\ref{defcorrmp})
based on a Jordan-Wigner transformation stayed limited to the high-temperature
asymptotics at short distances. Brandt and Jacoby \cite{BrJa76} proved
the rather well-known formula
\begin{equation} \label{brandtjacoby}
     \lim_{T \rightarrow + \infty} \bigl\< \s_1^- \s_1^+ (t) \bigr\>_T =
        \2 \re^{- \i h t - 4 J^2 t^2}
\end{equation}
for the auto-correlation function at infinite temperature. This was
confirmed by Capel and Perk using a different method \cite{CaPe77}.
The same authors then obtained the first few terms in the high-temperature
expansion of the auto-correlation function and of the nearest and
next-to-nearest neighbour correlation functions at $h = 0$ \cite{PeCa77}.

Deeper insight into the asymptotic behaviour of the transverse correlation
function resulted from the study of different Fredholm determinant
representations. The special case of vanishing magnetic field, $h = 0$ in
(\ref{hxx}), maps to the critical transverse-field Ising chain
\cite{MuSh84}, for which a Fredholm determinant representation of
the auto-correlation function was obtained in \cite{MPS83}. Based on this
Fredholm determinant representation the leading long-time asymptotic behaviour
of the transverse auto-correlation function at finite temperature was computed
in \cite{DeZh94b}. A Bethe Ansatz analysis of (\ref{defcorrmp}) was initiated
by Colomo et al.\ in \cite{CIKT92}, where a Fredholm determinant representation
of the correlation function at finite magnetic field was derived. This
Fredholm determinant representation was then used for a long-time,
large-distance asymptotic analysis of the correlation function at fixed
temperature by Its et al.\ \cite{IIKS93b}, who obtained the leading
exponential term and the next-to-leading algebraic corrections. However,
the constant term in the long-time, large-distance asymptotics in the
critical regime $0 \le h < 4J$ has never been calculated.

With our novel form factor series representation \cite{GKKKS17} we have
the opportunity to revisit the problem of an efficient calculation of 
the transverse dynamical two-point function (\ref{defcorrmp}). The
series is based on form factors of the quantum transfer matrix. It
differs from the series employed by Its et al.\ in the derivation of
their Fredholm determinant representation \cite{IIKS93b} of the
correlation function. It gives us direct access \cite{GKS19bpp}
to the constant factor in the long-time, large-distance asymptotics
of the correlation function in the so-called spacelike regime, where
the spatial separation of two space-time points in appropriate units
is larger than the time separation. In fact, the first term in the
series determines the asymptotics in the spacelike regime. As such
the series exhibits a striking similarity with the Borodin-Okounkov,
Geronimo-Case formula \cite{BoOk00,GeCa79,BaWi00} for a Toeplitz
determinant generated by a symbol satisfying the hypotheses of the
Szeg\"o theorem.

For further asymptotic analysis we shall identify our series as
being proportional to the Fredholm determinant of an integral
operator of integrable type. In separate work \cite{GKSS19}
we shall show that this Fredholm determinant representation
is highly efficient for the actual numerical calculation of the
correlation function (\ref{defcorrmp}) in the critical as well
as in the massive regime for generic values of distance, time
and temperature.

In this work we will derive a matrix Riemann-Hilbert problem
associated with our Fredholm determinant representation and express
the two-point function (\ref{defcorrmp}) explicitly in terms of
its solution. This matrix Riemann-Hilbert problem can be used to
calculate the long-time, large-distance asymptotics of the correlation
function in the timelike regime. As we shall see below, it can also
be used to derive the high-temperature asymptotics of the two-point
function for any fixed spatio-temporal separation. This is the main
purpose of this work. We shall obtain a generalization of the
Brandt-Jacoby formula (\ref{brandtjacoby}) to any spatial separation
of points. The generalization, stated more precisely in
Theorem~\ref{theo:main} below, is of the form
\begin{multline} \label{mainresult}
     \bigl\<\s_1^- \s_{m+1}^+ (t)\bigr\>_T = \2 \biggl(- \frac{J}{T}\biggr)^m
        \exp\bigg\{\i c \tau  - \frac{\tau^2}{4}
	           + \int_0^\tau \rd \tau' \: u_m (\tau') \biggr\} \\[1ex] \times
	\frac{Q_{m+1}(- \i) P_m' (- \i) - P_m (- \i) Q_{m+1}' (- \i)}
	     {\bigl(Q_{m+1}(- \i) P_m' (- \i)
	            - P_m (- \i) Q_{m+1}' (- \i)\bigr)\bigr|_{\tau = 0}}
        \bigl(1 + {\cal O} (T^{-2}) \bigr) \epp
\end{multline}
Here $\tau = - 4J\bigl(t - \frac{\i}{2T}\bigr)$, $c = \frac{h}{4J}$,
and $P_m$ and $Q_{m+1}$ are polynomials whose coefficients depend on
$\tau$. These coefficients are explicit but complicated rational
combinations of modified Bessel functions. The same is true for the
function $u_m$.

The paper is organized as follows. In Section~\ref{sec:ffseries} we recall
and slightly rewrite the thermal form-factor series obtained in
\cite{GKKKS17}. In Section~\ref{sec:freddetrep} we recast the series in
the form of a Fredholm-determinant representation. In Section~\ref{sec:intop}
we work out the `integrable structure' of the corresponding integral
operator. Having in mind possible extensions to the more general XXZ chain
we use a parameterization in terms of rapidity variables, entering
the expression for the integration kernel through hyperbolic functions.
In order to be self-contained we work out some of the basic features
of the associated matrix Riemann-Hilbert problem in Section~\ref{sec:mRHp}.
In Section~\ref{sec:corrmRHp} we express the transverse correlation function
in terms of the solution of the matrix Riemann-Hilbert problem. In
Section~\ref{sec:standardform} we transform the matrix Riemann-Hilbert
problem to a certain standard form which, in Section~\ref{sec:hightmrhp},
is asymptotically analysed in the high-temperature limit.
Section~\ref{sec:corrhighT} is devoted to the derivation of our main
result, the high-temperature asymptotic formula (\ref{mainresult}).
A short summary and conclusions are presented in Section~\ref{sec:conclusions}.
The appendices are devoted to the solution of the model Riemann-Hilbert
problem appearing in the high-temperature asymptotic analysis, to
working out some of the properties of the polynomials $P_m$ and $Q_{m+1}$
arising from this analysis, and to the presentation of examples of explicit
high-temperature asymptotic formulae for small $m$.

\section{Thermal form factor series} \label{sec:ffseries}
We start our analysis by recalling and slightly rewriting the
thermal form factor series for the two-point function (\ref{defcorrmp}).
For this purpose we have to introduce a number of basic functions.
\subsection{One-particle energy and momentum}
First of all we define the one-particle energy and momentum
functions. The one-particle momentum as a function of the rapidity
variable is defined by
\begin{equation}
     p(\la) = - \i \ln \bigl( - \i \tgh(\la) \bigr) \epp
\end{equation}
Here we may interpret the logarithm as its principal branch,
meaning that we provide cuts in the complex plane from
$- \i \p/2$ to zero modulo $\i \p$. Below we shall often encounter
the derivative of the momentum function, most conveniently
expressed as
\begin{equation} \label{pprimee}
     \i p' (\la) = \frac{2}{\sh(2 \la)} \epp
\end{equation}
With this the one-particle energy can be defined as
\begin{equation}
     \eps (\la) = h + 2 J p' (\la) \epc
\end{equation}
where $h$ is the magnetic field and $J > 0$ the exchange energy.
In the following we restrict ourselves to the critical parameter
regime
\begin{equation} \label{critreg}
     0 < h < 4J \epp
\end{equation}

Because of the $\p \i$-periodicity of the momentum, which is
shared by all other functions in our form factor series, we
may restrict ourselves to the `fundamental strip'
\begin{equation}
     {\cal S} = \Bigl\{ \la \in {\mathbb C} \Big|
                        - \frac{\p}{4} \le \Im \la < \frac{3 \p}{4} \Bigr\}
\end{equation}
or rather think of the functions as being defined on a cylinder
of circumference $\p$.

It is easy to see that $\eps$ has precisely two roots
\begin{equation}
     \la_F^\pm = \frac{\i \p}{4} \pm z_F \epc \qd
           z_F = \2 \arch \biggl( \frac{4 J}{h} \biggr)
\end{equation}
in ${\cal S}$. These roots are called the Fermi rapidities.
The value
\begin{equation} \label{defpf}
     p_F = p(\la_F^-) = \arccos \biggl( \frac{h}{4J} \biggr)
\end{equation}
of the momentum at the left Fermi rapidity will be called the
Fermi momentum. Using the Fermi rapidities we may rewrite the
one-particle energy as
\begin{equation} \label{epsfactor}
     \eps(\la) = - h \, p'(\la) \sh(\la - \la_F^-) \sh(\la - \la_F^+) \epp
\end{equation}

The functions $\eps$ and $p$ are real on the lines $x \pm \i \p/4$,
$x \in {\mathbb R}$, where they take the values
\begin{subequations}
\begin{align}
     & \eps(x \pm \i \p/4) = h \mp \frac{4J}{\ch (2 x)} \epc \\[1ex]
     & p(x + \i \p/4) = - \frac{\p}{2} + 2 \arctg \bigl( \re^{- 2 x} \bigr) \epc \\
     & p(x - \i \p/4) = - \p \sign(x) + \frac{\p}{2} - 2 \arctg \bigl( \re^{-2 x} \bigr) \epp
\end{align}
\end{subequations}

\subsection{More functions appearing in the form factor series}
In order to define the general term in the form factor series we have to
introduce a few more functions. The one-particle energy determines the function
\begin{equation} \label{funz}
     z(\la) = \frac{1}{2 \p \i} \ln \biggl[ \cth \biggl(\frac{\eps(\la)}{2 T} \biggr) \biggr]
	      \epp
\end{equation}
Another function needed below is the square of a generalized Cauchy
determinant,
\begin{equation}
     {\cal D} \bigl(\{x_j\}_{j=1}^m, \{y_k\}_{k=1}^n\bigr) =
        \frac{\bigl[ \prod_{1 \le j < k \le m} \sh^2 (x_j - x_k) \bigr]
              \bigl[ \prod_{1 \le j < k \le n} \sh^2 (y_j - y_k) \bigr]}
             {\prod_{j=1}^m \prod_{k=1}^n \sh^2 (x_j - y_k)} \epp
\end{equation}

\begin{figure}
\begin{center}
\includegraphics[width=.92\textwidth]{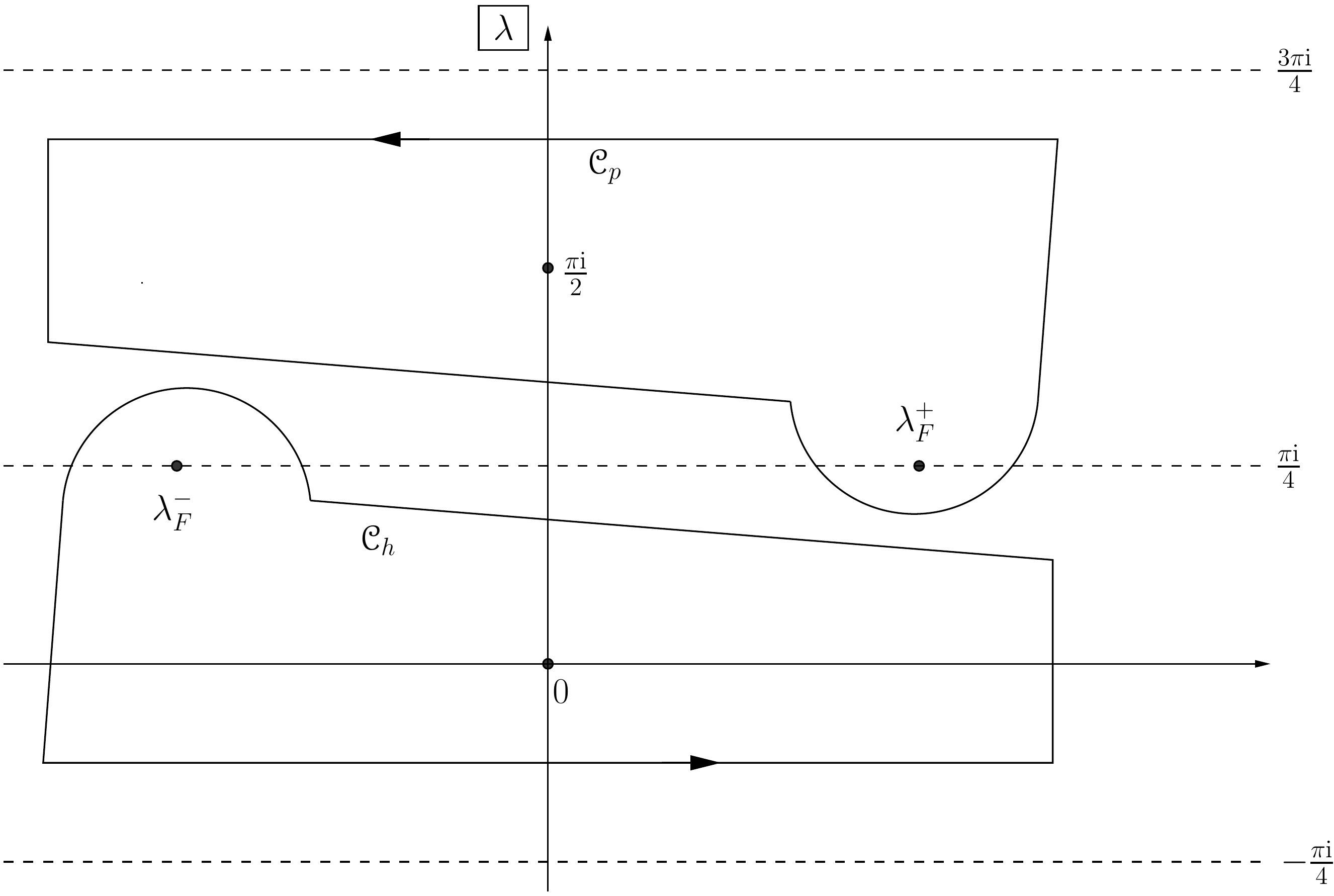}
\caption{\label{fig:xx_ch_and_cp} Sketch of the hole and
particle contours ${\cal C}_h$ and ${\cal C}_p$. The Fermi rapidity $\la_F^-$
is located inside ${\cal C}_h$, while $\la_F^+$ lies inside ${\cal C}_p$.}
\end{center}
\end{figure}

Two more functions that will play an important role in our analysis are
defined as periodic Cauchy transforms with respect to a `hole contour'
${\cal C}_h$ and a `particle contour' ${\cal C}_p$ sketched in
figure~\ref{fig:xx_ch_and_cp}. The same simple, positively oriented contours
will occur in the definition of the form factor series. They are defined in
such a way that ${\cal C}_h$ encloses all roots of $\re^{- \eps(x)/T} - 1$
inside the strip $- \p/4 < \Im x < \p/4$ as well as the left Fermi rapidity
$\la_F^-$, but no other roots of $\re^{- \eps(x)/T} - 1$, while ${\cal C}_p$
encloses the roots of $\re^{- \eps(x)/T} - 1$ inside the strip $\p/4 <
\Im x < 3 \p/4$ as well as the right Fermi rapidity $\la_F^+$ and again no
other roots of $\re^{- \eps(x)/T} - 1$.

Given these contours we define
\begin{equation} \label{phh}
     \PH_h (x) = \frac{\i p'(x)}{2}
                 \exp \biggl\{ \i \int_{{\cal C}_h} \rd \la \: p'(\la) z(\la)
	                       \frac{\sh(x + \la)}{\sh(x - \la)} \biggr\}
\end{equation}
for all $x \in {\cal S} \setminus {\cal C}_h$, and
\begin{equation} \label{php}
     \PH_p (x) = \frac{\i p'(x)}{2}
                 \exp \biggl\{ - \i \int_{{\cal C}_p} \rd \la \: p'(\la) z(\la)
	                       \frac{\sh(x + \la)}{\sh(x - \la)} \biggr\}
\end{equation}
for all $x \in {\cal S} \setminus {\cal C}_p$. Being Cauchy transforms
the functions $\PH_h$ and $\PH_p$ have jump discontinuities across the
contours ${\cal C}_h$ and ${\cal C}_p$, respectively. Hence, each of these
functions determines two functions on the respective contour by its
boundary values from inside and outside the contour. We denote these
functions by $\PH_{h\pm}$ and $\PH_{p\pm}$, where the plus sign stands for
the boundary value from the left and the minus sign for the boundary value
from the right of an oriented contour.

Fix $x \in \Int ({\cal C}_h) \cup \Int ({\cal C}_p)$. Then $\sh(x + \la)/\sh(x - \la)$
is a holomorphic function of $\la$ for all $\la \in {\cal S} \setminus
\bigl(\Int ({\cal C}_h) \cup \Int ({\cal C}_p)\bigr)$. Since the integrands
in (\ref{phh}), (\ref{php}) are rapidly decaying for $\la \rightarrow \pm \infty$,
we may deform the contours and conclude that
\begin{subequations}
\begin{align}
     \PH_h (x) & = \PH_p (x) && \text{for all $x \in \Int ({\cal C}_h) \cup \Int ({\cal C}_p$),} \\
     \PH_{h+} (x) & = \PH_p (x) && \text{for all $x \in {\cal C}_h$,} \\
     \PH_{p+} (x) & = \PH_h (x) && \text{for all $x \in {\cal C}_p$.}
\end{align}
\end{subequations}

\subsection{The series}
We can now recall the form factor series derived in \cite{GKKKS17}. Using the
notation introduced above and performing some rather obvious simplifications
it can be written as
\begin{multline} \label{ffseriesxxtrans}
     \bigl\<\s_1^- \s_{m+1}^+ (t)\bigr\>_T = (-1)^m {\cal F} (m)
        \sum_{n=1}^\infty \frac{(-1)^n}{n! (n-1)!}
	   \prod_{j=1}^n \int_{{\cal C}_h} \frac{\rd x_j}{\p \i}
	      \frac{\PH_p (x_j) \re^{\i (m p(x_j) - t \eps(x_j))}}{1 - \re^{\eps (x_j)/T}}
	      \\ \times
	   \prod_{k=1}^{n-1} \int_{{\cal C}_p} \frac{\rd y_k}{\p \i}
	      \frac{\re^{- \i (m p(y_k) - t \eps(y_k))}}
	           {\PH_h (y_k) \bigl(1 - \re^{- \eps(y_k)/T}\bigr)} \:
		    {\cal D} \bigl( \{x_j\}_{j=1}^n, \{y_k\}_{k=1}^{n-1} \bigr) \epc
\end{multline}
where
\begin{multline}
     {\cal F} (m) =
        \exp \biggl\{- \int_{{\cal C}_h' \subset {\cal C}_h} \rd \la \: z(\la)
	\int_{{\cal C}_h} \rd \m \: \cth' (\la - \m) z(\m) \\
	- \i m p_F - m \int_{{\cal C}_h} \frac{\rd \la}{2 \p} p' (\la)
	\ln \biggl| \cth \biggl(\frac{\eps(\la)}{2 T} \biggr) \biggr| \biggr\}
	\epp
\end{multline}
The contour ${\cal C}_h'$ is tightly enclosed by ${\cal C}_h$.

\section{Fredholm determinant representation} \label{sec:freddetrep}
For the purpose of this work the series on the right hand side of
(\ref{ffseriesxxtrans}) defines the correlation function. The task
is to evaluate it numerically or asymptotically. An important step
towards its efficient evaluation will be to rewrite the series as a
Fredholm series, using a technique developed by Korepin and Slavnov
in \cite{KoSl90}. This is possible due to the occurrence of the
square of the generalized Cauchy determinant on the right hand side
of (\ref{ffseriesxxtrans}). Let
\begin{equation}
      g(x) = \i (mp(x) - t \eps(x))
\end{equation}
and
\begin{equation}
     w(x) = \frac{1}{\p \i \PH_h (x) \bigl(1 - \re^{- \eps(x)/T}\bigr)} \epc \qd
     \overline w(x) = - \frac{\PH_p (x)}{\p \i \bigl(1 - \re^{\eps(x)/T}\bigr)} \epp
\end{equation}
Then
\begin{multline} \label{ffseriessimple}
     \bigl\<\s_1^- \s_{m+1}^+ (t)\bigr\>_T = (-1)^m {\cal F} (m)
        \sum_{n=1}^\infty \frac{1}{n! (n-1)!}
	   \prod_{j=1}^n \int_{{\cal C}_h} \rd x_j \: \overline w (x_j) \re^{g(x_j)}
	      \\ \times
	   \prod_{k=1}^{n-1} \int_{{\cal C}_p} \rd y_k \: w(y_k) \re^{-g(y_k)} \:
		    {\cal D} \bigl( \{x_j\}_{j=1}^n, \{y_k\}_{k=1}^{n-1} \bigr) \epp
\end{multline}

For the summation we note that ${\cal D} = C^2$, where
\begin{equation}
     C = \lim_{\Re y_n \rightarrow + \infty} \2 \det_n \bigr(\ph(x_j,y_k) \bigl) \epc \qd
     \ph(x,y) = \frac{\re^{y - x}}{\sh(y - x)} \epp
\end{equation}
It follows that
\begin{align}
     & \frac{1}{n!}\prod_{j=1}^n \int_{{\cal C}_h} \rd x_j \: \overline w (x_j) \re^{g(x_j)}
		    {\cal D} \bigl( \{x_j\}_{j=1}^n, \{y_k\}_{k=1}^{n-1} \bigr)
		    \notag \\ & \mspace{72.mu}
       = \lim_{\Re y_n \rightarrow + \infty}
       \frac{1}{n!}\prod_{j=1}^n \int_{{\cal C}_h} \rd x_j \: \overline w (x_j) \re^{g(x_j)}
       \notag \\[-1ex] & \mspace{90.mu} \times
       \sum_{P \in \mathfrak{S}_n} \sign(P) \ph(x_1, y_{P1}) \dots \ph(x_n,y_{Pn}) \4
           \det_n \bigr(\ph(x_j,y_k) \bigl)
	   \notag \\ & \mspace{72.mu}
       = \lim_{\Re y_n \rightarrow + \infty}
         \4 \det_n \biggl( \int_{{\cal C}_h} \rd x \: \overline w(x) \re^{g(x)} \ph(x,y_j) \ph(x,y_k) \biggr)
	 \epp
\end{align}
Setting
\begin{subequations}
\begin{align}
     \widetilde V(y_j,y_k) & = 
         \int_{{\cal C}_h} \rd x \: \overline w(x) \re^{g(x)} \ph(x,y_j) \ph(x,y_k)
	 \epc \label{vtilde} \\
     E_h (y_j) & = \int_{{\cal C}_h} \rd x \: \overline w(x) \re^{g(x)} \ph(x,y_j) \epc \\
     \Om & = \int_{{\cal C}_h} \rd x \: \overline w(x) \re^{g(x)} \epc
\end{align}
\end{subequations}
taking into account that $\lim_{\Re y \rightarrow + \infty} \ph(x,y) = 2$ and
applying elementary manipulations to the determinant we obtain
\begin{multline} \label{prefreddet}
     \frac{1}{n!}\prod_{j=1}^n \int_{{\cal C}_h} \rd x_j \: \overline w (x_j) \re^{g(x_j)}
        {\cal D} \bigl( \{x_j\}_{j=1}^n, \{y_k\}_{k=1}^{n-1} \bigr) \\[-1ex]
	= \Om \, \det_{n-1} \Bigl( \widetilde V(y_j,y_k) - E_h(y_j) E_h(y_k)/\Om \Bigr) \epp
\end{multline}
Inserting this expression into the right hand side of (\ref{ffseriessimple})
we end up with a Fredholm series which can be interpreted as
\begin{equation} \label{freddetwithp}
     \bigl\<\s_1^- \s_{m+1}^+ (t)\bigr\>_T =
        (-1)^m {\cal F} (m) \Om \, \det_{{\cal C}_p} \bigl(\id + \widehat V - \widehat P\bigr) \epc
\end{equation}
where the integral operators $\widehat V$ and $\widehat P$ are defined
relative to the contour ${\cal C}_p$,
\begin{subequations}
\begin{align}
     \widehat V f (x) & = \int_{{\cal C}_p} \rd y \: w(y) \re^{- g(y)} \widetilde V(x,y) f(y) \epc \\[1ex]
     \widehat P f (x) & = \frac{E_h(x)}{\Om} \int_{{\cal C}_p} \rd y \: w(y) \re^{- g(y)} E_h (y) f(y) \epp
\end{align}
\end{subequations}

Introducing the resolvent $\widehat R$ of $\widehat V$ in the usual way, such that
\begin{equation}
     (\id - \widehat R) (\id + \widehat V) = (\id + \widehat V) (\id - \widehat R) = \id
\end{equation}
we can rewrite the Fredholm determinant in (\ref{freddetwithp}) and finally
obtain the following
\begin{theorem} \label{th:freddetrep}
The transverse correlation function of the XX chain admits the Fredholm
determinant representation
\begin{multline} \label{freddetrep}
     \<\s_1^- \s_{m+1}^+ (t)\>_T = (-1)^m {\cal F} (m) \\
        \times \biggl[ \Om - \int_{{\cal C}_p} \rd x \: w(x) \re^{- g(x)}
	       E_h (x) (\id - \widehat R) E_h (x) \biggr]
	       \det_{{\cal C}_p} \bigl(\id + \widehat V \bigr) \epp
\end{multline}
\end{theorem}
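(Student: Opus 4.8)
The plan is to start from the pre-Fredholm representation (\ref{freddetwithp}) and to exploit the fact that the operator $\widehat P$ is of rank one. Reading off its kernel from the definition, $\widehat P$ acts as $(\widehat P f)(x) = \frac{E_h(x)}{\Om}\int_{{\cal C}_p}\rd y\,w(y)\re^{-g(y)}E_h(y)f(y)$, so it factorises as $\widehat P = \frac{1}{\Om}|E_h\rangle\langle\nu|$, where $|E_h\rangle$ is the function $x\mapsto E_h(x)$ and $\langle\nu|$ is the linear functional $f\mapsto\int_{{\cal C}_p}\rd y\,w(y)\re^{-g(y)}E_h(y)f(y)$. The whole task then reduces to extracting the scalar correction produced by this rank-one piece from the Fredholm determinant $\det_{{\cal C}_p}(\id + \widehat V - \widehat P)$.

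First I would use the multiplicativity of the Fredholm determinant to split off the invertible factor $\id + \widehat V$,
\[
   \det_{{\cal C}_p}\bigl(\id + \widehat V - \widehat P\bigr)
   = \det_{{\cal C}_p}\bigl(\id + \widehat V\bigr)\,
     \det_{{\cal C}_p}\bigl(\id - (\id + \widehat V)^{-1}\widehat P\bigr).
\]
By the defining property of the resolvent stated just above the theorem, $(\id + \widehat V)^{-1} = \id - \widehat R$, so the second factor becomes $\det_{{\cal C}_p}\bigl(\id - (\id - \widehat R)\widehat P\bigr)$. Because $\widehat P$ has rank one, so does $(\id - \widehat R)\widehat P = \frac{1}{\Om}\bigl|(\id - \widehat R)E_h\bigr\rangle\langle\nu|$, the vector $|E_h\rangle$ being simply replaced by its image $(\id - \widehat R)E_h$ under the resolvent.

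Next I would invoke the elementary identity $\det(\id - |a\rangle\langle b|) = 1 - \langle b|a\rangle$, valid for any trace-class rank-one operator. Applying it here gives
\[
   \det_{{\cal C}_p}\bigl(\id - (\id - \widehat R)\widehat P\bigr)
   = 1 - \frac{1}{\Om}\int_{{\cal C}_p}\rd x\,w(x)\re^{-g(x)}
         E_h(x)\bigl[(\id - \widehat R)E_h\bigr](x).
\]
Multiplying through by the prefactor $\Om$ from (\ref{freddetwithp}) absorbs the $1/\Om$ and turns the leading $1$ into $\Om$, so that $\Om\,\det_{{\cal C}_p}(\id + \widehat V - \widehat P)$ equals precisely the bracketed factor of (\ref{freddetrep}) times $\det_{{\cal C}_p}(\id + \widehat V)$. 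Substituting back into (\ref{freddetwithp}) and carrying along the overall normalisation $(-1)^m{\cal F}(m)$ then yields the claimed representation.

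The only genuinely delicate point is the bookkeeping: one must identify correctly which factor of $E_h$ is acted upon by the resolvent and which is paired against the weight $w(x)\re^{-g(x)}$ along ${\cal C}_p$, so that the resulting scalar is exactly $\int_{{\cal C}_p}\rd x\,w(x)\re^{-g(x)}E_h(x)(\id - \widehat R)E_h(x)$ as written in (\ref{freddetrep}). Once the rank-one structure is laid out, everything else is a direct consequence of the standard determinant calculus, and no further analytic estimates are needed.
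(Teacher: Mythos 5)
Your proposal is correct and coincides with the route the paper itself takes: the passage from (\ref{freddetwithp}) to Theorem~\ref{th:freddetrep} is exactly the factorisation $\det(\id+\widehat V-\widehat P)=\det(\id+\widehat V)\det\bigl(\id-(\id-\widehat R)\widehat P\bigr)$ followed by the rank-one determinant identity, which the paper leaves implicit in the single sentence ``we can rewrite the Fredholm determinant in (\ref{freddetwithp})''. Your bookkeeping of which copy of $E_h$ is hit by the resolvent and which is paired with the weight $w(x)\re^{-g(x)}$ reproduces the bracketed factor in (\ref{freddetrep}) exactly, so nothing is missing.
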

\section{\boldmath $\widehat V$ as an integrable integral operator}
\label{sec:intop}
We observe that
\begin{equation}
     \ph(x,y_j) \ph(x,y_k) = - \frac{\re^{- y_j - y_k}}{\sh(y_j - y_k)}
                               \bigl[\re^{2 y_k} \ph(x,y_j) - \re^{2 y_j} \ph(x,y_k)\bigr] \epp
\end{equation}
Inserting this identity into (\ref{vtilde}) we obtain the following
expression for the kernel $V$ of the integral operator $\widehat V$,
\begin{multline}
     V(y_j,y_k) = \widetilde V(y_j,y_k) w(y_k) \re^{- g(y_k)} \\
        = \frac{\re^{- y_j - y_k}}{\sh(y_j - y_k)}
	  \bigl[ \re^{2 y_j} E_h(y_k) - \re^{2 y_k} E_h(y_j) \bigr] w(y_k) \re^{- g(y_k)} \epp
\end{multline}

Define
\begin{equation} \label{defeler}
     \Ev_L (x) = \begin{pmatrix} \re^{2x} \\ - E_h (x) \end{pmatrix} \epc \qd
     \Ev_R (x) = \begin{pmatrix} E_h (x) \\ \re^{2x} \end{pmatrix} w(x) \re^{-g(x)} \epp
\end{equation}
Then $V(x,y)$ can be written as
\begin{equation} \label{intkernel}
     V(x,y) = \frac{\re^{- x -y}}{\sh(x - y)} \Ev_L^t (x) \Ev_R (y) \epp
\end{equation}
The vectors $\Ev_L$ and $\Ev_R$ have the important property that
\begin{equation} \label{orthokernel}
     \Ev_L^t (x) \Ev_R (x) = 0 \epp
\end{equation}
Integral operators with a kernel satisfying (\ref{intkernel}), (\ref{orthokernel})
are rather special. They fall into the class of integrable integral operators
\cite{IIKS90,Deift99,Sakhnovich68}. Having in mind a possible extension of
our work to the more general XXZ chain, we have employed a parameterisation
in terms of rapidity variables. For this reason the integrable kernel
(\ref{intkernel}) contains hyperbolic rather than rational functions and
looks slightly different from what is commonly encountered in the literature.

As observed in \cite{IIKS90} an important property of an integrable
integral operator is that its resolvent kernel is as well of the form
(\ref{intkernel}). The resolvent kernel $R(x,y)$ is defined as the
solution of the linear integral equation
\begin{equation} \label{resolventkernel}
     R(x,y) = V(x,y) - \int_{{\cal C}_p} \rd z \: V(x,z) R(z,y) \epp
\end{equation}
Let
\begin{equation}
     f_y (x) = \2 (\re^{2x} - \re^{2y}) R(x,y) \epp
\end{equation}
Inserting (\ref{intkernel}) into (\ref{resolventkernel}) and multiplying by
$\re^{x + y} \sh(x - y) = \2 (\re^{2x} - \re^{2y})$ we see that
\begin{equation}
     \Ev_L^t (x) \Bigl( \Ev_R (x) - \int_{{\cal C}_p} \rd z \: \Ev_R (z) R(z,y) \Bigr)
        = \bigl(\id + \widehat V\bigr) f_y (x) \epp
\end{equation}
Upon acting with $\id - \widehat R$ on this equation we conclude that
\begin{equation} \label{resolventform}
     R(x,y) = \frac{\re^{- x -y}}{\sh(x - y)} \Fv_L^t (x) \Fv_R (y) \epc
\end{equation}
where
\begin{subequations}
\begin{align} \label{resolveflfr}
     & \Fv_L^t (x) = \Ev_L^t (x) - \int_{{\cal C}_p} \rd z \: R(x,z) \Ev_L^t (z) \epc \\
     & \Fv_R (y) = \Ev_R (y) - \int_{{\cal C}_p} \rd z \: \Ev_R (z) R(z,y) \epp
\end{align}
\end{subequations}
The latter pair of equations is equivalent to saying the $\Fv_L^t$ and
$\Fv_R$ satisfy the linear integral equations
\begin{subequations}
\label{intsf}
\begin{align} \label{lieflt}
     & \Fv_L^t (x) = \Ev_L^t (x) - \int_{{\cal C}_p} \rd z \: V(x,z) \Fv_L^t (z) \epc \\
     & \Fv_R (x) = \Ev_R (x) - \int_{{\cal C}_p} \rd z \: \Fv_R (z) V(z,x) \epp \label{liefr}
\end{align}
\end{subequations}
Note that we did not use (\ref{orthokernel}), when we derived (\ref{resolventform}),
(\ref{intsf}). Further note that we assumed that $\id + \widehat V$ is invertible.
This implies that $\Fv_L^t$ and $\Fv_R$ are uniquely determined and that $\det_{{\cal C}_p}
(\id + \widehat V)$ is non-zero.

\section{The matrix Riemann-Hilbert problem} \label{sec:mRHp}
Define a $2 \times 2$ matrix function
\begin{equation} \label{defchi}
     \chi (x) = I_2 + \int_{{\cal C}_p} \rd y \:
                \frac{\re^{- x - y}}{\sh(x - y)} \Fv_R (y) \Ev_L^t (y)
\end{equation}
on ${\mathbb C} \setminus \bigl({\cal C}_p \mod \p \i\bigr)$. This function
connects $\Ev_R$ and $\Fv_R$ algebraically,
\begin{equation} \label{erfralg}
     \chi_+ (x) \Ev_R (x) = \Fv_R (x)
\end{equation}
for all $x \in {\cal C}_p$, which follows from (\ref{liefr}).

Following \cite{IIKS90} we shall now establish, for the reader's
convenience, several properties of $\chi$, using the hyperbolic form
of the kernel.
\begin{proposition} \label{prop:mrhpprops}
The following properties of $\chi$ follow immediately from its definition
and from (\ref{erfralg}).
\begin{enumerate}
\item
$\chi$ is $\p \i$-periodic and holomorphic in ${\mathbb C} \setminus
\bigl({\cal C}_p \mod \p \i\bigr)$.
\item
Asymptotically for large argument in ${\cal S}$ the functions $\chi$
behaves as
\begin{equation} \label{asychi}
     \chi(x) \sim \begin{cases}
                     I_2 - 2 \int_{{\cal C}_p} \rd y \: \re^{- 2y} \Fv_R (y) \Ev_L^t (y)
		     + {\cal O} \bigl(\re^{2x}\bigr) &
		     \text{for $\Re x \rightarrow - \infty$,} \\[1ex]
                     I_2 + {\cal O} \bigl(\re^{-2x}\bigr) &
		     \text{for $\Re x \rightarrow + \infty$,}
                  \end{cases}
\end{equation}
where it is understood that the symbols ${\cal O} \bigl(\re^{\pm 2x}\bigr)$
refer to the behaviour of the individual matrix elements.
\item
The function $\chi$ admits continuous boundary values from in- and
outside ${\cal C}_p$. The corresponding two continuous boundary
functions $\chi_+$ and $\chi_-$ are connected by the multiplicative
jump condition
\begin{equation}
     \chi_- (x) = \chi_+ (x) G_\chi (x) \epc
\end{equation}
where
\begin{equation} \label{defgchi}
     G_\chi (x) = I_2 + 2 \p \i \re^{-2x} \Ev_R (x) \Ev_L^t (x) \epp
\end{equation}
\end{enumerate}
\end{proposition}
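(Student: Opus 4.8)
The plan is to read off all three properties directly from the representation (\ref{defchi}) together with the algebraic identity (\ref{erfralg}), in the spirit of \cite{IIKS90}; only the jump relation requires genuine care.

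First I would treat periodicity and holomorphy. The only $x$-dependence in the integrand of (\ref{defchi}) sits in the scalar kernel $\re^{-x-y}/\sh(x-y)$, since $\Fv_R(y)\Ev_L^t(y)$ depends on $y$ alone. Because $\sh(z+\p\i) = -\sh(z)$ and $\re^{-(x+\p\i)-y} = -\re^{-x-y}$, the two sign changes cancel and the kernel is $\p\i$-periodic in $x$; periodicity of $\chi$ is then immediate. For holomorphy, note that for $x$ in any compact subset of ${\mathbb C}\setminus({\cal C}_p \mod \p\i)$ the integrand is jointly continuous in $(x,y)$ and holomorphic in $x$, while ${\cal C}_p$ is compact and $\Fv_R$, $\Ev_L$ are continuous on it; differentiation under the integral sign (or Morera's theorem) then shows that $\chi$ is holomorphic off ${\cal C}_p \mod \p\i$.

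For the asymptotics I would rewrite the kernel as $\re^{-x-y}/\sh(x-y) = 2\re^{-2y}/(\re^{2(x-y)}-1)$. As $\Re x \to +\infty$ one has $\re^{2(x-y)} \to \infty$ uniformly for $y \in {\cal C}_p$, and a geometric expansion gives the kernel $= 2\re^{-2x}\bigl(1 + {\cal O}(\re^{-2x})\bigr)$; integrating against the bounded density $\Fv_R\Ev_L^t$ yields $\chi(x) = I_2 + {\cal O}(\re^{-2x})$. As $\Re x \to -\infty$ one instead has $\re^{2(x-y)} \to 0$ uniformly, so the kernel equals $-2\re^{-2y}\bigl(1 + {\cal O}(\re^{2x})\bigr)$ and $\chi(x) = I_2 - 2\int_{{\cal C}_p}\rd y\,\re^{-2y}\Fv_R(y)\Ev_L^t(y) + {\cal O}(\re^{2x})$, which is exactly (\ref{asychi}).

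The jump condition is where I expect the only real obstacle, namely keeping track of the orientation and of the signs in the Sokhotski-Plemelj formula. I would isolate the singular part of the kernel using $1/\sh(x-y) = 1/(x-y) + (\text{holomorphic in } x-y)$: as $x$ crosses ${\cal C}_p$ only the simple pole at $y=x$ contributes, the remaining poles lying on the shifted contours ${\cal C}_p + k\p\i$ with $k \neq 0$ and the holomorphic remainder producing a boundary-continuous integral. Writing the singular contribution as $-\re^{-x}\int_{{\cal C}_p}\re^{-y}\Fv_R(y)\Ev_L^t(y)/(y-x)\,\rd y$ and applying Sokhotski-Plemelj (with $+$ and $-$ the boundary values from the left and right of the positively oriented contour) gives the additive jump $\chi_+(x) - \chi_-(x) = -2\p\i\,\re^{-2x}\Fv_R(x)\Ev_L^t(x)$. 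Substituting $\Fv_R(x) = \chi_+(x)\Ev_R(x)$ from (\ref{erfralg}) factorizes this as $\chi_-(x) = \chi_+(x)\bigl(I_2 + 2\p\i\,\re^{-2x}\Ev_R(x)\Ev_L^t(x)\bigr) = \chi_+(x)G_\chi(x)$, reproducing (\ref{defgchi}). As a consistency check on the signs I would verify $\det G_\chi(x) = 1 + 2\p\i\,\re^{-2x}\Ev_L^t(x)\Ev_R(x) = 1$, where the last equality is the orthogonality (\ref{orthokernel}); this also confirms that $G_\chi$ is unimodular, as required for a well-posed matrix Riemann-Hilbert problem.
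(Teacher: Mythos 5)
Your proof is correct and fills in exactly the computations the paper leaves implicit when it asserts that these properties ``follow immediately'' from (\ref{defchi}) and (\ref{erfralg}): the sign cancellation for $\p\i$-periodicity, the geometric expansion of $\re^{-x-y}/\sh(x-y)$ for the two asymptotic regimes, and the Plemelj jump $\chi_+-\chi_-=-2\p\i\,\re^{-2x}\Fv_R\Ev_L^t$ combined with $\Fv_R=\chi_+\Ev_R$ to factor the jump multiplicatively. The signs, the orientation convention ($+$ from the left of the positively oriented contour), and the restriction to the $k=0$ pole of $1/\sh$ (valid since ${\cal C}_p$ sits in a strip of width less than $\p$) are all handled correctly, so there is nothing to add.
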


Another immediate consequence of the definition is the following
\begin{proposition}
The function $\chi (x)$ defined in (\ref{defchi}) is invertible in
${\cal S} \setminus {\cal C}_p$ with inverse
\begin{equation} \label{chiinv}
     \chi^{-1} (x) = I_2 - \int_{{\cal C}_p} \rd y \:
                \frac{\re^{- x - y}}{\sh(x - y)} \Ev_R (y) \Fv_L^t (y) \epp
\end{equation}
\end{proposition}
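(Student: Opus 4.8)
The plan is to verify directly that the matrix on the right-hand side of (\ref{chiinv}) is a right inverse of $\chi(x)$. For each fixed $x\in{\cal S}\setminus{\cal C}_p$ the objects $\chi(x)$ and its candidate inverse are ordinary $2\times 2$ complex matrices, so a right inverse is automatically two-sided and its mere existence forces $\det\chi(x)\neq 0$; thus it suffices to compute one product. Denote by $\psi(x)$ the right-hand side of (\ref{chiinv}) and abbreviate the scalar kernel factor by $k(x,y)=\re^{-x-y}/\sh(x-y)$, which is antisymmetric, $k(x,y)=-k(y,x)$. Expanding $\chi(x)\psi(x)$ and using that $\Ev_L^t(y)\Ev_R(z)$ is a scalar, one finds $\chi(x)\psi(x)=I_2+A-B-D$ with
\[
   A=\int_{{\cal C}_p}\rd y\:k(x,y)\,\Fv_R(y)\Ev_L^t(y),\qquad
   B=\int_{{\cal C}_p}\rd z\:k(x,z)\,\Ev_R(z)\Fv_L^t(z),
\]
\[
   D=\int_{{\cal C}_p}\rd y\int_{{\cal C}_p}\rd z\:k(x,y)k(x,z)\,
     \bigl(\Ev_L^t(y)\Ev_R(z)\bigr)\,\Fv_R(y)\Fv_L^t(z)\epp
\]
The whole claim amounts to the identity $D=A-B$.

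First I would eliminate the bare vectors in $A$ and $B$ in favour of the dressed ones by inserting the integral equations (\ref{intsf}), rewritten as $\Ev_L^t(y)=\Fv_L^t(y)+\int_{{\cal C}_p}\rd z\:V(y,z)\Fv_L^t(z)$ and $\Ev_R(z)=\Fv_R(z)+\int_{{\cal C}_p}\rd w\:\Fv_R(w)V(w,z)$. Since $V(y,z)=k(y,z)\Ev_L^t(y)\Ev_R(z)$ is scalar, the undressed parts produce in both $A$ and $B$ the same single-integral term $\int_{{\cal C}_p}\rd y\:k(x,y)\,\Fv_R(y)\Fv_L^t(y)$, which therefore cancels in $A-B$. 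After renaming the free integration variable in the $B$-contribution, what survives is
\[
   A-B=\int_{{\cal C}_p}\rd y\int_{{\cal C}_p}\rd z\:
   k(y,z)\bigl(k(x,y)-k(x,z)\bigr)\,
   \bigl(\Ev_L^t(y)\Ev_R(z)\bigr)\,\Fv_R(y)\Fv_L^t(z)\epp
\]

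The proof then reduces to the pointwise kernel identity $k(x,y)k(x,z)=k(y,z)\bigl(k(x,y)-k(x,z)\bigr)$, which after clearing denominators is the elementary hyperbolic relation
\[
   \re^{-y}\sh(x-z)-\re^{-z}\sh(x-y)=\re^{-x}\sh(y-z)\epc
\]
checked at once by expressing $\sh$ through exponentials. This gives $D=A-B$ and hence $\chi(x)\psi(x)=I_2$, as asserted. The only real work is the careful bookkeeping of the matrix/scalar ordering in $D$ and the correct rearrangement of the two integral equations; notably, as in the derivation of (\ref{resolventform}) and (\ref{intsf}), the orthogonality relation (\ref{orthokernel}) is not used. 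I expect this algebraic identity, rather than any analytic issue, to be the crux. As an alternative one could argue through the matrix Riemann-Hilbert problem of Proposition~\ref{prop:mrhpprops}: one verifies that $\psi$ satisfies the jump relation $\psi_-=G_\chi^{-1}\psi_+$ across ${\cal C}_p$ with the same normalisation at infinity, so that $\chi\psi$ has no jump, is $\p\i$-periodic, and tends to $I_2$ as $\Re x\to+\infty$, whence $\chi\psi\equiv I_2$ by a Liouville argument on the cylinder; but the direct computation above seems the most economical.
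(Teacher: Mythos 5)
Your proof is correct and is precisely the argument the paper intends: the paper states only that ``the proof is by direct calculation, using the linear integral equations (\ref{intsf})'', and your computation -- reducing $\chi\psi=I_2$ to the kernel identity $k(x,y)k(x,z)=k(y,z)\bigl(k(x,y)-k(x,z)\bigr)$ after dressing $\Ev_L^t$ and $\Ev_R$ via (\ref{intsf}) -- is exactly that calculation, carried out correctly. Nothing further is needed.
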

The proof is by direct calculation, using the linear integral equations
(\ref{intsf}). It follows from (\ref{chiinv}) that
\begin{equation} \label{elflalg}
     \Ev_L^t (x) \chi_+^{-1} (x) = \Fv_L^t (x) \epp
\end{equation}
Combining (\ref{erfralg}) and (\ref{elflalg}) with (\ref{orthokernel})
we conclude that
\begin{equation} \label{orthorkernel}
     \Fv_L^t (x) \Fv_R (x) = 0 \epp
\end{equation}
Taking into account (\ref{resolventform}) this means that the resolvent of
an integrable integral operator satisfying (\ref{orthokernel}) is an
integrable integral operator of the same type.

In proposition~\ref{prop:mrhpprops} we have shown that the matrix
function $\chi$ defined in (\ref{defchi}) has certain properties.
One may reverse the problem and ask if a matrix function with
these properties exists and is unique for a given contour and
a given jump matrix $G_\chi$.
\begin{definition}
We shall say that a holomorphic function $\Ps: {\cal S} \setminus
{\cal C}_p \mapsto {\rm Mat}_{2 \times 2} \bigl( {\mathbb C} \bigr)$ 
solves the matrix Riemann-Hilbert problem $({\cal S}, {\cal C}_p,
+ \infty, G_\Ps)$ with jump matrix $G_\Ps: {\cal C}_p \mapsto
{\rm Mat}_{2 \times 2} \bigl( {\mathbb C} \bigr)$ on a cylinder
$\cal S$ if
\begin{enumerate}
\item
\begin{equation}
     \lim_{\Re x \rightarrow \pm \infty} \Ps (x)
        = \Ps^{(\pm)} \bigl(I_2 + {\cal O} \bigl( \re^{\mp 2x} \bigr)\bigr)
\end{equation}
where $\Ps^{(+)} = I_2$ and $\Ps^{(-)}$ is a constant matrix,
\item
$\Ps$ has continuous boundary values on ${\cal C}_p$ from inside
and outside and satisfies the multiplicative jump condition
\begin{equation}
      \Ps_- (x) = \Ps_+ (x) G_\Ps (x)
\end{equation}
on ${\cal C}_p$.
\end{enumerate}
\end{definition}
Thus, $\chi$ defined in (\ref{defchi}) solves the matrix
Riemann-Hilbert problem $({\cal S}, {\cal C}_p, + \infty, G_\chi)$ with
jump matrix $G_\chi$ defined in (\ref{defgchi}). In the following we
prove that the solution is unique. This fact will turn the matrix
Riemann-Hilbert problem into a useful tool, since, if we are able to
find a solution by any means, this solution then determines the
resolvent of $\widehat V$ through (\ref{resolventform}), (\ref{erfralg}),
(\ref{elflalg}).

\begin{lemma} \label{lem:invchiex}
A solution $\Ps$ of the matrix Riemann-Hilbert problem $({\cal S},
{\cal C}_p, + \infty, G_\Ps)$ is invertible if $G_\Ps$ is unimodular
($\det G_\Ps = 1$).
\end{lemma}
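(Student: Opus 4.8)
The plan is to reduce the matrix statement to a scalar one by studying $d = \det \Ps$, and to show via a Liouville-type argument on the cylinder $\cal S$ that $d \equiv 1$. Pointwise invertibility of $\Ps$ then follows at once, since $\det \Ps(x) = 1 \neq 0$ for every $x$. The whole argument hinges on turning the multiplicative jump of $\Ps$ into the \emph{trivial} jump of $d$, which is exactly where the unimodularity hypothesis is consumed.

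First I would record the elementary structural facts: by definition $d$ is holomorphic and $\p \i$-periodic on ${\cal S} \setminus {\cal C}_p$, with continuous boundary values $d_\pm$ on ${\cal C}_p$ inherited from those of $\Ps$. Taking determinants of the jump condition $\Ps_-(x) = \Ps_+(x) G_\Ps(x)$ and invoking $\det G_\Ps = 1$ gives
\begin{equation}
     d_-(x) = d_+(x) \det G_\Ps(x) = d_+(x)
\end{equation}
for all $x \in {\cal C}_p$. Hence $d$ has matching, continuous boundary values from both sides of ${\cal C}_p$, and by Morera's theorem it extends to a single holomorphic function on the entire cylinder $\cal S$.

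Next I would feed in the prescribed asymptotics. From the normalisation of the Riemann-Hilbert problem, $\Ps(x) = \Ps^{(\pm)}\bigl(I_2 + {\cal O}(\re^{\mp 2x})\bigr)$ as $\Re x \to \pm \infty$, so $d(x) \to \det \Ps^{(\pm)}$ at the two ends, with $\det \Ps^{(+)} = \det I_2 = 1$. The change of variable $w = \re^{2x}$, which is well defined precisely because of the $\p \i$-periodicity, identifies $\cal S$ with ${\mathbb C} \setminus \{0\}$, sending $\Re x \to - \infty$ to $w = 0$ and $\Re x \to + \infty$ to $w = \infty$. Under this identification $d$ becomes holomorphic on ${\mathbb C} \setminus \{0\}$ with finite limits at $0$ and at $\infty$; Riemann's removable-singularity theorem then extends it to the whole Riemann sphere, forcing $d$ to be constant. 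The constant is pinned down by the value at $+\infty$, so $d \equiv \det \Ps^{(+)} = 1$. Since $\det \Ps(x) = 1 \neq 0$ everywhere on ${\cal S} \setminus {\cal C}_p$, the matrix $\Ps(x)$ is invertible there, and $\Ps^{-1}(x)$ is again holomorphic with continuous boundary values.

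The one step that genuinely carries the argument is this Liouville/removability passage at the two ends of the cylinder: it is here that the specific geometry — $\p \i$-periodicity together with the normalisation $\Ps^{(+)} = I_2$ — is essential, and here that the cancellation of the jump (via $\det G_\Ps = 1$) pays off, for without it $d$ would carry a nontrivial jump across ${\cal C}_p$ and the extension to a single entire function on the sphere would fail. Everything else is routine: the determinant identity is immediate, and the analytic continuation across ${\cal C}_p$ is the standard Morera consequence for functions holomorphic off a smooth closed contour with matching continuous boundary values.
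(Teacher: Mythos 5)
Your proof is correct, and it reaches the paper's intermediate conclusion $d \equiv 1$ (which is in fact slightly stronger than mere invertibility and is used again in the uniqueness theorem). The setup is identical to the paper's: both arguments pass to $d = \det\Ps$, use unimodularity to kill the jump, and note the limits at the two ends of the cylinder. Where you differ is in how the Liouville step is executed. The paper evaluates the periodic Cauchy integral $I(x) = \int_{\cal C} \frac{\rd y}{2\pi\i}\,\frac{\re^{y-x}}{\sh(y-x)}\, d(y)$ in two ways --- pushing the contour to $\Re y = \pm\infty$ to get $I(x)=1$, and collapsing it onto ${\cal C}_p$ (using $d_-=d_+$ and Cauchy's theorem) to get $I(x)=d(x)$ --- whereas you first remove the jump by Morera, then uniformize the cylinder via $w = \re^{2x}$, and invoke Riemann's removable-singularity theorem at $w=0,\infty$ together with Liouville on the sphere. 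Your route is more standard and arguably cleaner for this particular lemma; the paper's explicit kernel computation has the advantage that it recurs verbatim in the proof of Theorem~\ref{th:mrhpunique} for the matrix-valued function $W = \Ps_1\Ps_2^{-1}$, where the same two-way contour evaluation is reused without change. (Your argument would adapt there too, applied entrywise to $W$.) One minor point of care, which you handle implicitly: removability at $w=0$ and $w=\infty$ needs boundedness of $d$ near the punctures, which is exactly what the prescribed asymptotics $d(x) = \det\Ps^{(\pm)} + {\cal O}(\re^{\mp 2x})$ supply.
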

\begin{proof}
Let $d(x) = \det \Ps (x)$. Then $d(x + \p \i) = d(x)$ and $d$ is
holomorphic in ${\cal S} \setminus {\cal C}_p$. The function $d$
behaves asymptotically as
\begin{equation} \label{dasy}
     d(x) \sim \begin{cases}
		  \text{const.} + {\cal O} \bigl(\re^{2x}\bigr) &
		  \text{for $\Re x \rightarrow - \infty$,} \\[1ex]
		  1 + {\cal O} \bigl(\re^{-2x}\bigr) &
		  \text{for $\Re x \rightarrow + \infty$.}
               \end{cases}
\end{equation}
Moreover, $d$ satisfies the jump condition
\begin{equation} \label{djumpnojump}
     d_- (x) = d_+ (x) \epc
\end{equation}
which follows from $\det G_\Ps (x) = 1$ for all $x \in {\cal C}_p$.

Choose a simple closed contour ${\cal C}$ such that ${\cal C}_p 
\subset \Int {\cal C} \subset {\cal S}$ and define
\begin{equation}
     I(x) = \int_{\cal C} \frac{\rd y}{2\p\i} \: \ph(x,y) d(y) \epc
\end{equation}
where $x \in \Int {\cal C} \setminus {\cal C}_p$. We calculate this integral
in two different ways, first by making the contour ${\cal C}$ larger and
using the $\p \i$-periodicity and the asymptotic behaviour (\ref{dasy}) of $d$,
and second by shrinking the contour and using the holomorphicity of $d$ and
the `jump condition' (\ref{djumpnojump}). Then, on the one hand, there is
$R_0 > 0$ such that for all $R > R_0$
\begin{equation}
     I(x) = \biggl( \int_{- R + 3\p\i/4}^{- R - \p\i/4} + \int_{R - \p\i/4}^{R + 3\p\i/4} \biggr)
            \frac{\rd y}{2\p\i} \ph(x,y) d(y) \epp
\end{equation}
Sending $R$ to $+\infty$ and using (\ref{dasy}) we conclude that $I(x) = 1$
for all $x \in {\cal S} \setminus {\cal C}_p$.

On the other hand, if e.g.\ $x$ outside ${\cal C}_p$, then
\begin{multline}
     I(x) = d(x) + \int_{{\cal C}_p} \frac{\rd y}{2\p\i} \: \ph(x,y) d_-(y)
          = d(x) + \int_{{\cal C}_p} \frac{\rd y}{2\p\i} \: \ph(x,y) d_+(y) \\
          = d(x) + \int_{{\cal C}'} \frac{\rd y}{2\p\i} \: \ph(x,y) d (y) = d(x) \epp
\end{multline}
where ${\cal C}' \subset \Int {\cal C}_p$ is a simple closed contour. Here
we have used the residue theorem in the first equation, (\ref{djumpnojump})
in the second equation and Cauchy's theorem in the third and fourth equation.
If $x$ is inside ${\cal C}_p$, the same arguments can be employed in a
different order. Altogether we have shown that $d(x) = 1$ for all $x \in
{\cal S} \setminus {\cal C}_p$, which proves the lemma.
\end{proof}

Based on lemma~\ref{lem:invchiex} we can now prove the following uniqueness
theorem.
\begin{theorem} \label{th:mrhpunique}
If $G_\Ps$ is unimodular, the matrix Riemann-Hilbert problem
$({\cal S}, {\cal C}_p, + \infty, G_\Ps)$ admits at most a single solution.
\end{theorem}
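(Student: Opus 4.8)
The plan is to show that any two solutions must coincide by forming their matrix quotient and proving it equals the constant $I_2$. Suppose $\Ps$ and $\widetilde{\Ps}$ both solve $({\cal S}, {\cal C}_p, + \infty, G_\Ps)$ with the same unimodular jump matrix. First I would apply Lemma~\ref{lem:invchiex}: since $\det G_\Ps = 1$, both solutions are invertible on ${\cal S} \setminus {\cal C}_p$, so the matrix product $M(x) = \Ps(x) \widetilde{\Ps}^{-1}(x)$ is well defined there and holomorphic away from ${\cal C}_p$.

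The key step is to verify that $M$ carries no jump across ${\cal C}_p$. Using the two jump conditions $\Ps_- = \Ps_+ G_\Ps$ and $\widetilde{\Ps}_- = \widetilde{\Ps}_+ G_\Ps$, one computes
\begin{equation*}
     M_- = \Ps_- \widetilde{\Ps}_-^{-1}
         = \Ps_+ G_\Ps G_\Ps^{-1} \widetilde{\Ps}_+^{-1}
         = \Ps_+ \widetilde{\Ps}_+^{-1} = M_+
\end{equation*}
on ${\cal C}_p$. Because $M$ has continuous, matching boundary values from both sides of the contour and is holomorphic off it, Morera's theorem lets me conclude that $M$ extends holomorphically across ${\cal C}_p$; hence $M$ is holomorphic and $\p\i$-periodic on all of ${\cal S}$.

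I would then control $M$ at the two ends of the cylinder using the prescribed asymptotics. As $\Re x \rightarrow + \infty$ both $\Ps$ and $\widetilde{\Ps}$ tend to $I_2$ up to $\CO(\re^{-2x})$, so $M(x) \rightarrow I_2$; as $\Re x \rightarrow - \infty$ they tend to the constant matrices $\Ps^{(-)}$ and $\widetilde{\Ps}^{(-)}$, so $M(x)$ tends to the constant $\Ps^{(-)} (\widetilde{\Ps}^{(-)})^{-1}$. Thus $M$ is bounded on ${\cal S}$ with finite limits at both ends. Finally I would invoke a Liouville argument tailored to the cylinder: under $x \mapsto \re^{2x}$ the strip ${\cal S} \cong {\mathbb C}/\p\i{\mathbb Z}$ maps biholomorphically onto ${\mathbb C}^*$, the ends $\Re x \rightarrow \mp \infty$ corresponding to $0$ and $\infty$. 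A bounded holomorphic function on ${\mathbb C}^*$ with limits at both punctures has removable singularities there and so extends to a bounded entire function, which is constant. Reading the value off the $+ \infty$ end gives $M \equiv I_2$, i.e.\ $\Ps \equiv \widetilde{\Ps}$, as claimed.

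The main obstacle I anticipate is not the Liouville step itself but the careful justification that $M$ is genuinely holomorphic across ${\cal C}_p$ --- this rests on the continuity of the boundary values built into the definition of the Riemann-Hilbert problem --- together with making the cylinder version of Liouville's theorem precise through the uniformisation by $\re^{2x}$. Both are standard once the invertibility supplied by Lemma~\ref{lem:invchiex} is in hand, so I expect the argument to close without further input.
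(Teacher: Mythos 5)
Your argument is correct and follows the same overall strategy as the paper: form the quotient $W(x)=\Ps_1(x)\Ps_2^{-1}(x)$ of two putative solutions, use Lemma~\ref{lem:invchiex} (unimodularity of $G_\Ps$) to make the quotient well defined and to control $\Ps_2^{-1}$, cancel the jump across ${\cal C}_p$, and conclude by a Liouville-type argument that $W\equiv I_2$. The only place where you genuinely deviate is the final step. The paper never invokes Morera or uniformisation; instead it evaluates the periodic Cauchy-type integral $J(x)=\int_{\cal C}\frac{\rd y}{2\p\i}\,\ph(x,y)W(y)$, with $\ph(x,y)=\re^{y-x}/\sh(y-x)$, in two ways --- once by pushing the contour out to $\Re y=\pm R$ and using the $\p\i$-periodicity together with the asymptotics (the left end contributes nothing because $\ph\rightarrow 0$ there, the right end contributes $I_2$), and once by shrinking the contour through the jump-free ${\cal C}_p$ to get $W(x)$. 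Your route --- continuity of the matching boundary values, holomorphic continuation across ${\cal C}_p$, the biholomorphism $x\mapsto\re^{2x}$ from the cylinder onto ${\mathbb C}^*$, removability of the punctures at $0$ and $\infty$, and Liouville on ${\mathbb C}P^1$ --- is an equally standard and fully rigorous substitute; it makes the underlying geometry (``Liouville on a cylinder'') more transparent, whereas the paper's kernel computation has the advantage of reusing verbatim the machinery already set up in the proof of Lemma~\ref{lem:invchiex} and of not requiring the continuation-across-an-arc lemma as a separate ingredient. Both proofs rest on the same two facts you correctly isolate: invertibility from the lemma, and the boundedness with limits $I_2$ and $\Ps^{(-)}(\widetilde\Ps^{(-)})^{-1}$ at the two ends.
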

\begin{proof}
Consider two solutions $\Ps_1$, $\Ps_2$ of the matrix Riemann-Hilbert problem.
Then
\begin{equation}
     \Ps_2 (x) = \begin{pmatrix}
                     \Ps_2^{11} (x) & \Ps_2^{12} (x) \\
                     \Ps_2^{21} (x) & \Ps_2^{22} (x)
		  \end{pmatrix}
\end{equation}
is invertible for every $x \in {\cal S} \setminus {\cal C}_p$ by
lemma~\ref{lem:invchiex}. And
\begin{equation} \label{chi2inv}
     \Ps_2^{-1} (x) = \begin{pmatrix}
                          \Ps_2^{22} (x) & - \Ps_2^{12} (x) \\
			  - \Ps_2^{21} (x) & \Ps_2^{11} (x)
		       \end{pmatrix} \epc
\end{equation}
since $\det \Ps_2 (x) = 1$ according to the same lemma. Equation
(\ref{chi2inv}) implies that $\Ps_2^{-1}$ inherits the characteristic
properties of $\Ps_2$. $\Ps_2^{-1}$ is $\p\i$-periodic, holomorphic
in ${\cal S} \setminus {\cal C}_p$ and behaves asymptotically for
large spectral parameter inside ${\cal S}$ as
\begin{equation} \label{asychi2inv}
     \Ps_2^{-1} (x) \sim
                     \begin{cases}
		        \bigl(\Ps_2^{(-)}\bigr)^{-1} + {\cal O} \bigl(\re^{2x}\bigr) &
			\text{for $\Re x \rightarrow - \infty$,} \\[1ex]
			I_2 + {\cal O} \bigl(\re^{-2x}\bigr) &
			\text{for $\Re x \rightarrow + \infty$.}
                     \end{cases}
\end{equation}
Since $G_\Ps (x)$ is invertible for all $x \in {\cal C}_p$
we may invert the jump condition for $\Ps_2$ to obtain
\begin{equation} \label{inversejump}
     \Ps_{2-}^{-1} (x) = G_\Ps^{-1} (x) \Ps_{2+}^{-1} (x)
\end{equation}
for all $x \in {\cal C}_p$.

Now let
\begin{equation}
     W(x) = \Ps_1 (x) \Ps_2^{-1} (x) \epp
\end{equation}
Clearly $W$ is $\p\i$-periodic and holomorphic in ${\cal S} \setminus {\cal C}_p$.
Its asymptotic behaviour can be inferred from (\ref{asychi2inv}) and is again
of the same form,
\begin{equation} \label{asyw}
     W (x) \sim \begin{cases}
                   W^{(-)} + {\cal O} \bigl(\re^{2x}\bigr) &
		   \text{for $\Re x \rightarrow - \infty$,} \\[1ex]
		   I_2 + {\cal O} \bigl(\re^{-2x}\bigr) &
		   \text{for $\Re x \rightarrow + \infty$,}
                \end{cases}
\end{equation}
where $W^{(-)}$ is a constant matrix. Moreover, (\ref{inversejump}) combined with
the jump condition for $\Ps_1$ implies that
\begin{equation}
    W_- (x) = W_+ (x) \epc
\end{equation}
and we can proceed as in the proof of the previous lemma.
We fix a simple closed contour ${\cal C}$ such that ${\cal C}_p 
\subset \Int {\cal C} \subset {\cal S}$ and define
\begin{equation}
     J(x) = \int_{\cal C} \frac{\rd y}{2\p\i} \: \ph(x,y) W(y) \epc
\end{equation}
where $x \in \Int {\cal C} \setminus {\cal C}_p$. Calculating this integral
again in two different ways, exactly as in the proof of lemma~\ref{lem:invchiex},
we reach the conclusion that $W(x) = I_2$ for all $x \in \Int {\cal C} \setminus
{\cal C}_p$ which completes our proof.
\end{proof}
It follows that the matrix Riemann-Hilbert problem $({\cal S}, {\cal C}_p,
+ \infty, G_\chi)$ with $G_\chi$ defined in (\ref{defgchi}) has a
unique solution, since
\begin{equation}
     \det G_\chi (x) = 1 + 2 \p \i \re^{-2x} \Ev_L^t (x) \Ev_R (x) = 1
\end{equation}
for all $x \in {\cal C}_p$. Here we have used (\ref{orthokernel}) in the
last equation.

\section{Expressing the correlation function in terms of the matrix
Riemann-Hilbert problem} \label{sec:corrmRHp}
Our goal is to perform a direct asymptotic analysis of the matrix
Riemann-Hilbert problem and to use the result in order to determine
the asymptotics of the transversal correlation functions of the XX
chain by means of theorem~\ref{th:freddetrep}. For this purpose we
follow \cite{KKMST09c,Kozlowski11b} and express the resolvent dependent
factor and the Fredholm determinant factor on the right hand side of
(\ref{freddetrep}) in terms of the solution $\chi$ of our matrix
Riemann-Hilbert problem.
\begin{proposition} \label{prop:resolvepart}
The resolvent dependent factor in the Fredholm determinant representation
(\ref{freddetrep}) of the transversal correlation function of the XX chain
is connected to the solution of the matrix Riemann-Hilbert problem defined
in proposition~\ref{prop:mrhpprops} by the identity
\begin{equation} \label{respartmrhp}
     \int_{{\cal C}_p} \rd x \: w(x) \re^{- g(x)} E_h (x) (\id - \widehat R) E_h (x)
        = - \2 \lim_{\Re x \rightarrow + \infty} \re^{2x} \chi^{12} (x) \epp
\end{equation}
\end{proposition}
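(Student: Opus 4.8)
The plan is to read off $\lim_{\Re x \rightarrow + \infty} \re^{2x}\chi^{12}(x)$ directly from the integral representation (\ref{defchi}) and then to recognise the resulting expression as the left hand side of (\ref{respartmrhp}) after an elementary rearrangement. The only analytic input needed is a refinement of the large-$\Re x$ asymptotics already recorded in (\ref{asychi}), where for $\Re x \rightarrow + \infty$ only the order $I_2 + {\cal O}(\re^{-2x})$ was stated; I need the explicit coefficient of $\re^{-2x}$.

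First I would expand the Cauchy kernel in (\ref{defchi}) as
\[
     \frac{\re^{- x - y}}{\sh(x - y)}
        = 2 \re^{-2x} \bigl(1 + {\cal O}(\re^{-2x})\bigr) \epc
\]
uniformly for $y$ on the compact contour ${\cal C}_p$. Inserting this into (\ref{defchi}) and using the uniform bound to interchange limit and integration, I obtain
\[
     \lim_{\Re x \rightarrow + \infty} \re^{2x} \bigl(\chi(x) - I_2\bigr)
        = 2 \int_{{\cal C}_p} \rd y \: \Fv_R (y) \Ev_L^t (y) \epp
\]
Taking the $(1,2)$ entry, and reading off from (\ref{defeler}) that the second component of $\Ev_L$ is $- E_h$, only that component contributes to the off-diagonal, so that
\[
     - \2 \lim_{\Re x \rightarrow + \infty} \re^{2x} \chi^{12}(x)
        = \int_{{\cal C}_p} \rd y \: E_h (y) \bigl( \Fv_R (y) \bigr)_1 \epc
\]
where $\bigl(\Fv_R\bigr)_1$ denotes the first component of $\Fv_R$.

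Next I would substitute the first component of $\Ev_R$ from (\ref{defeler}), namely $\bigl(\Ev_R (y)\bigr)_1 = E_h (y) w(y) \re^{-g(y)}$, into the defining relation (\ref{resolveflfr}) for $\Fv_R$, giving
\[
     \bigl( \Fv_R (y) \bigr)_1
        = E_h (y) w(y) \re^{-g(y)}
          - \int_{{\cal C}_p} \rd z \: E_h (z) w(z) \re^{-g(z)} R(z,y) \epp
\]
The right hand side of the previous display then splits into a single and a double integral over ${\cal C}_p$. Comparing with the left hand side of (\ref{respartmrhp}), written out with the resolvent kernel $R(x,z)$ as
\[
     \int_{{\cal C}_p} \rd x \: w(x) \re^{-g(x)} E_h (x)^2
        - \int_{{\cal C}_p} \rd x \int_{{\cal C}_p} \rd z \:
          w(x) \re^{-g(x)} E_h (x) R(x,z) E_h (z) \epc
\]
the single integrals coincide verbatim, while the two double integrals are carried into one another by Fubini's theorem together with the relabelling $x \leftrightarrow z$ of the two integration variables. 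This establishes (\ref{respartmrhp}).

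I expect the only point requiring genuine care to be the bookkeeping in this last step: the resolvent enters $\Fv_R$ through the kernel $R(z,y)$ acting ``from the left'', whereas in (\ref{respartmrhp}) it acts ``from the right'' as $R(x,z)$, and it is precisely the symmetric role of the two integration variables, made explicit by the variable swap, that reconciles the two expressions. Note that, in line with the remark following (\ref{intsf}), the orthogonality relation (\ref{orthokernel}) is nowhere used in the argument; everything rests on the representation (\ref{defchi}) of $\chi$ and on the linear equation (\ref{resolveflfr}) for $\Fv_R$.
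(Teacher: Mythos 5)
Your argument is correct, and it reaches the identity by a route that differs in its details from the paper's. The paper first rewrites the left-hand side of (\ref{respartmrhp}) as $-\int_{{\cal C}_p}\rd x\:\Ev_R^{1}(x)\Fv_L^{t\,2}(x)$, i.e.\ in terms of the \emph{left} dressed vector $\Fv_L$, and then extracts this quantity from the large-$\Re x$ behaviour of $\chi^{-1}$ via the representation (\ref{chiinv}), converting back to $\chi^{12}$ through the unimodularity relation $\chi^{12}=-\bigl(\chi^{-1}\bigr)^{12}$. You instead expand the definition (\ref{defchi}) of $\chi$ itself, built from $\Fv_R\Ev_L^t$, read off the coefficient of $\re^{-2x}$ in the $(1,2)$ entry, and then unwind $\Fv_R$ through its defining relation (\ref{resolveflfr}); the price is the final Fubini/relabelling step needed to reconcile $R(z,y)$ acting from the left with $R(x,z)$ acting from the right, which you identify and handle correctly. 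The two arguments are of comparable length, and your justification of the interchange of limit and integration (uniform convergence of $\re^{2x}\,\re^{-x-y}/\sh(x-y)\to 2$ for $y$ on the compact contour) is the same step the paper leaves implicit when ``comparing for large positive real part of the argument''. Your version has the minor structural advantage of not invoking $\det\chi=1$, which in the paper rests on lemma~\ref{lem:invchiex} and, through the unimodularity of $G_\chi$ in (\ref{defgchi}), ultimately on the orthogonality (\ref{orthokernel}); as you observe, your derivation uses only (\ref{defchi}) and (\ref{resolveflfr}) and would therefore apply verbatim to any kernel of the form (\ref{intkernel}) without the constraint (\ref{orthokernel}).
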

\begin{proof}
It follows from (\ref{defeler}) and (\ref{resolveflfr}) that
\begin{equation} \label{respartef}
     \int_{{\cal C}_p} \rd x \: w(x) \re^{- g(x)} E_h (x) (\id - \widehat R) E_h (x)
        = - \int_{{\cal C}_p} \rd x \: \Ev_R^{1} (x) \Fv_L^{t \, 2} (x) \epp
\end{equation}
On the other hand,
\begin{equation} \label{chi12ef}
     \chi^{12} (x) = - \bigl(\chi^{-1}\bigr)^{12} =
        \re^{-2x} \int_{{\cal C}_p} \rd y \: \ph(y,x) \Ev_R^{1} (y) \Fv_L^{t \, 2} (y)
\end{equation}
which follows from the fact that $\det \chi(x) = 1$ and from equation
(\ref{chiinv}). Comparing (\ref{respartef}) and (\ref{chi12ef}) for large
positive real part of the argument we obtain (\ref{respartmrhp}).
\end{proof}

As for the Fredholm determinant on the right hand side of (\ref{freddetrep})
we shall derive a formula for the logarithmic time derivative of
$\det_{{\cal C}_p} \bigl(\id + \widehat V\bigr)$. For this purpose we
shall utilize the identity
\begin{equation} \label{dlogdetisdtrlog}
     \6_t \ln\Bigl(\det_{{\cal C}_p} \bigl(\id + \widehat V\bigr)\Bigr) =
        \tr \bigl\{ \dot{\widehat V} \bigl(\id - \widehat R\bigr)\bigr\} \epp
\end{equation}
Here and in the following the dot stands for the time derivative.

\begin{proposition} \label{prop:dtlnfred}
The Fredholm determinant in (\ref{freddetrep}) is related to the solution
of the matrix Riemann-Hilbert problem formulated in
proposition~\ref{prop:mrhpprops} by the identity
\begin{equation} \label{dtlnfredmrhp}
     \6_t \ln\Bigl(\det_{{\cal C}_p} \bigl(\id + \widehat V\bigr)\Bigr) =
        \int_{\g ({\cal C}_p)} \frac{\rd z}{2\p} \: \eps(z)
	\tr \bigl\{ \chi'(z) \begin{pmatrix} 0 & \re^{-2z} E_p (z) \\ 0 & 1 \end{pmatrix}
	            \chi(z) \bigr\} \epp
\end{equation}
Here $\g ({\cal C}_p)$ is a contour surrounding ${\cal C}_p$ closely enough,
and the function $E_p$ is defined by
\begin{equation} \label{defep}
      E_p (x) =  \int_{{\cal C}_p} \frac{\rd y}{\p\i} \:
                 \frac{\PH_{p-} (y) \re^{g(y)}}{1 - \re^{\eps(y)/T}} \ph(y,x)
\end{equation}
for all $x \in {\cal S} \setminus {\cal C}_p$.
\end{proposition}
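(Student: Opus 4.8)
My plan is to start from the trace identity (\ref{dlogdetisdtrlog}), feed in the integrable structure of $\widehat V$ so as to turn the abstract trace into a boundary integral of the Riemann--Hilbert solution $\chi$ along ${\cal C}_p$, and then open that up into the contour integral over $\g({\cal C}_p)$. The first step is to differentiate the kernel (\ref{intkernel}). All time dependence sits in the vectors (\ref{defeler}): it enters through the scalar factor $\re^{-g}$ with $\6_t g = -\i\eps$, and through $E_h$, whose derivative $\dot E_h$ is, directly from its defining integral over ${\cal C}_h$, the ${\cal C}_h$-Cauchy transform of $\overline w\,\eps\,\re^{g}$. Hence $\dot{\Ev}_R = \i\eps\,\Ev_R + \bigl(\begin{smallmatrix}\dot E_h\\0\end{smallmatrix}\bigr)w\re^{-g}$ and $\dot{\Ev}_L = \bigl(\begin{smallmatrix}0\\-\dot E_h\end{smallmatrix}\bigr)$, so $\dot{\widehat V}$ splits into an $\i\eps$-proportional piece and two pieces carrying $\dot E_h$.

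Next I would insert $\dot{\widehat V}$ into (\ref{dlogdetisdtrlog}) and use cyclicity of the trace together with the resolvent: by (\ref{resolveflfr}) the combinations $(\id-\widehat R)\Ev_R$ and $\Ev_L^t(\id-\widehat R)$ collapse to $\Fv_R$ and $\Fv_L^t$, so the trace becomes a sum of integrals over ${\cal C}_p$ of scalar bilinears built from $\Fv_L^t$, $\Fv_R$, the seed vectors and $\eps$ resp.\ $\dot E_h$. I then eliminate the resolvent data in favour of $\chi$ through the algebraic relations (\ref{erfralg}) and (\ref{elflalg}), $\Fv_R=\chi_+\Ev_R$ and $\Fv_L^t=\Ev_L^t\chi_+^{-1}$. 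The factor $E_p$ of the target is precisely the particle-contour Cauchy transform produced when the $\dot E_h$ term is re-expressed on ${\cal C}_p$ and dressed by $\chi$; comparison with (\ref{defep}) should fix it and the scalar prefactor $\eps(z)$.

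Finally I would convert the boundary-value integral on ${\cal C}_p$ into the stated integral over the enclosing contour $\g({\cal C}_p)$. Here $\chi$ is holomorphic and $\p\i$-periodic off ${\cal C}_p$, so once the matrix $\bigl(\begin{smallmatrix}0 & \re^{-2z}E_p(z)\\0&1\end{smallmatrix}\bigr)$ and the derivative $\chi'$ are assembled, the integrand extends to a single-valued function on a neighbourhood of ${\cal C}_p$, and the jump (\ref{defgchi}) lets me trade the boundary values $\chi_\pm$ for a contour integral of $\tr\{\chi'\,(\cdots)\,\chi\}$ along $\g({\cal C}_p)$, reproducing (\ref{dtlnfredmrhp}).

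I expect the reorganisation in the last two steps to be the crux. The delicate points are \textbf{(i)} showing that the $\dot E_h$ contributions, which are genuinely ${\cal C}_h$-integrals, recombine after dressing by the resolvent into exactly the $E_p$-matrix supported near ${\cal C}_p$; and \textbf{(ii)} matching the precise placement of $\chi'$ versus $\chi$ and the projector structure $\bigl(\begin{smallmatrix}0&*\\0&1\end{smallmatrix}\bigr)$ while deforming ${\cal C}_p\to\g({\cal C}_p)$, keeping careful track of which boundary value ($\chi_+$ or $\chi_-$) and which interior/exterior region each term belongs to. The rest is the routine differentiation and resolvent bookkeeping already prepared in Sections~\ref{sec:intop} and \ref{sec:mRHp}.
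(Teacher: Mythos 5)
Your plan starts from the right identity, $\6_t \ln\det = \tr\{\dot{\widehat V}(\id-\widehat R)\}$, computes $\dot\Ev_L$, $\dot\Ev_R$ correctly, and correctly names the target objects ($E_p$, the projector structure, the deformation to $\g({\cal C}_p)$). But the two points you defer as ``delicate'' are not finishing touches --- they are the entire content of the proof, and the mechanism that resolves them is absent from your outline. The paper's key move is to rewrite the \emph{kernel} $\dot V(x,y)$, before any trace is taken, as a contour integral
$\dot V(x,y) = \int_{\g({\cal C}_p)} \rd z\,\frac{\re^{-x-y}}{\sh(z-x)\sh(z-y)}\,\Ev_L^t(x)\,S(z)\,\Ev_R(y)$
with $S(z) = \frac{\dot g(z)}{2\p\i}\bigl(\begin{smallmatrix}0 & \re^{-2z}E_p(z)\\ 0 & 1\end{smallmatrix}\bigr)$. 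It is at this stage --- via the residue identity $-\int_{\g({\cal C}_p)}\frac{\rd z}{2\p\i}\frac{\re^{-z-x}}{\sh(z-x)}\dot g(z)E_p(z) = \re^{-2x}(\dot E_h(x)-\dot g(x)E_h(x))$ on ${\cal C}_p$, which exploits $E_{p+}=E_h$, and the partial-fraction identity for $(\dot g(x)-\dot g(y))/\sh(x-y)$ --- that the $\dot E_h$ terms recombine into $E_p$ and the matrix $\bigl(\begin{smallmatrix}0&*\\0&1\end{smallmatrix}\bigr)$ appears. Only because $\dot V$ has then become a $z$-integral of \emph{separable} kernels in $(x,y)$ can the trace against $\id-\widehat R$ be evaluated; the subsequent partial-fraction decomposition reassembles the ${\cal C}_p$-integrals into $\chi$ and $\6_z\chi^{-1}$ via (\ref{defchi}), (\ref{chiinv}), giving $\int_{\g({\cal C}_p)}\rd z\,\tr\{\chi(z)S(z)\6_z\chi^{-1}(z)\}$.

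Your proposed shortcut --- ``use cyclicity of the trace together with the resolvent'' so that $(\id-\widehat R)\Ev_R$ and $\Ev_L^t(\id-\widehat R)$ collapse to $\Fv_R$ and $\Fv_L^t$ and the trace becomes scalar bilinears on ${\cal C}_p$ --- does not go through as stated: $\dot{\widehat V}$ is not of finite rank, since every term in $\dot V(x,y)$ retains the Cauchy denominator $1/\sh(x-y)$, so $\tr\{\dot{\widehat V}(\id-\widehat R)\}$ is a genuine double integral with a diagonal singularity, not a bilinear in the seed vectors. One must either insert the explicit form $R(x,y)=\frac{\re^{-x-y}}{\sh(x-y)}\Fv_L^t(x)\Fv_R(y)$ and fight the triple product of Cauchy denominators directly, or adopt the paper's device of first making $\dot V$ separable through the auxiliary $z$-integral. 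Without supplying one of these, the passage from the trace to boundary values of $\chi$ on ${\cal C}_p$, and thence to the stated integral over $\g({\cal C}_p)$, is not established; as written the proposal identifies where the proof must arrive but not how it gets there.
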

Before proceeding with the proof let us note the following relations between
$E_p$ defined in the proposition and the function $E_h$ defined earlier,
\begin{subequations}
\begin{align} \label{ehep}
     E_h (x) & = E_p (x) && \text{for all $x \in \Int ({\cal C}_h) \cup \Int ({\cal C}_p$),} \\
     E_{h+} (x) & = E_p (x) && \text{for all $x \in {\cal C}_h$,} \label{ehpep} \\
     E_{p+} (x) & = E_h (x) && \text{for all $x \in {\cal C}_p$.} \label{eppeh}
\end{align}
\end{subequations}
Clearly (\ref{ehpep}) and (\ref{eppeh}) are consequences of (\ref{ehep}), which
can be obtained by deforming the contour in the definition of $E_h$ and
using the analytic and asymptotic properties of the integrand.
\begin{proof}
The first step of the proof consists in deriving an appropriate representation
for $\dot V (x,y)$. First of all
\begin{multline} \label{vdot}
     \dot V(x,y) = - \frac{\re^{- x - y}}{\sh(x - y)}
                     \biggl[\re^{2y} \bigl(\dot E_h (x) - \dot g(x) E_h (x) \bigr)
                          - \re^{2x} \bigl(\dot E_h (y) - \dot g(y) E_h (y) \bigr) \\
			  + \bigl(\dot g(x) - \dot g(y) \bigr) \re^{2y} E_h (x) \biggr]
			    w(y) \re^{- g(y)} \epp
\end{multline}

Let $x \in {\cal C}_p$. Then
\begin{multline}
     - \int_{\g ({\cal C}_p)} \frac{\rd z}{2\p\i} \:
          \frac{\re^{- z - x}}{\sh(z - x)} \dot g(z) E_p (z)
	  = \re^{-2x} \bigl(\dot E_{p+} (x) - \dot g(x) E_{p+} (x) \bigr) \\
	  = \re^{-2x} \bigl(\dot E_h (x) - \dot g(x) E_h (x) \bigr) \epc
\end{multline}
where we have used (\ref{defep}) in the first equation and (\ref{eppeh})
in the second equation. We also observe that
\begin{equation}
     \re^{2x + 2y} w(y) \re^{- g(y)} = \Ev_L^t (x) \s^+ \Ev_R (y) \epp
\end{equation}
Combining the latter two equations we obtain
\begin{multline} \label{vdotfirsthalf}
     - \frac{\re^{- x - y}}{\sh(x - y)}
       \biggl[\re^{2y} \bigl(\dot E_h (x) - \dot g(x) E_h (x) \bigr)
            - \re^{2x} \bigl(\dot E_h (y) - \dot g(y) E_h (y) \bigr) \biggr] w(y) \re^{- g(y)} \\
     = \int_{\g ({\cal C}_p)} \frac{\rd z}{2\p\i} \:
          \frac{\re^{- x - y}}{\sh(z - x)\sh(z - y)}
	  \dot g(z) \re^{- 2z} E_p (z) \Ev_L^t (x) \s^+ \Ev_R (y) \epp
\end{multline}
For the remaining terms in (\ref{vdot}) we use that
\begin{equation}
     \int_{\g ({\cal C}_p)} \frac{\rd z}{2\p\i} \:
          \frac{\dot g(z)}{\sh(z - x) \sh(z - y)} = \frac{\dot g(x) - \dot g(y)}{\sh(x - y)}
\end{equation}
for all $x, y \in {\cal C}_p$, $x \ne y$ and that
\begin{equation}
     - \re^{2y} E_h (x) w(y) \re^{- g(y)} = \Ev_L^t (x) e_2^2 \, \Ev_R (y) \epc
\end{equation}
where $e_2^2$ is one of the matrix units defined by $(e_\a^\be)^\g_\de =
\de^\g_\a \de^\be_\de$. Then
\begin{multline}
     - \frac{\re^{- x - y}}{\sh(x - y)}
       \bigl(\dot g(x) - \dot g(y) \bigr) \re^{2y} E_h (x) w(y) \re^{- g(y)} \\
     = \int_{\g ({\cal C}_p)} \frac{\rd z}{2\p\i} \:
       \frac{\re^{- x - y}}{\sh(z - x) \sh(z - y)} \dot g(z) \Ev_L^t (x) e_2^2 \, \Ev_R (y) \epp
\end{multline}
Inserting the latter identity together with (\ref{vdotfirsthalf})
into (\ref{vdot}) we obtain
\begin{equation} \label{vdotfinal}
     \dot V(x,y) = \int_{\g ({\cal C}_p)} \rd z \:
       \frac{\re^{- x - y}}{\sh(z - x) \sh(z - y)} \Ev_L^t (x)  S(z) \Ev_R (y) \epc
\end{equation}
where
\begin{equation}
     S(z) = \frac{\dot g (z)}{2\p\i}
            \begin{pmatrix}
	       0 & \re^{-2z} E_p (z) \\ 0 & 1
            \end{pmatrix} \epp
\end{equation}

The second step of the proof consists in inserting (\ref{vdotfinal}) into
(\ref{dlogdetisdtrlog}) and does not depend on the precise form of $S(z)$.
\begin{align}
     & \6_t \ln\Bigl(\det_{{\cal C}_p} \bigl(\id + \widehat V\bigr)\Bigr)
        - \int_{{\cal C}_p} \rd x \: \dot V(x,x) =
	- \int_{{\cal C}_p} \rd x \int_{{\cal C}_p} \rd y \: \dot V (x,y) R(y,x) \notag \\[1ex]
	& \mspace{36.mu} 
	= - \int_{\g ({\cal C}_p)} \rd z \int_{{\cal C}_p} \rd x \int_{{\cal C}_p} \rd y \:
	    \frac{8 \re^{2z} \Ev_L^t (x)  S(z) \Ev_R (y) \Fv_L^t (y) \Fv_R (x)}
	         {(\re^{2z} - \re^{2x})(\re^{2z} - \re^{2y})(\re^{2y} - \re^{2x})} \notag \\[2ex]
	& \mspace{36.mu} 
	= - \tr \biggl\{\int_{\g ({\cal C}_p)} \rd z \int_{{\cal C}_p} \rd y \int_{{\cal C}_p} \rd x \: 
	      \biggl[\frac{\re^{- x - z}}{\sh(x - z)} - \frac{\re^{- x - y}}{\sh(x - y)} \biggr]
	      \Fv_R (x) \Ev_L^t (x) \notag \\
	& \mspace{306.mu} \times
	  \frac{\re^{-2y}}{\sh^2(y - z)} S(z) \Ev_R (y) \Fv_L^t (y) \biggr\} \notag \\[1ex]
	& \mspace{36.mu} 
	= - \int_{\g ({\cal C}_p)} \rd z \int_{{\cal C}_p} \rd y \: 
	    \frac{\re^{-2y}}{\sh^2(y - z)}
	    \tr \bigl\{ \bigl(\chi(y) - \chi(z)\bigr) S(z) \Ev_R (y) \Fv_L^t (y) \bigr\} \notag \\[1ex]
	& \mspace{36.mu} 
	= - \int_{{\cal C}_p} \rd y \int_{\g ({\cal C}_p)} \rd z \: \frac{\re^{-2y}}{\sh^2(y - z)}
	    \Ev_L^t (y) S(z) \Ev_R (y) \notag \\
	& \mspace{126.mu} 
	  + \tr \biggl\{\int_{\g ({\cal C}_p)} \rd z \: \chi(z) S(z) \, \6_z
	                \int_{{\cal C}_p} \rd y \: \frac{\re^{- y -z}}{\sh(y - z)}
			\Ev_R (y) \Fv_L^t (y) \biggr\} \epp
\end{align}
Here we have used (\ref{dlogdetisdtrlog}) in the first equation.
In the second equation we inserted (\ref{resolventform}) and
(\ref{vdotfinal}). In the third equation we performed a partial
fraction decomposition and used the fact that $\tr\{\Fv_R (x) \Fv_L^t (y)\}
= \Fv_L^t (y) \Fv_R (x)$. In the fourth equation (\ref{defchi}) was inserted.
In the fifth equation we used (\ref{elflalg}) in the first term on
the right hand side.

Using once more (\ref{vdotfinal}) and equation (\ref{chiinv}) we see
that
\begin{multline}
     \6_t \ln\Bigl(\det_{{\cal C}_p} \bigl(\id + \widehat V\bigr)\Bigr) = \\
          \int_{\g ({\cal C}_p)} \rd z \: \tr \bigr\{\chi(z) S(z) \6_z \chi^{-1} (z) \bigr\}
	= - \int_{\g ({\cal C}_p)} \rd z \: \tr \bigr\{\chi' (z) S(z) \chi^{-1} (z) \bigr\} \\
        = \int_{\g ({\cal C}_p)} \frac{\rd z}{2\p} \: \eps(z)
	  \tr \bigl\{ \chi'(z) \begin{pmatrix} 0 & \re^{-2z} E_p (z) \\ 0 & 1 \end{pmatrix}
	              \chi(z) \bigr\} \epp
\end{multline}
In the last equation we have inserted the explicit forms of $S$ and $g$.
\end{proof}

\section{First transformation of the matrix Riemann-Hilbert problem}
\label{sec:standardform}
Following \cite{IIKV92} we shall transform the matrix
Riemann-Hilbert problem $({\cal S}, {\cal C}_p, + \infty, G_\chi)$
to a certain `normal form' with a simplified jump matrix having no
Cauchy transforms in its entries. For a while we shall use the
notation $f(x) = 2\p\i w(x) \re^{- g(x)}$. Then the jump matrix associated
with the matrix Riemann-Hilbert problem satisfied by $\chi$ defined
in (\ref{defchi}) takes the form
\begin{equation}
     \mspace{-3.mu}
     G_\chi (x) = I_2 + 2 \p \i \re^{-2x} \Ev_R (x) \Ev_L^t (x)
        = \begin{pmatrix}
	     1 + f(x) E_{p+} (x) & - f(x) \re^{-2x} E_{p+}^2 (x) \\
	     f(x) \re^{2x} & 1 - f(x) E_{p+} (x) 
          \end{pmatrix}.
\end{equation}
Here we also used (\ref{eppeh}).

Now define the $2 \times 2$ matrix function
\begin{equation} \label{defchitilde}
     \widetilde \chi (x) = \chi (x)
                           \begin{pmatrix}
			        1 & \re^{-2x} E_p (x) \\ 0 & 1
                           \end{pmatrix} \epp
\end{equation}
This transformation essentially does not change the analytic properties
or the asymptotic behaviour, but the jump matrix of the transformed
function gets modified.
\begin{proposition}
The transformed matrix $\widetilde \chi$ is the unique solution of the
matrix Riemann-Hilbert problem $({\cal S}, {\cal C}_p, + \infty,
G_{\widetilde \chi})$ with jump matrix
\begin{subequations}
\begin{align} \label{transg}
     G_{\widetilde \chi} (x) & =
        \begin{pmatrix}
	   1 & \dst{\frac{2 \PH_{p-} (x) \re^{g(x) - 2x}}{1 - \re^{\eps(x)/T}}} \\[2ex]
	   \dst{\frac{2 \re^{2x - g(x)}}{\PH_{p+} (x) (1 - \re^{- \eps(x)/T})}} &
	   \dst{\frac{\PH_{p-} (x)}{\PH_{p+} (x)}}
	\end{pmatrix} \\[1ex] & =
        \begin{pmatrix}
	   1 & 0 \\[1ex]
	   \dst{\frac{2 \re^{2x - g(x)}}{\PH_{p+} (x) (1 - \re^{- \eps(x)/T})}} & 1
	\end{pmatrix}
        \begin{pmatrix}
	   1 & \dst{\frac{2 \PH_{p-} (x) \re^{g(x) - 2x}}{1 - \re^{\eps(x)/T}}} \\[2ex]
	   0 & 1
	\end{pmatrix} \epp
\end{align}
\end{subequations}
\end{proposition}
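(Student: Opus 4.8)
The plan is to regard the transformation (\ref{defchitilde}) as the right multiplication $\widetilde \chi = \chi M$ by the unimodular upper-triangular factor $M$ whose only off-diagonal entry is $M^{12}(x) = \re^{-2x} E_p (x)$, and to transport the three defining properties of a solution of the matrix Riemann-Hilbert problem from $\chi$ to $\widetilde\chi$. Since $E_p$ defined in (\ref{defep}) is a periodic Cauchy transform, it is $\p\i$-periodic and holomorphic on ${\cal S} \setminus {\cal C}_p$; as $\re^{-2x}$ is entire and $\p\i$-periodic, the same holds for $M$, hence $\widetilde\chi$ is $\p\i$-periodic and holomorphic on ${\cal S} \setminus {\cal C}_p$. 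Because $\det M \equiv 1$, the only genuinely new computations are the behaviour of $M$ at $\Re x \to \pm\infty$ and its boundary values on ${\cal C}_p$.

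For the asymptotics I would use the two limiting regimes of the kernel $\ph(y,x) = \re^{x-y}/\sh(x-y)$ occurring in (\ref{defep}): $\ph(y,x) \to 2$ as $\Re x \to +\infty$ and $\ph(y,x) = {\cal O}(\re^{2x})$ as $\Re x \to -\infty$. The first gives $E_p(x) = {\cal O}(1)$, so that $\re^{-2x} E_p(x) = {\cal O}(\re^{-2x})$ and $M(x) = I_2 + {\cal O}(\re^{-2x})$; the second gives $E_p(x) = {\cal O}(\re^{2x})$, so that $\re^{-2x} E_p(x)$ tends to a constant and $M(x)$ tends to a constant unimodular upper-triangular matrix. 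Combining with the asymptotics (\ref{asychi}) of $\chi$ shows that $\widetilde\chi$ has the required form with $\widetilde\chi^{(+)} = I_2$ and a constant $\widetilde\chi^{(-)}$.

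The heart of the proof is the jump. Writing $M_\pm$ for the boundary values of $M$ (built from $E_{p\pm}$) we have $\widetilde\chi_\pm = \chi_\pm M_\pm$, so the jump condition $\chi_- = \chi_+ G_\chi$ turns into $G_{\widetilde\chi} = M_+^{-1} G_\chi M_-$. Carrying out this $2\times 2$ product with $f = 2\p\i w \re^{-g}$ and $G_\chi$ as displayed, the relation $\re^{-2x} E_{p+} \cdot f\re^{2x} = f E_{p+}$ makes the off-diagonal Cauchy-transform contributions cancel, leaving a matrix whose entries involve only $f\re^{2x}$ and the jump $E_{p-} - E_{p+}$. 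That jump is evaluated by Sokhotski--Plemelj applied to (\ref{defep}): the kernel $\ph(y,x)$ has residue $1$ in $x$ at $y = x$, giving $E_{p-}(x) - E_{p+}(x) = 2\PH_{p-}(x)\re^{g(x)}/(1 - \re^{\eps(x)/T})$. Substituting this, together with $f(x) = 2\re^{-g(x)}/\bigl(\PH_{p+}(x)(1 - \re^{-\eps(x)/T})\bigr)$ (which uses $\PH_{p+} = \PH_h$ on ${\cal C}_p$), immediately reproduces the $(1,2)$ and $(2,1)$ entries of (\ref{transg}).

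The remaining point, and the step I expect to be the main obstacle, is to show that the $(2,2)$ entry $1 + f\,(E_{p-} - E_{p+})$ collapses to $\PH_{p-}/\PH_{p+}$. This hinges on the multiplicative jump of $\PH_p$ across ${\cal C}_p$: differentiating the exponent of (\ref{php}) one computes the residue at $\la = x$ of $p'(\la) z(\la) \sh(x+\la)/\sh(x-\la)$, and using $\i p'(x)\sh(2x) = 2$ from (\ref{pprimee}) together with $z = \frac{1}{2\p\i}\ln\cth(\eps/2T)$ from (\ref{funz}) one finds $\PH_{p-}(x)/\PH_{p+}(x) = \tgh^2\bigl(\eps(x)/2T\bigr)$. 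Feeding this and the Plemelj jump into $1 + f(E_{p-} - E_{p+})$, and simplifying $(1 - \re^{\eps/T})(1 - \re^{-\eps/T})$ in terms of $\ch(\eps/T)$, yields exactly $\PH_{p-}/\PH_{p+}$; the careful bookkeeping of orientations and signs in these two Cauchy jumps is where the real work lies. Finally $\det G_{\widetilde\chi} = \det G_\chi = 1$ since $\det M_\pm = 1$, so Theorem~\ref{th:mrhpunique} guarantees that $\widetilde\chi$ is the unique solution, and the factorised form in the second line of (\ref{transg}) follows from a direct multiplication of the two triangular matrices using this unimodularity.
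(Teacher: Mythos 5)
Your proposal is correct and follows essentially the same route as the paper: the paper's own (much terser) proof consists precisely of noting the asymptotics of $\re^{-2x}E_p(x)$, the Sokhotski--Plemelj jump $E_{p+}-E_{p-}=-2\PH_{p-}\re^{g}/(1-\re^{\eps/T})$, the identification $f=2\re^{-g}/\bigl(\PH_{p+}(1-\re^{-\eps/T})\bigr)$, and the ratio $\PH_{p-}/\PH_{p+}=\tgh^2(\eps/2T)$, leaving the matrix multiplication $M_+^{-1}G_\chi M_-$ implicit. Your computation of that product and of the collapse of the $(2,2)$ entry is exactly the bookkeeping the paper omits, and all your signs and identities check out.
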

\begin{proof}
Analytic properties and $\p\i$-periodicity are clear. The asymptotic behaviour
follows from
\begin{equation}
     \re^{-2x} E_p (x) = \begin{cases}
			    \text{const.} + {\cal O} (\re^{2x}) &
			    \text{for $x \rightarrow - \infty$,} \\
			    {\cal O} (\re^{- 2x}) & \text{for $x \rightarrow + \infty$.}
                         \end{cases}
\end{equation}

For the calculation of the transformed jump matrix we remark that
\begin{equation}
     E_{p+} (x) - E_{p-} (x) = - \frac{2 \PH_{p-} (x) \re^{g(x)}}{1 - \re^{\eps(x)/T}}
\end{equation}
and
\begin{equation}
     f(x) = \frac{2 \re^{-g(x)}}{\PH_{p+} (x) (1 - \re^{- \eps(x)/T})} \epp
\end{equation}
Moreover, it is easy to see that
\begin{equation}
     \frac{\PH_{p-} (x)}{\PH_{p+} (x)} =
        \biggl(\frac{1 - \re^{- \eps(x)/T}}{1 + \re^{- \eps(x)/T}}\biggr)^2 \epp
\end{equation}
\end{proof}
\section{High-temperature analysis of the matrix Riemann-Hilbert
problem} \label{sec:hightmrhp}
\subsection{High-temperature form of the jump matrix}
\begin{proposition}
The functions occurring in the jump matrix $G_{\widetilde \chi}$
have the high-temperature expansions
\begin{subequations}
\begin{align} \label{highTA1}
     \frac{2}{1 - \re^{\pm \eps (x)/T}} & =
        \mp \frac{2T \re^{\mp \eps(x)/T}}{\eps(x)}
	    \bigl(1 + {\cal O} \bigl( T^{- 2} \bigr)\bigr) \epc \\[1ex]
	    \label{highTPhp}
     \PH_p (x) & = - \frac{\i p'(x)}{2} \tgh\biggl(\frac{h}{2T}\biggr)
                     \biggl(\cth^2\biggl(\frac{\eps(x)}{2T}\biggr)
		            \biggr)^{\one_{x \in \Int({\cal C}_p)}}
			    \notag \\ & \mspace{216.mu} \times
		     \frac{\sh^2(x - \la_F^+)}{\ch^2(x)}
		     \bigl(1 + {\cal O} \bigl( T^{- 2} \bigr)\bigr) \epp
\end{align}
\end{subequations}
\end{proposition}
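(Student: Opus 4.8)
The plan is to treat the two expansions separately: \eqref{highTA1} is elementary, while \eqref{highTPhp} requires a contour analysis of the Cauchy transform defining $\PH_p$.

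For \eqref{highTA1} I would not expand the geometric-type denominator directly, but first rewrite it in half-angle form via the exact identity
\[
     1 - \re^{\pm \eps(x)/T} = \mp\, 2\, \re^{\pm \eps(x)/2T}\, \sh\!\Bigl(\tfrac{\eps(x)}{2T}\Bigr) \epc
\]
so that $\frac{2}{1 - \re^{\pm\eps(x)/T}} = \mp\, \re^{\mp\eps(x)/2T}\, \csch(\eps(x)/2T)$. Since $\eps$ is bounded on the relevant contour while $T\to+\infty$, the argument $\eps/2T$ is uniformly small, and because the Taylor series of $\sh$ contains only odd powers one has $\csch(\eps/2T)=(2T/\eps)\bigl(1+{\cal O}(T^{-2})\bigr)$ with \emph{no} relative first-order term; collecting the exponential prefactor then yields \eqref{highTA1}. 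The point worth stressing is that it is precisely the even parity of $\sh$ that upgrades the naive ${\cal O}(T^{-1})$ accuracy to ${\cal O}(T^{-2})$.

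For \eqref{highTPhp} I would start from the definition \eqref{php}, writing $\PH_p(x) = \frac{\i p'(x)}{2}\,\re^{{\cal E}(x)}$ with ${\cal E}(x) = -\i\int_{\mathcal{C}_p}\rd\la\,p'(\la)\,z(\la)\,\frac{\sh(x+\la)}{\sh(x-\la)}$, and insert the high-temperature form of $z$. From $\cth(y)=y^{-1}(1+{\cal O}(y^2))$ one gets $z(\la)=\frac{1}{2\pi\i}\bigl[\ln(2T)-\ln\eps(\la)\bigr]+{\cal O}(T^{-2})$ on $\mathcal{C}_p$, the correction again being even in $\eps/2T$. The exponent is then evaluated by collapsing $\mathcal{C}_p$ onto the singularities of the integrand it encircles, of which there are three. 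First, the simple pole at $\la=x$, present exactly when $x\in\Int(\mathcal{C}_p)$: using $p'(x)\sh(2x)=-2\i$ its residue contributes $2\ln\cth(\eps(x)/2T)$ to ${\cal E}$, which exponentiates to $\bigl(\cth^2(\eps(x)/2T)\bigr)^{\one_{x\in\Int(\mathcal{C}_p)}}$; this contribution is exact and accounts for the indicator in \eqref{highTPhp}. Second, the branch point of $z$ at the enclosed Fermi rapidity $\la_F^+$, where $\eps$ vanishes: wrapping the associated logarithmic cut produces the factor $\sh^2(x-\la_F^+)$. Third, the behaviour near $\la=\i\pi/2$, where $p'$ has a pole and $\eps$ diverges; since $\ch^2(x)=-\sh^2(x-\i\pi/2)$, this region generates the denominator $\ch^2(x)$, the prefactor $\tgh(h/2T)$ and the overall sign. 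A clean way to pin these last constants is to send $\Re x\to+\infty$ (outside $\mathcal{C}_p$, so no $\cth^2$ factor), where $\sh(x+\la)/\sh(x-\la)\to\re^{2\la}$ and \eqref{highTPhp} predicts $\re^{{\cal E}}\to-\tgh(h/2T)\,\re^{-2\la_F^+}$, giving a scalar check on the $\i\pi/2$ contribution.

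The main obstacle is exactly this third region. As $T\to+\infty$ the roots of $\re^{-\eps(\la)/T}-1$ in the strip $\pi/4<\Im\la<3\pi/4$ do not stay fixed: apart from $\la_F^+$ they accumulate at $\la=\i\pi/2$, where $\eps$ is unbounded, so the expansion $\cth(\eps/2T)\sim 2T/\eps$ breaks down and the integrand becomes genuinely oscillatory there. Handling this requires a separate local analysis in a shrinking neighbourhood of $\i\pi/2$ — summing the contributions of the accumulating branch points rather than expanding $z$ pointwise — and it is here that the temperature-dependent prefactor $\tgh(h/2T)$ arises. The remaining, more routine, task is to verify that the ${\cal O}(T^{-2})$ corrections to $z$ feed into ${\cal E}$ only at ${\cal O}(T^{-2})$, uniformly in $x$ on $\mathcal{C}_p$, which follows once the contour is kept at a fixed distance from $\la_F^+$ and the singular neighbourhood of $\i\pi/2$ is isolated.
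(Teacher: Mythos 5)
Your treatment of \eqref{highTA1} is correct and, in fact, more careful than the statement itself: the half-angle identity gives
\begin{equation*}
     \frac{2}{1 - \re^{\pm \eps(x)/T}}
        = \mp\, \re^{\mp \eps(x)/2T} \csch\Bigl(\frac{\eps(x)}{2T}\Bigr)
        = \mp\, \frac{2T \re^{\mp \eps(x)/(2T)}}{\eps(x)}
          \bigl(1 + {\cal O}(T^{-2})\bigr) \epc
\end{equation*}
with exponent $\eps/(2T)$, not $\eps/T$ as printed in \eqref{highTA1}; your version is the one consistent with the $1/(2T)$ shift that later enters $\a$ and $\tau = -4J\bigl(t - \frac{\i}{2T}\bigr)$, and the parity argument for the ${\cal O}(T^{-2})$ accuracy is exactly right. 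For \eqref{highTPhp}, your bookkeeping of the residue at $\la = x$ (giving the exact factor $\cth^2(\eps(x)/2T)^{\one_{x \in \Int({\cal C}_p)}}$) and of the logarithmic cut of $\ln\eps(\la)$ joining $\la_F^+$ to $\i\p/2$ (giving $-\sh^2(x-\la_F^+)/\ch^2(x)$ after exponentiation) reproduces the paper's computation of the $x$-dependent part of the exponent.

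The genuine gap is the factor $\tgh(h/2T)$. You correctly diagnose that the branch points of $z$ accumulate at $\i\p/2$ as $T \to +\infty$, so that the pointwise expansion of $z$ cannot be used there, but you then only \emph{announce} that "a separate local analysis in a shrinking neighbourhood of $\i\p/2$" would produce $\tgh(h/2T)$ — the computation that actually yields this factor is never performed, and your proposed large-$\Re x$ consistency check presupposes the answer rather than deriving it. The paper removes this obstacle before any expansion is made, via the partial-fraction identity $-\i p'(\la)\,\sh(x+\la)/\sh(x-\la) = -2\cth(2\la) + 2\cth(\la - x)$: the entire contribution of the $\i\p/2$ region is thereby packaged into the $x$-independent integral $-2\int_{{\cal C}_p} \rd\la\, z(\la) \cth(2\la)$, which is evaluated \emph{exactly} (by a symmetry of the integrand) to $\ln\tgh(h/2T)$, with no high-temperature expansion and hence no issue with the accumulating branch points; the remaining integral $2\int \rd\la\, z(\la)\cth(\la - x)$ only requires the expansion of $z$ on the contour itself, where $\eps$ is bounded. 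Without this splitting, or an explicit summation over the accumulating branch points near $\i\p/2$, your argument does not establish \eqref{highTPhp}; it also slightly misattributes the mechanism, since in the paper the $\ch^{-2}(x)$ and the sign come from the endpoint $\i\p/2$ of the cut in the $x$-dependent integral, while $\tgh(h/2T)$ comes entirely from the separate, exactly evaluated $x$-independent piece.
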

\begin{proof}
For the proof of (\ref{highTPhp}) we rewrite the integral in the exponent
in (\ref{php}) as
\begin{multline}
     - \i \int_{{\cal C}_p} \rd \la \: p'(\la) z(\la)
                \frac{\sh(x + \la)}{\sh(x - \la)} = \\
     - 2 \int_{{\cal C}_p} \rd \la \: z(\la) \cth(2\la)
     + 2 \int_{{\cal C}_p} \rd \la \: z(\la) \cth(\la - x) \epp
\end{multline}
The first integral on the right hand side can be computed using the
symmetry of the integrand,
\begin{equation}
     - 2 \int_{{\cal C}_p} \rd \la \: z(\la) \cth(2\la)
        = \ln \biggl(\tgh\biggl(\frac{h}{2T}\biggr)\biggr) \epp
\end{equation}
The second integral we rewrite as
\begin{equation}
     2 \int_{{\cal C}_p} \rd \la \: z(\la) \cth(\la - x) =
        4 \p \i z(x) \one_{x \in \Int({\cal C}_p)} +
        2 \int_{{\cal C}_p'} \rd \la \: z(\la) \cth(\la - x) \epc
\end{equation}
where ${\cal C}_p'$ is a deformation of ${\cal C}_p$ in such a way
that $x \notin \Int\bigl({\cal C}_p'\bigr)$. Then we insert the
high-temperature expansion
\begin{equation}
     z(\la) = \frac{1}{2 \p \i} \bigl(\ln(2T) - \ln \bigl(\eps(\la)\bigr)\bigr)
              + {\cal O} \bigl( T^{- 2} \bigr)
\end{equation}
into the integral over ${\cal C}_p'$ and calculate the integral
using the fact that $\eps$ has a single simple zero inside
${\cal C}_p'$ at $z = \la_F^+$ and a single simple pole at
$z = \p \i/2$. Thus,
\begin{multline}
     2 \int_{{\cal C}_p'} \rd \la \: z(\la) \cth(\la - x) = 
     - \int_{{\cal C}_p'} \frac{\rd \la}{\p \i}
             \: \ln\bigl(\eps(\la)\bigr) \cth(\la - x)
             + {\cal O} \bigl( T^{- 2} \bigr) = \\
     2 \int_{\p \i/2}^{\la_F^+} \cth(\la - x) + {\cal O} \bigl( T^{- 2} \bigr) =
     \ln\biggl(- \frac{\sh^2(x - \la_F^+)}{\ch^2(x)}\biggr)
             + {\cal O} \bigl( T^{- 2} \bigr) \epp
\end{multline}
\end{proof}
In order to proceed we define a function
\begin{equation} \label{defalpha}
     \a (x) = \begin{cases}
              \frac{\sh(x - \la_F^+)}{\ch(x)} \re^{- (1/2T + \i t)(h + 2\i J \tgh(x))} &
	         \text{for $x \in \Ext {\cal C}_ p$} \\[1ex]
	      2 \re^{2x} \sh(x) \sh(x - \la_F^-) \re^{- (1/2T + \i t) 2\i J \cth(x)} &
	         \text{for $x \in \Int {\cal C}_ p$} \epc
              \end{cases}
\end{equation}
which is analytic and non-zero in ${\cal S} \setminus {\cal C}_p$ and
for large real part has the limits
\begin{equation} \label{alphapm}
     \a^{(\pm)} = \lim_{\Re x \rightarrow \pm \infty} \a (x)
              = \pm \re^{\mp \la_F^+ - (1/2T + \i t)(h \pm 2\i J)} \epp
\end{equation}
Using (\ref{highTPhp}) and (\ref{defalpha}) we obtain the high-temperature
expansion of the jump matrix $G_{\widetilde \chi}$,
\begin{equation} \label{gchitildeht}
     G_{\widetilde \chi} (x) =
        \begin{pmatrix}
	     1 & \frac{\a_- (x)}{\a_+ (x)} \bigl( \i \cth(x)\bigr)^{- m - 1} \\
	     - \frac{\a_+ (x)}{\a_- (x)} \bigl( \i \cth(x)\bigr)^{m + 1} & 0
        \end{pmatrix}
	   + {\cal O} \bigl( T^{- 2} \bigr) \epc
\end{equation}
where $\a_+ (x)$ and $\a_-(x)$ are the boundary values of $\a$ from inside
and outside ${\cal C}_p$.
\subsection{Transformation to lower triangular form}
For all $x \in {\cal S} \setminus {\cal C}_p$ we set
\begin{equation} \label{defpsi}
     \Ps (x) = \bigl(\a^{(+)}\bigr)^{- \s^z} \widetilde \chi (x)
               \bigl(\i \s^y\bigr)^{\one_{x \in \Int ({\cal C}_p)}}
	       \bigl(\a(x)\bigr)^{\s^z} \epc
\end{equation}
where $\one_{\rm condition}$ is the indicator function which is equal to
one if condition is satisfied and equal to zero else.  Then $\Ps$ solves
the matrix Riemann-Hilbert problem $({\cal S}, {\cal C}_p, + \infty, G_\Ps)$,
where, for all $x \in {\cal C}_p$, $G_\Ps$ is defined as
\begin{equation}
     G_\Ps (x) = G_\PH (x) + \de G_\Ps (x)
\end{equation}
with
\begin{equation} \label{defgph}
     G_\PH (x) = \begin{pmatrix}
                    \bigl(- \i \tgh(x)\bigr)^{- m - 1} & 0 \\
		    \a_+ (x) \a_- (x) & \bigl(- \i \tgh(x)\bigr)^{m + 1}
                 \end{pmatrix} \epc \qd
     \de G_\Ps (x) = {\cal O} \bigl( T^{-2} \bigr) \epp
\end{equation}
\subsection{Solution of the triangular model Riemann-Hilbert problem}
Consider the model matrix Riemann-Hilbert problem $({\cal S}, {\cal C}_p,
+ \infty, G_\PH)$ with jump matrix $G_\PH$ defined in (\ref{defgph}).
If it has a solution, then the solution is unique according to
theorem~\ref{th:mrhpunique}, since $G_\PH$ is unimodular. It can be
related to classical work of Fokas, Its and Kitaev \cite{FIK92} by
the change of variables
\begin{equation} \label{defzf}
     \fz (x) = - \i \tgh(x) = \i \, \frac{1 - \re^{2x}}{1 + \re^{2x}} \epp
\end{equation}
The map $x \mapsto \fz(x)$, ${\cal S} \mapsto {\mathbb C} P^1$ is
biholomorphic which is clear from the fact that we may consider it
as the composition of the exponential map $x \mapsto u = \re^{2x}$
with the Moebius transformation $u \mapsto z = \frac{\i - \i u}{1 + u}$.
It maps
\begin{equation} \label{pointsmap}
     (+ \infty, - \infty, \i \p/4, - \i \p/4, 0, \i \p/2) \mapsto
     (- \i, \i, 1, -1, 0, \infty) \epp
\end{equation}
Let $\G_p = \fz ({\cal C}_p)$. Then (\ref{pointsmap}) implies that
$\Int (\G_p) = \fz (\Ext ({\cal C}_p))$, $\Ext (\G_p) = \fz (\Int ({\cal C}_p))$.
Thus, $\G_p$ is a clockwise oriented closed contour enclosing the
points $0, 1, \i, - 1, - \i$. The point $\infty$ is outside $\G_p$.

Now let $z = \fz (x)$ and
\begin{equation} \label{defm}
     X(z) = \PH (x) \epc \qd M_m (z) = z^{- m - 1} (\a_+ \a_-) \bigl(\fz^{-1} (z)\bigr) \epp
\end{equation}
Then $X(z)$ satisfies the matrix Riemann-Hilbert problem
\begin{enumerate}
\item
$X(z)$ is holomorphic in ${\mathbb C}P^1$,
\item
\begin{equation}
     X(z) = \begin{cases}
	       I_2 + {\cal O} (z + \i) & \text{for $z \rightarrow - \i$} \\
	       X_\infty \bigl(I_2 + {\cal O} (z^{-1})\bigr) & \text{for $z \rightarrow \infty$,}
            \end{cases}
\end{equation}
\item
For $z \in \G_p$
\begin{equation}
     X_- (z) = X_+ (z) \begin{pmatrix}
                          z^{- m - 1} & 0 \\ z^{m+1} M_m (z) & z^{m+1}
                       \end{pmatrix} \epp
\end{equation}
\end{enumerate}
This matrix Riemann-Hilbert problem maps to the matrix Riemann-Hilbert
problem $({\cal S}, {\cal C}_p, + \infty, G_\PH)$ under $\fz$. Notice
that $X_+ (z)$ is the boundary value from \emph{outside} the
\emph{clockwise} oriented contour $\G_p$.

Finally the transformation
\begin{equation} \label{defy}
     Y(z) = X_\infty^{-1} X(z) \bigl\{z^{- (m + 1)\s^z} \one_{z \in \Ext(\G_p)}
               + I_2 \one_{z \in \Int(\G_p)} \bigr\}
\end{equation}
maps the matrix Riemann-Hilbert problem for $X$ onto an exactly solvable
matrix Riemann-Hilbert problem:
\begin{enumerate}
\item
$Y$ is holomorphic in ${\mathbb C} \setminus \G_p$,
\item
\begin{equation} \label{asyy}
     Y(z) = \bigl(I_2 + {\cal O} (z^{-1})\bigr) z^{-(m+1) \s^z} \epc
            \qd \text{for $z \rightarrow \infty$,}
\end{equation}
\item
$Y$ admits $\pm$ boundary values on $\G_p$ defining two continuous
functions $Y_\pm$ on $\G_p$ such that
\begin{equation} \label{jumpy}
     Y_- (z) = Y_+ (z) \begin{pmatrix} 1 & 0 \\ M_m (z) & 1 \end{pmatrix} \epp
\end{equation}
\end{enumerate}
This is a matrix Riemann-Hilbert problem considered in a classical
paper by Fokas, Its and Kitaev \cite{FIK92}.

Following this work we obtain its solution in terms of certain
explicitly constructible polynomials. Define the Cauchy transform
of a piecewise continuous function $f: {\cal C}_p \mapsto
{\mathbb C}$ as
\begin{equation}
    C_{\G_p} [f] (z) = \int_{\G_p} \frac{\rd w}{2 \p \i} \frac{f(w)}{w - z} \epp 
\end{equation}
Then the unique solution of the above matrix Riemann-Hilbert
problem is (see Appendix~\ref{app:modelmrhp})
\begin{equation} \label{solvey}
     Y(z) = \begin{pmatrix}
               - \g_m C_{\G_p} [M_m P_m] (z) & \g_m P_m (z) \\[1ex]
	       - C_{\G_p} [M_m Q_{m+1}] (z) & Q_{m+1} (z)
            \end{pmatrix} \epc
\end{equation}
where
\begin{equation} \label{defgammam}
     \g_m^{-1} = \int_{\G_p} \frac{\rd w}{2 \p \i} M_m (w) w^m P_m (w)
\end{equation}
and $P_m$ and $Q_{m+1}$ are monic polynomials of degrees $m$ and $m+1$.
These are defined by the conditions
\begin{subequations}
\label{ortho}
\begin{align} \label{orthop}
     \int_{\G_p} \frac{\rd w}{2 \p \i} M_m (w) w^n P_m (w) & = 0 \epc
       \qd \text{for $n = 0, 1, \dots, m - 1$} \epc \\[1ex]
     \int_{\G_p} \frac{\rd w}{2 \p \i} M_m (w) w^n Q_{m+1} (w) & = 0 \epc
       \qd \text{for $n = 0, 1, \dots, m$} \epp
\end{align}
\end{subequations}

More explicitly, let
\begin{equation} \label{defalk}
     A^l_k (m) =  \int_{\G_p} \frac{\rd w}{2 \p \i} M_m (w) w^{l+k} \epc \qd
     \text{for $k = 0, 1, \dots, m + 1$; $l = 0, \dots, m$.}
\end{equation}
Then equations (\ref{ortho}) are equivalent to saying that the coefficients
in the expansions
\begin{equation} \label{pqcoeff}
     P_m (z) = z^m + \sum_{k=0}^{m-1} c_m^{(k)} z^k \epc \qd
     Q_{m+1} (z) = z^{m+1} + \sum_{k=0}^{m} d_{m+1}^{(k)} z^k
\end{equation}
satisfy the linear equations
\begin{equation} \label{pqlineqs}
     \sum_{k=0}^{m-1} A_k^l (m) c_m^{(k)} = - A_m^l (m) \epc \qqd
     \sum_{k=0}^m A_k^n (m) d_{m+1}^{(k)} = - A_{m+1}^n (m)
\end{equation}
$l = 0, \dots, m - 1$; $n = 0, \dots, m$. Introducing the column vectors
\begin{equation}
     \Cv_i = (A^0_k (m), \dots, A^{m-1}_k (m))^t \epc \qd
     \Dv_j = (A^0_l (m), \dots, A^m_l (m))^t
\end{equation}
for $i = 0, \dots, m$, $j = 0, \dots, m + 1$, we can express the coefficients
of the polynomials $P_m$ and $Q_{m+1}$ by means of Cramer's rule,
\begin{subequations}
\label{cdcramer}
\begin{align}
     & c_m^{(n)} = - \frac{\det (\Cv_0, \dots, \Cv_{n-1}, \Cv_m, \Cv_{n+1},
                             \dots, \Cv_{m-1})}
			    {\det_{i, j = 0, \dots, m - 1} \bigl(A^i_j (m)\bigr)} \epc \\[1ex]
     & d_{m+1}^{(n)} = - \frac{\det (\Dv_0, \dots, \Dv_{n-1}, \Dv_{m+1}, \Dv_{n+1},
                             \dots, \Dv_{m})}
			    {\det_{i, j = 0, \dots, m} \bigl(A^i_j (m)\bigr)} \epc
			    \label{dcramer}
\end{align}
\end{subequations}
whenever the determinants in the denominator are non-zero. In
Appendix~\ref{app:expq} we show that the latter is the case for all
complex values of $\tau = - 4J \bigl(t - \frac{\i}{2T}\bigr)$ close
enough to the real axis, with the possible exception of finitely many
points.

In the same appendix we also obtain the following explicit expression
for the $A^l_k (m)$ as finite sums of modified Bessel functions $I_n (x)$:
\begin{multline} \label{alkmodbesselupper}
     A^l_k (m) = - \frac{\i^{m - k - l}\re^{\i c \tau}}{c} \biggl\{
                 I_{m - k - l - 1} (-\tau) \\ + \bigl(1 + \i c(m - k - l - 1)\bigr)
		 \biggl[\re^{- \tau} - \sum_{n = k + l + 1 - m}^{m - k - l - 1} I_n (- \tau)
		        \biggr] \\
		 + \i c \tau \bigl(I_{m - k - l - 1} (- \tau) + I_{m - k - l} (- \tau)\bigr)
		   \biggr\}
\end{multline}
for $m - k - l - 1 \ge 0$, and
\begin{multline} \label{alkmodbessellower}
     A^l_k (m) = - \frac{\i^{m - k - l}\re^{\i c \tau}}{c} \biggl\{
                 I_{m - k - l - 1} (-\tau) \\ + \bigl(1 + \i c(m - k - l - 1)\bigr)
		 \biggl[\re^{- \tau} + \sum_{n = m - k - l}^{k + l - m} I_n (- \tau) \biggr] \\
		 + \i c \tau \bigl(I_{m - k - l - 1} (- \tau)
		                   + I_{m - k - l} (- \tau)\bigr) \biggr\}
\end{multline}
for $m - k - l - 1 < 0$. Here we have employed the shorthand notation
\begin{equation}
     c = \cos (p_F) = \frac{h}{4J} \epp
\end{equation}

\subsection{Back transformation}
Going backwards we obtain, for all values of $t$ for which $Y$ exists,
\begin{equation}
     \PH(x) = Y^{-1} (- \i) Y(\fz (x))
              \bigl\{ \fz(x)^{(m+1) \s^z} \one_{x \in \Int {\cal C}_p}
	              + I_2 \one_{x \in \Ext {\cal C}_p} \bigr\} \epp
\end{equation}
This is the unique solution of the model matrix Riemann-Hilbert
problem $({\cal S}, {\cal C}_p, + \infty, G_\PH)$. In order to
relate it to $\Ps$, equation (\ref{defpsi}), we set
\begin{equation}
     \Ps (x) = \Upsilon (x) \PH (x)
\end{equation}
for all $x \in {\cal S} \setminus {\cal C}_p$. Then $\Upsilon$ solves
the matrix Riemann-Hilbert problem $({\cal S}, {\cal C}_p, + \infty,
G_\Upsilon)$ with jump matrix
\begin{equation} \label{defgups}
     G_\Upsilon (x) = I_2 + \PH_+ (x) \de G_\Ps (x) \PH_- (x) \epp
\end{equation}

It is a well-know fact \cite{BeCo84} that $\Upsilon$ is equivalently
determined as the solution of the singular integral equation
\begin{equation} \label{lieups}
     \Upsilon (x) = I_2 - \int_{{\cal C}_p} \frac{\rd \la}{2\p\i}
                             \ph(x,\la) \Upsilon_+ (\la)
			     \bigl(G_\Upsilon (\la) - I_2\bigr)
\end{equation}
which first determines $\Upsilon_+$ on ${\cal C}_p$ and then
$\Upsilon$ on ${\cal S} \setminus {\cal C}_p$ with correct
asymptotics by construction.  Now, since (\ref{defgph}), (\ref{defgups})
imply that $G_\Upsilon (\la) - I_2 = {\cal O} (T^{-2})$ and thus
\begin{equation}
     \| G_\Upsilon (\la) - I_2 \|_{{\rm Mat}_{2 \times 2} (L^q ({\cal C}_p))}
        = {\cal O} (T^{-2})
\end{equation}
for all $q > 1$, it follows from the results of \cite{Calderon77} that
the integral equation (\ref{lieups}) is solvable in terms of its
Neumann series. Since, furthermore, the jump matrix $G_\Upsilon$
admits analytic continuations to either ``$+$'' or ``$-$'' neighbourhoods
of ${\cal C}_p$ and similar estimates as above hold on any closed
curve in these neighbourhoods, it follows that
\begin{equation} \label{upshight}
     \Upsilon (x) = I_2 + {\cal O} (T^{-2})
\end{equation}
uniformly in ${\mathbb C}$ with a differentiable remainder. Thus, we have
arrived at the following
\begin{proposition}
The matrix Riemann-Hilbert problem $({\cal S}, {\cal C}_p, + \infty,
G_{\widetilde \chi})$ with jump matrix (\ref{transg}) has the
unique solution
\begin{multline} \label{solvechitilde}
     \widetilde \chi (x) = \bigl(\a^{(+)}\bigr)^{\s^z} \Upsilon (x)
        Y^{-1} (- \i) Y(\fz (x)) \\ \times
	\bigl\{ \fz(x)^{(m+1) \s^z} \one_{x \in \Int {\cal C}_p}
                + I_2 \one_{x \in \Ext {\cal C}_p} \bigr\}
		\a(x)^{- \s^z} \bigl(- \i \s^y\bigr)^{\one_{x \in \Int {\cal C}_p}}
\end{multline}
with $\a$ from (\ref{defalpha}), $\fz$ from (\ref{defzf}), $Y$ from (\ref{solvey}) and
$\Upsilon$ from (\ref{lieups}), which exists for all complex times $t$ close
enough to the real axis with the possible exception of finitely many values.
\end{proposition}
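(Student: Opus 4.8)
The plan is to assemble the chain of invertible transformations constructed in this section and to read $\widetilde\chi$ off by reversing them, uniqueness being supplied by Theorem~\ref{th:mrhpunique}. First I would invert the algebraic substitution (\ref{defpsi}). Using $(\i\s^y)^{-1} = - \i\s^y$ this gives
\begin{equation}
     \widetilde\chi(x) = \bigl(\a^{(+)}\bigr)^{\s^z} \Ps(x)
        \bigl(\a(x)\bigr)^{- \s^z}
	\bigl(- \i\s^y\bigr)^{\one_{x \in \Int({\cal C}_p)}} \epc
\end{equation}
and substituting first $\Ps = \Upsilon\PH$ and then the explicit back-transformed model solution $\PH(x) = Y^{-1}(-\i) Y(\fz(x))\{\fz(x)^{(m+1)\s^z}\one_{x\in\Int{\cal C}_p} + I_2 \one_{x\in\Ext{\cal C}_p}\}$ reproduces exactly (\ref{solvechitilde}).

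Next I would check that this matrix is indeed a solution. Each factor is $\p\i$-periodic and holomorphic in ${\cal S}\setminus{\cal C}_p$, so $\widetilde\chi$ is as well. The point that needs care is the normalisation. For $\Re x \to +\infty$ one has $x\in\Ext{\cal C}_p$, so the indicators vanish, $\fz(x)\to-\i\in\Int(\G_p)$ (a point of holomorphicity of $Y$), whence $Y^{-1}(-\i)Y(\fz(x))\to I_2$; meanwhile $\a(x)\to\a^{(+)}$ by (\ref{alphapm}) and $\Upsilon\to I_2$ by (\ref{upshight}), so the conjugating factors collapse and $\widetilde\chi\to I_2$, as required. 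The analogous computation at $-\infty$ yields a constant matrix $\widetilde\chi^{(-)}$. For the jump I would trace backwards: $Y$ carries the jump (\ref{jumpy}), which under $\fz$ and the dressing becomes the leading jump $G_\PH$ of (\ref{defgph}); the residual piece $\de G_\Ps$ of the true jump is absorbed by $\Upsilon$, whose jump $G_\Upsilon$ in (\ref{defgups}) is built precisely so that $\Upsilon\PH$ reproduces the full $G_\Ps$. Hence $\widetilde\chi$ satisfies the jump condition with $G_{\widetilde\chi}$ of (\ref{transg}) \emph{exactly}, not merely to leading order in $T^{-1}$. For uniqueness I would then invoke Theorem~\ref{th:mrhpunique}: the factorisation of $G_{\widetilde\chi}$ in (\ref{transg}) into a lower and an upper unitriangular matrix shows it is unimodular, so the problem admits at most one solution, and we have just exhibited one.

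Finally I would delimit the admissible times. Existence of $Y$ is equivalent to solvability of the linear systems (\ref{pqlineqs}) for the coefficients of $P_m$ and $Q_{m+1}$, i.e.\ to non-vanishing of the Cramer determinants in (\ref{cdcramer}); by Appendix~\ref{app:expq} these are non-zero for all $\tau = - 4J(t - \frac{\i}{2T})$ close enough to the real axis apart from finitely many points. Existence of $\Upsilon$ follows from the small-norm analysis already carried out: since $G_\Upsilon - I_2 = {\cal O}(T^{-2})$ in every $L^q({\cal C}_p)$, $q>1$, the singular integral equation (\ref{lieups}) is solved by its Neumann series and (\ref{upshight}) holds. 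Intersecting the two conditions gives existence and uniqueness for all $t$ near the real axis with at most finitely many exceptions. The main obstacle is the asymptotic bookkeeping of the second paragraph, namely confirming that the conjugations by $\a$, by $\fz^{(m+1)\s^z}$ and by $\i\s^y$ conspire to restore simultaneously the exact jump $G_{\widetilde\chi}$ and the normalisation $\widetilde\chi^{(+)} = I_2$; the remaining steps are a direct unwinding of the definitions together with the already established solvability statements.
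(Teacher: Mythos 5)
Your proposal is correct and follows essentially the same route as the paper: unwind the chain of transformations $\widetilde\chi \to \Ps \to \Upsilon\,\PH$ with $\PH$ expressed through the Fokas--Its--Kitaev solution $Y$, verify normalisation and the (exact) jump, invoke Theorem~\ref{th:mrhpunique} via unimodularity of the triangular factorisation of $G_{\widetilde\chi}$, and control existence through the non-vanishing of the Cramer determinants (Appendix~\ref{app:expq}) together with the small-norm Neumann-series solvability of (\ref{lieups}). The paper presents this proposition as the culmination of the preceding constructions rather than with a separate proof, and your write-up supplies exactly the checks that are left implicit there.
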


\section{High-T expansion of the transverse auto-correlation
function of the XX chain} \label{sec:corrhighT}
In this section we use propositions~\ref{prop:resolvepart} and
\ref{prop:dtlnfred} in order to calculate the leading order
high-temperature asymptotics of the transverse dynamical correlation
function of the XX chain. We shall reproduce and generalize the
Brandt-Jacoby formula (\ref{brandtjacoby}).
\begin{proposition}
The `resolvent part' of our Fredholm determinant representation
(\ref{freddetrep}) of the transverse auto-correlation function
has the large-$T$ behaviour
\begin{multline} \label{resolveasy}
     \Om - \int_{{\cal C}_p} \rd x \:
              w(x) \re^{- g(x)} E_h (x) (\id - \widehat R) E_h (x) = \\
	\i \bigl(\a^{(+)}\bigr)^2 \g_m \bigl(Q_{m+1} (- \i) P_m' (- \i)
	   - P_m (- \i) Q_{m+1}' (- \i)\bigr)
	+ {\cal O} \bigl(T^{-2}\bigr) \epp
\end{multline}
\end{proposition}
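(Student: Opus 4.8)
The plan is to feed the explicit high-temperature solution (\ref{solvechitilde}) of the transformed Riemann--Hilbert problem into the resolvent formula of proposition~\ref{prop:resolvepart}. By (\ref{respartmrhp}) the left hand side of (\ref{resolveasy}) equals
\begin{equation*}
     \Om + \2 \lim_{\Re x \rightarrow + \infty} \re^{2x} \chi^{12} (x) \epp
\end{equation*}
Since only $\widetilde\chi$ is known in closed form, I would first trade $\chi$ for $\widetilde\chi$ by means of (\ref{defchitilde}), which gives $\re^{2x} \chi^{12} (x) = \re^{2x} \widetilde\chi^{12} (x) - \chi^{11} (x) E_p (x)$. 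Using $\chi^{11} (x) \to 1$ together with the finite limit $E_p^{(+)} := \lim_{\Re x \rightarrow + \infty} E_p (x)$ (which exists because $\re^{-2x} E_p (x) = {\cal O}(\re^{-2x})$ as $\Re x \to + \infty$, and which is read off from (\ref{defep}) via $\lim_{\Re x \rightarrow + \infty} \ph(y,x) = 2$), the quantity to be computed becomes
\begin{equation*}
     \Bigl(\Om - \2 E_p^{(+)}\Bigr) + \2 \lim_{\Re x \rightarrow + \infty} \re^{2x} \widetilde\chi^{12} (x) \epp
\end{equation*}

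For the remaining limit I would insert (\ref{solvechitilde}). On $\Ext {\cal C}_p$ the indicators vanish, so $\widetilde\chi (x) = (\a^{(+)})^{\s^z} \Upsilon (x) Y^{-1} (- \i) Y (\fz (x)) \a (x)^{- \s^z}$. Since $\fz (x) = - \i \tgh(x)$ obeys $\fz (x) + \i \sim 2 \i \re^{-2x}$ as $\Re x \rightarrow + \infty$, one has $Y^{-1} (- \i) Y (\fz (x)) = I_2 + 2 \i \re^{-2x} Y^{-1} (- \i) Y' (- \i) + \dots$, while $\a (x) \to \a^{(+)}$ by (\ref{alphapm}) and $\Upsilon = I_2 + {\cal O}(T^{-2})$ with an $\re^{-2x}$ decay to $I_2$ that only contributes to the error. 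Reading off the $(1,2)$ entry, which the diagonal conjugation multiplies by $\a^{(+)} \a (x) \rightarrow (\a^{(+)})^2$, yields
\begin{equation*}
     \lim_{\Re x \rightarrow + \infty} \re^{2x} \widetilde\chi^{12} (x)
        = 2 \i \bigl(\a^{(+)}\bigr)^2 \bigl[Y^{-1} (- \i) Y' (- \i)\bigr]^{12} + {\cal O}(T^{-2}) \epp
\end{equation*}
Because the jump (\ref{jumpy}) is unimodular and $Y$ carries the normalisation (\ref{asyy}), the function $\det Y$ is entire, bounded and tends to $1$, hence $\det Y \equiv 1$; inverting (\ref{solvey}) and multiplying by $Y'$ then gives $[Y^{-1} (- \i) Y' (- \i)]^{12} = \g_m (Q_{m+1} (- \i) P_m' (- \i) - P_m (- \i) Q_{m+1}' (- \i))$. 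Substituting this back reproduces exactly the right hand side of (\ref{resolveasy}), provided the quantity $\Om - \2 E_p^{(+)}$ vanishes.

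Establishing $\Om = \2 E_p^{(+)}$ is the step I expect to be the main obstacle. I would start from the elementary identity $\Om = \2 E_h^{(+)}$, where $E_h^{(+)} := \lim_{\Re y \rightarrow + \infty} E_h (y)$; it follows from $E_h (y) = \int_{{\cal C}_h} \rd x \: \overline w (x) \re^{g(x)} \ph(x,y)$ and $\lim_{\Re y \rightarrow + \infty} \ph(x,y) = 2$. The assertion thus reduces to the equality $E_h^{(+)} = E_p^{(+)}$ of the two limits at $+\infty$. Writing both out and using $\overline w_- = - \PH_{p-}/(\p\i(1 - \re^{\eps/T}))$ on ${\cal C}_p$, this is the identity
\begin{equation*}
     \int_{{\cal C}_h} \rd y \: \overline w (y) \re^{g(y)}
        + \int_{{\cal C}_p} \rd y \: \overline w_- (y) \re^{g(y)} = 0 \epp
\end{equation*}
I would prove it by the same contour-deformation mechanism that already yields (\ref{ehep}) and (\ref{eppeh}): the integrand $\overline w \re^{g}$ is $\p\i$-periodic, decays exponentially as $\Re y \rightarrow \pm\infty$ (because $p'$, and hence $\PH_p$, decays there), and is meromorphic off ${\cal C}_p$, with all zeros of $1 - \re^{\eps/T}$ captured inside ${\cal C}_h \cup {\cal C}_p$ by construction. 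Deforming ${\cal C}_h$ onto ${\cal C}_p$ across the fundamental strip, the vertical contributions drop out by periodicity and decay, and the bookkeeping of the $\PH_p$-jump on ${\cal C}_p$ together with the residues at the enclosed poles produces the stated cancellation. The delicate part of this argument is the careful treatment of the singular points $y = 0, \i\p/2$ of $p'$ and of the branch structure of $z$; this is where the real work lies. Granting $\Om = \2 E_p^{(+)}$, the three displays above combine to give (\ref{resolveasy}).
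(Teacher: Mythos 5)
Your argument is correct and follows essentially the same route as the paper: reduce to $\lim_{\Re x\rightarrow+\infty}\re^{2x}\chi^{12}(x)$ via proposition~\ref{prop:resolvepart}, trade $\chi$ for $\widetilde\chi$ through (\ref{defchitilde}), insert the explicit solution (\ref{solvechitilde}), expand $Y(\fz(x))$ about $\fz=-\i$ using $\fz(x)+\i\sim 2\i\,\re^{-2x}$ while discarding the $\Upsilon$ correction as ${\cal O}(T^{-2})$, and evaluate $\bigl[Y^{-1}(-\i)Y'(-\i)\bigr]^{12}$ from (\ref{solvey}) using $\det Y\equiv 1$. The one step you flag as the main obstacle, $\Om=\2\lim_{\Re x\rightarrow+\infty}E_p(x)$, is simply asserted in the paper (``we further observe that\dots''); your contour-deformation argument is the intended justification and is in fact unproblematic, because the singular points $0$ and $\i\p/2$ of $p'$ that worry you lie inside ${\cal C}_h$ and ${\cal C}_p$ respectively, so the density $\overline w\,\re^{g}$ is analytic and exponentially decaying in the whole region swept by the deformation and the two integrals cancel with no residue contributions.
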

\begin{proof}
Using that $\lim_{\Re x \rightarrow + \infty} \widetilde \chi (x) = I_2$
we first obtain
\begin{multline}
      \lim_{\Re x \rightarrow + \infty} \re^{2x} \chi^{12} (x) =
         - \lim_{\Re x \rightarrow + \infty} E_p (x) \\
	 + \lim_{\Re x \rightarrow + \infty}
	   \bigl\{ \re^{2x} \bigl(\a^{(+)}\bigr)^{\s^z} \Upsilon (x)
	           Y^{-1} (- \i) Y(\fz(x)) \a(x)^{- \s^z} \bigr\}^{12}
\end{multline}
In the second term on the right hand side we use that
\begin{equation}
     \Upsilon (x) = I_2 + \Upsilon^{(+)}_1 \re^{- 2x}
                    + {\cal O} \bigl(\re^{- 4x}\bigr) \epc \qd
     \fz (x) = - \i + 2 \i \, \re^{- 2x} + {\cal O} \bigl(\re^{- 4x}\bigr)
\end{equation}
and
\begin{equation}
      \bigl[ \bigl(\a^{(+)}\bigr)^{\s^z} \Upsilon^{(+)}_1
         \bigl(\a^{(+)}\bigr)^{- \s^z} \bigr]^{12} = {\cal O} \bigl(T^{- 2}\bigr) \epp
\end{equation}
We further observe that
\begin{equation}
     \lim_{\Re x \rightarrow + \infty} E_p (x) = 2 \Om \epp
\end{equation}
Then
\begin{equation}
      \lim_{\Re x \rightarrow + \infty} \re^{2x} \chi^{12} (x) =
         - 2 \Om + 2\i \bigl(\a^{(+)}\bigr)^2 \bigl(Y^{-1} (- \i) Y'(- \i)\bigr)^{12}
	 + {\cal O} \bigl(T^{- 2}\bigr)
\end{equation}
and the claim follows with equation (\ref{solvey}) and proposition~\ref{prop:resolvepart}.
\end{proof}
In order to calculate the Fredholm determinant factor in (\ref{freddetrep})
we define
\begin{equation} \label{defds}
     D(x) = e^1_1 \one_{x \in \Int({\cal C}_p)} +
            e^2_2 \one_{x \in \Ext({\cal C}_p)} \epc \qd
     s(x) = \one_{x \in \Ext({\cal C}_p)} - \one_{x \in \Int({\cal C}_p)} \epp
\end{equation}
We start by deriving the high-temperature asymptotics of the integrand
in (\ref{dtlnfredmrhp}).
\begin{proposition} \label{prop:trchischiht}
For high temperatures the integrand in (\ref{dtlnfredmrhp}) behaves as
\begin{multline} \label{dtlnintegrand}
     \tr\bigl\{ \chi' (x) \bigl(e_2^2 + \re^{- 2x} E_p (x) e_1^2\bigr)
                \chi^{-1} (x) \bigr\} = \frac{\a' (x)}{\a(x)} s(x) \\[1ex]
		+ \frac{2 (m+1)}{\sh(2x)} \one_{x \in \Int {\cal C}_p}
		+ \fz' (x) \tr \bigl\{D(x) Y^{-1} (\fz(x)) Y'(\fz(x))\bigr\}
		     + {\cal O} \bigl(T^{-2}\bigr) \epp
\end{multline}
\end{proposition}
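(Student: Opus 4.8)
The entire statement is an exact algebraic rearrangement of the factorised solution (\ref{solvechitilde}), supplemented by the single analytic input (\ref{upshight}); accordingly I would keep the purely algebraic bookkeeping separate from the one place where the high-temperature estimate is really used. I begin by writing $\chi = \widetilde\chi\, N^{-1}$ with $N(x) = I_2 + \re^{-2x} E_p(x)\, e_1^2$ (this is (\ref{defchitilde})), and abbreviating the inserted matrix by $M(x) = e_2^2 + \re^{-2x} E_p(x)\, e_1^2$. The elementary identities
\begin{equation*}
     M N = M \epc \qd N^{-1} M N = e_2^2 \epc \qd e_1^2 M = e_1^2
\end{equation*}
then do all the work: substituting $\chi' = \widetilde\chi'\, N^{-1} + \widetilde\chi\,(N^{-1})'$ and $\chi^{-1} = N\,\widetilde\chi^{-1}$ and using cyclicity of the trace, the first two identities collapse the principal contribution to $\tr\{e_2^2\,\widetilde\chi^{-1}\widetilde\chi'\}$, while the leftover equals $\tr\{(N^{-1})' M N\}$, a multiple of the strictly upper-triangular $e_1^2$ and hence traceless. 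Thus the problem reduces to the $(2,2)$ entry of $\widetilde\chi^{-1}\widetilde\chi'$.

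Into this I insert the exact solution (\ref{solvechitilde}), written schematically as $\widetilde\chi = A\,\Upsilon\,\PH\, B\, J$ with constant outer factors $A = (\a^{(+)})^{\s^z}$, $J = (-\i\s^y)^{\one_{x \in \Int({\cal C}_p)}}$, diagonal factor $B = \a^{-\s^z}$, and $\PH = Y^{-1}(-\i)\,Y(\fz)\,K$ where $K = \fz^{(m+1)\s^z}\one_{x \in \Int({\cal C}_p)} + I_2\,\one_{x \in \Ext({\cal C}_p)}$. The product rule telescopes the logarithmic derivative into three surviving pieces, the constant factors $A$ and $J$ contributing only through conjugation,
\begin{equation*}
     \widetilde\chi^{-1}\widetilde\chi' =
        J^{-1} B^{-1}\PH^{-1}\bigl(\Upsilon^{-1}\Upsilon'\bigr)\PH B J
      + J^{-1} B^{-1}\bigl(\PH^{-1}\PH'\bigr) B J
      + J^{-1}\bigl(B^{-1} B'\bigr) J \epc
\end{equation*}
and I further split, by the chain rule, $\PH^{-1}\PH' = K^{-1} Y^{-1}(\fz)\, Y'(\fz)\,\fz'\, K + K^{-1}K'$.

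Each surviving piece is then conjugated by $J$ and its $(2,2)$ entry taken. The diagonal piece gives $B^{-1}B' = -(\a'/\a)\,\s^z$, and since $J^{-1}\s^z J = s(x)\,\s^z$ (using $\s^y\s^z\s^y = -\s^z$ inside ${\cal C}_p$ and $J = I_2$ outside) its $(2,2)$ entry is $(\a'/\a)\,s(x)$, the first advertised term. In $\PH^{-1}\PH'$, the part $K^{-1}K'$ vanishes in $\Ext({\cal C}_p)$ and equals $(m+1)(\fz'/\fz)\,\s^z = \frac{2(m+1)}{\sh(2x)}\,\s^z$ in $\Int({\cal C}_p)$, whose conjugated $(2,2)$ entry reproduces $\frac{2(m+1)}{\sh(2x)}\one_{x \in \Int({\cal C}_p)}$; the remaining $Y$-part yields $\fz'\,\tr\{D(x)\,Y^{-1}(\fz)\,Y'(\fz)\}$ once one checks that conjugating $e_2^2$ by the diagonal $B,K$ and by $J$ gives exactly $D(x)$, namely $e_2^2$ in $\Ext({\cal C}_p)$ and $J e_2^2 J^{-1} = e_1^1$ in $\Int({\cal C}_p)$.

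The only genuinely analytic step, and the main obstacle, is to show that the $\Upsilon$-piece is ${\cal O}(T^{-2})$. For this I would invoke (\ref{upshight}): since the remainder in $\Upsilon = I_2 + {\cal O}(T^{-2})$ is differentiable, one has $\Upsilon^{-1}\Upsilon' = {\cal O}(T^{-2})$, and the conjugating matrices $\PH$ and $B$ stay bounded as $T \to +\infty$ because they depend on the temperature only through $\tau = -4J(t - \i/2T)$ and through $\a^{(\pm)}$, all of which have finite limits. This boundedness, together with the uniformity of the estimate in (\ref{upshight}), is the single point where the preceding asymptotic solution of the Riemann-Hilbert problem is actually used; everything else is exact. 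Collecting the three entries with this error yields (\ref{dtlnintegrand}).
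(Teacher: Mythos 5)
Your proposal is correct and follows essentially the same route as the paper: the paper's proof likewise first reduces $\tr\{\chi'(e_2^2+\re^{-2x}E_p\,e_1^2)\chi^{-1}\}$ to $\tr\{\widetilde\chi'\,e_2^2\,\widetilde\chi^{-1}\}$ via the transformation (\ref{defchitilde}) and then inserts (\ref{upshight}) and (\ref{solvechitilde}), which is exactly your decomposition carried out in detail. Your explicit matrix-unit identities and the term-by-term conjugation bookkeeping simply make precise the steps the paper leaves as ``easily checked''.
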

\begin{proof}
It can be easily checked that
\begin{equation}
     \tr\bigl\{ \chi' (x) \bigl(e_2^2 + \re^{- 2x} E_p (x) e_1^2\bigr)
                \chi^{-1} (x) \bigr\} = 
     \tr\bigl\{ \widetilde \chi' (x) e_2^2 \widetilde \chi^{-1} (x) \bigr\} \epp
\end{equation}
Inserting (\ref{upshight}) and (\ref{solvechitilde}) on the right hand
side and using the definitions (\ref{defds}) we have proved the claim.
\end{proof}

Proposition~\ref{prop:trchischiht} can be used to obtain the high-temperature
asymptotics of the Fredholm determinant contribution to the transverse
auto-correlation function by means of proposition~\ref{prop:dtlnfred}.
Employing the short-hand notation
\begin{equation} \label{deffg}
     F_m (x) = C_{\G_p} [M_m P_m] (x) \epc \qd
     G_{m+1} (x) = C_{\G_p} [M_m Q_{m+1}] (x) \epc
\end{equation}
in (\ref{solvey}) we can formulate
\begin{proposition} \label{prop:dtlndetasy}
The logarithmic derivative of the determinant part of the Fredholm determinant
representation (\ref{freddetrep}) of the transverse auto-correlation function
has the large-$T$ asymptotic behaviour
\begin{multline} \label{dtlnfredasy}
     \6_t \ln\Bigl(\det_{{\cal C}_p} \bigl(\id + \widehat V\bigr)\Bigr) =
        - 8 J^2 t - 4J + \frac{\i 4 J^2}{T} + \6_t \ln \bigl( \g_m^{-1} \re^{\i h t} \bigr)
	- 2 \i J c_m^{(m-1)} \\
	+ 2 \i J \g_m \bigl[F_m (0)\bigl(P_m (0) - Q_{m+1}' (0)\bigr)
	                    + G_{m+1} (0) P_m' (0)\bigr]
	+ {\cal O} \bigl(T^{-2}\bigr) \epp
\end{multline}
\end{proposition}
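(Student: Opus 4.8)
The plan is to substitute the high-temperature form of the integrand, Proposition~\ref{prop:trchischiht}, into the exact identity (\ref{dtlnfredmrhp}) of Proposition~\ref{prop:dtlnfred}, and then to evaluate the resulting contour integral by residues. First I would observe that the remainder in (\ref{dtlnintegrand}) is ${\cal O}(T^{-2})$ uniformly on the fixed contour $\g({\cal C}_p)$, on which $\eps$ stays bounded; it therefore contributes only ${\cal O}(T^{-2})$ to the integral and may be discarded. This reduces the statement to the evaluation of
\begin{equation*}
     \int_{\g({\cal C}_p)} \frac{\rd z}{2\p}\, \eps(z) \biggl[
        \frac{\a'(z)}{\a(z)} s(z)
        + \frac{2(m+1)}{\sh(2z)} \one_{z \in \Int {\cal C}_p}
        + \fz'(z) \tr\bigl\{ D(z) Y^{-1}(\fz(z)) Y'(\fz(z)) \bigr\} \biggr] \epc
\end{equation*}
which is a pure residue computation.

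Next I would locate the relevant singularities. By (\ref{pprimee}), $\eps(z)=h+2J\,p'(z)$ has simple poles (modulo $\p\i$) exactly at $z=0$ and $z=\i\p/2$, and under $\fz$ these map to the points $0$ and $\infty$ enclosed by $\G_p$. Since $\fz(0)=0\in\Int\G_p$ forces $0\in\Ext{\cal C}_p$ while $\fz(\i\p/2)=\infty\in\Ext\G_p$ forces $\i\p/2\in\Int{\cal C}_p$, the whole integrand is meromorphic on the cylinder ${\cal S}$ with the only relevant singularities at these two points; the apparent poles of $\a'/\a$ at the Fermi rapidities lie in the wrong component and do not contribute. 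The integral then equals the sum of the residues at $z=0$ and $z=\i\p/2$.

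At $z=0$ I would use the $\Ext$ branch ($s=1$, $D=e_2^2$, $\one_{\Int}=0$), the explicit form (\ref{defalpha}) of $\a$, and $\fz'(0)=-\i$. There $\a'/\a$ is regular, so the only pole comes from $\eps$; inserting the solution (\ref{solvey}), using that $Y$ is unimodular, and evaluating $(Y^{-1}Y')^{22}(0)$ together with the definitions (\ref{deffg}) produces the model term $2\i J\g_m[F_m(0)(P_m(0)-Q_{m+1}'(0))+G_{m+1}(0)P_m'(0)]$, while the $t$-dependent part of $\a'/\a$ supplies part of the elementary polynomial in $t$. At $z=\i\p/2$ I would use the $\Int$ branch ($s=-1$, $D=e_1^1$); here the decisive simplification is $\fz\sim-\i/(z-\i\p/2)$, so the leading $(1,1)$-contribution of $\fz'\,\tr\{D\,Y^{-1}Y'\}$ coming from (\ref{asyy}), namely $-(m+1)\6_z\ln\fz=-2(m+1)/\sh(2z)$, cancels the explicit $2(m+1)/\sh(2z)\,\one_{\Int}$ term. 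Collecting both residues, the $t$-dependent pieces of $\a'/\a$ combine with the $\eps$-residues into $-8J^2t+\i4J^2/T$ (the pairing becoming transparent once one writes $1/2T+\i t=-\i\tau/4J$), the $z$-independent pieces into $-4J$, and the $\cth$ evaluations at the Fermi points cancel between the two poles (using $\i\p/2-\la_F^-=\la_F^+$). The surviving subleading data at $z=\i\p/2$ assemble, after recognising a total time derivative, into $\6_t\ln(\g_m^{-1}\re^{\i h t})-2\i J c_m^{(m-1)}$, yielding (\ref{dtlnfredasy}).

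The hard part will be the residue bookkeeping at $z=\i\p/2$. Because $\fz\to\infty$ there, $Y(\fz(z))$ develops a pole of order $m+1$, so one must expand both $Y$ and $Y^{-1}$ to subleading order in $1/\fz$, keep track of the cancellation of the $(m+1)/\sh(2z)$ singularity against the model term, and correctly isolate the finite part. Identifying the surviving combination as $\6_t\ln(\g_m^{-1}\re^{\i h t})-2\i J c_m^{(m-1)}$ requires the orthogonality relations (\ref{ortho}), the expansion (\ref{pqcoeff}) relating the $1/\fz$ coefficient of $Y$ to $c_m^{(m-1)}$ and $\g_m$, and the $t$-dependence of $M_m$ through $\tau$. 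By contrast, the uniform ${\cal O}(T^{-2})$ estimate, the deformations on the $\p\i$-periodic cylinder, and the residue at $z=0$ I expect to be routine.
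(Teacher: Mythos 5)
Your plan coincides with the paper's proof: insert the high-temperature integrand of Proposition~\ref{prop:trchischiht} into (\ref{dtlnfredmrhp}), discard the uniform ${\cal O}(T^{-2})$ remainder, and evaluate the exterior and interior pieces of the contour integral by residues at the two poles of $\eps$, namely $x=0$ and $x=\i\p/2$ (alias $z=0$ and $z=\infty$ after the map $\fz$), the delicate step being the subleading expansion of $Y^{-1}Y'$ at infinity and the use of the orthogonality relations (\ref{ortho}) to recognise the total derivative $\6_t \ln\bigl(\g_m^{-1}\re^{\i h t}\bigr)$. The only (harmless) inaccuracy is one of bookkeeping: the piece $2\i J\g_m F_m(0)P_m(0)$ of the final ``model term'' actually arises from the $z=\infty$ residue, via $\k_m=\int_{\G_p}\frac{\rd w}{2\p\i}M_m(w)w^{m+1}P_m(w)$ and the identity (\ref{loggammaid}) converting $2\i J\g_m\k_m$ into the logarithmic $t$-derivative, not from $(Y^{-1}Y')^{22}(0)$.
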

\begin{proof}
We insert (\ref{dtlnintegrand}) into (\ref{dtlnfredmrhp}).
Three integrals corresponding to the three terms on the right hand side of
(\ref{dtlnintegrand}) remain to be calculated. The integration contour
$\g ({\cal C}_p)$ closely surrounds the particle contour ${\cal C}_p$.
We decompose it into a sum of an exterior part ${\cal C}_{\rm ext}$ and
an interior part ${\cal C}_{\rm int}$ in such a way that $\g ({\cal C}_p)
= {\cal C}_{\rm ext} - {\cal C}_{\rm int}$.

The first integral is
\begin{multline} \label{firstintdtlnfred}
     I_1 = \int_{\g ({\cal C}_p)} \frac{\rd x}{2\p} \: \eps(x) s(x) \6_x \ln (\a(x)) =
     \int_{{\cal C}_{\rm ext}} \frac{\rd x}{2\p} \: \eps(x)
          \biggl[\cth(x - \la_F^+) - \tgh(x) - \frac{\tau}{2 \ch^2 (x)}\biggr] \\[1ex]
     + \int_{{\cal C}_{\rm int}} \frac{\rd x}{2\p} \: \eps(x)
          \biggl[2 + \cth(x - \la_F^-) + \cth(x) + \frac{\tau}{2 \sh^2 (x)}\biggr] \\[1ex]
     = - 8 J^2 t - 4J + \frac{\i 4 J^2}{T}.
\end{multline}
The integrals can be easily calculated by means of the residue theorem, since
the term in square brackets under the integral over ${\cal C}_{\rm ext}$ is
holomorphic outside ${\cal C}_p$, $\p \i$-periodic  and ${\cal O} (\re^{-2|\Re z|})$,
while the term in square brackets under the integral over ${\cal C}_{\rm int}$
is holomorphic inside ${\cal C}_p$.

The second integral is
\begin{equation}
     I_2 = - \int_{{\cal C}_{\rm int}} \frac{\rd x}{2\p} \: \eps(x) \frac{2 (m+1)}{\sh(2x)}
        = \i h (m+1) \epp
\end{equation}
For the third integral we perform the change of variables $x \mapsto z = \fz(x)$.
Then ${\cal C}_{\rm out} \mapsto \G_{\rm int}$, ${\cal C}_{\rm int} \mapsto \G_{\rm out}$,
where $\G_{\rm int}$ and $\G_{\rm out}$ are simple clockwise oriented contours, and
\begin{multline}
     I_3 = \int_{\g ({\cal C}_p)} \frac{\rd x}{2\p} \: \eps(x)
		 \fz' (x) \tr \bigl\{D(x) Y^{-1} (\fz(x)) Y'(\fz(x))\bigr\} \\
         = \int_{\G_{\rm int}} \frac{\rd z}{2\p} \: \Bigl[h - 2J \Bigl(z + \frac{1}{z}\Bigr)\Bigr]
                 \tr \bigl\{e_2^2 Y^{-1} (z) Y'(z)\bigr\} \\
         - \int_{\G_{\rm out}} \frac{\rd z}{2\p} \: \Bigl[h - 2J \Bigl(z + \frac{1}{z}\Bigr)\Bigr]
                 \tr \bigl\{e_1^1 Y^{-1} (z) Y'(z)\bigr\} \epp
\end{multline}
Here the integrals on the right hand side can be calculated by taking the
residues at $z = 0$ and at $z = \infty$. Taking into account that
$Y^{-1} (z) Y'(z)$ is holomorphic inside $\G_{\rm int}$ and that, being
a logarithmic derivative,
\begin{equation} \label{eppyypasy}
     \tr \bigl\{e_1^1 Y^{-1} (1/z) Y'(1/z)\bigr\} = \be_1 z + \be_2 z^2 + {\cal O} (z^3)
\end{equation}
for $z \rightarrow 0$, we obtain
\begin{equation}
     I_3 = 2 \i J \tr \bigl\{e_2^2 Y^{-1} (0) Y'(0)\bigr\} + \i h \be_1 - 2 \i J \be_2 \epp
\end{equation}

Hence, the trace on the right hand side and the coefficients $\be_1$, $\be_2$
remain to be calculated. Writing
\begin{equation}
     Y(z) = \begin{pmatrix} a & b \\ c & d \end{pmatrix}
\end{equation}
for short and using the unimodularity of $Y$ we see that
\begin{equation}
     Y^{-1} (z) Y'(z) = \begin{pmatrix}
			   a' d - b c' & \ast \\ \ast & a d' - b' c
                        \end{pmatrix} \epp
\end{equation}
It follows that
\begin{equation} \label{traceless}
     \tr\{Y^{-1} (z) Y'(z)\} = \6_z (ad - bc) = \6_z \det Y(z) = 0
\end{equation}
and
\begin{multline}
     \tr \bigl\{e_2^2 Y^{-1} (0) Y'(0)\bigr\} = (a d' - b' c)|_{z = 0} \\
        = - \g_m \bigl(F_m(0) Q_{m+1}'(0) - G_{m+1}(0) P_m'(0)\bigr) \epp
\end{multline}

The asymptotic behaviour of $P_m$, $Q_{m+1}$, $F_m$ and $G_{m+1}$ for
large $|z|$ can be read of from (\ref{pqcoeff}), (\ref{deffg}), implying
first of all that
\begin{multline} \label{eppyyp}
     \tr \bigl\{e_1^1 Y^{-1} (1/z) Y'(1/z)\bigr\} = (a' d - b c')|_{z \rightarrow 1/z} \\
        = - \g_m F_m' (1/z) Q_{m+1} (1/z) + {\cal O} (z^3)
\end{multline}
and further
\begin{subequations}
\label{qfzinv}
\begin{align}
     Q_{m+1} (1/z) & = z^{- m - 1} + d_{m+1}^{(m)} z^{-m} + {\cal O} (z^{- m + 1}) \epc \\[1ex]
     F_m (1/z) & = - \g_m^{-1} z^{m + 1} - \k_m z^{m + 2} + {\cal O} (z^{m + 3}) \epc
\end{align}
\end{subequations}
where
\begin{equation}
     \k_m = \int_{\G_p} \frac{\rd w}{2 \p \i} M_m (w) w^{m+1} P_m (w) \epp
\end{equation}
Inserting (\ref{qfzinv}) into (\ref{eppyyp}) and comparing with (\ref{eppyypasy})
we conclude that
\begin{equation}
     \be_1 = - m - 1 \epc \qd
     \be_2 = - \bigl[(m+1) d_{m+1}^{(m)} + (m+2) \g_m \k_m\bigr] \epp
\end{equation}
Thus,
\begin{multline} \label{ithreefirst}
     I_3 = - 2 \i \g_m \bigl(F_m(0) Q_{m+1}'(0) - G_{m+1}(0) P_m'(0)\bigr)
           - \i h (m + 1) \\
	   + 2 \i J \bigl[(m+1) d_{m+1}^{(m)} + (m+2) \g_m \k_m\bigr] \epp
\end{multline}

Alternatively, due to (\ref{traceless}), we may repeat the calculation
replacing $(a' d - b c')|_{z \rightarrow 1/z}$ by
$- (a d' - b' c)|_{z \rightarrow 1/z}$ in (\ref{eppyyp}). We obtain
\begin{multline} \label{ithreesecond}
     I_3 = - 2 \i \g_m \bigl(F_m(0) Q_{m+1}'(0) - G_{m+1}(0) P_m'(0)\bigr)
           - \i h (m + 1) \\
	   + 2 \i J \bigl[m d_{m+1}^{(m)} + (m+1) \g_m \k_m\bigr] \epp
\end{multline}
Comparing (\ref{ithreefirst}) and (\ref{ithreesecond}) we conclude that
\begin{equation}
     d_{m+1}^{(m)} + \g_m \k_m = 0 \epp
\end{equation}
It follows that
\begin{multline} \label{ithreethird}
     I_3 = - 2 \i \g_m \bigl(F_m(0) Q_{m+1}'(0) - G_{m+1}(0) P_m'(0)\bigr)
           - \i h (m + 1) + 2 \i J \g_m \k_m \epp
\end{multline}
This can be slightly rewritten noticing that
\begin{align} \label{loggammaid}
     \re^{- \i h t} \6_t \re^{\i h t} \g_m^{-1} & =
        \frac{\i h}{\g_m} + \6_t \int_{\G_p} \frac{\rd w}{2 \p \i} M_m (w) w^m P_m (w) \notag \\[1ex]
        & = \frac{\i h}{\g_m} +
	    \int_{\G_p} \frac{\rd w}{2 \p \i} \bigl((\6_t M_m (w)) P_m^2 (w)
	                + 2 M_m (w) P_m (w) \6_t P_m (w) \bigr) \notag \\[1ex]
        & = 2 \i J \int_{\G_p} \frac{\rd w}{2 \p \i} M_m (w)
	                       \biggl(w - \frac{1}{w}\biggr) P_m^2 (w) \notag \\[1ex]
        & = 2 \i J \k_m + 2 \i J \frac{c_m^{(m-1)}}{\g_m} - 2 \i J P_m (0) F_m (0) \epp
\end{align}
Here we have used equation (\ref{orthop}) in the second, third and fourth
equation. Inserting (\ref{loggammaid}) into (\ref{ithreethird}) and adding
up $I_1$, $I_2$ and $I_3$ we arrive at the right hand side of (\ref{dtlnfredasy}).
\end{proof}

Propositions \ref{prop:trchischiht} and \ref{prop:dtlndetasy} combined
with the original form-factor series (\ref{ffseriesxxtrans}) are enough
to obtain the leading high-temperature asymptotics of the dynamical
correlation functions~(\ref{defcorrmp}). The result is most naturally
expressed in terms of the variable $\tau = - 4J \bigl(t - \frac{\i}{2T}\bigr)$
introduced above.

\begin{theorem} \label{theo:main}
In the high-$T$ limit the transverse dynamical correlation function of the
XX-chain behaves as
\begin{multline} \label{xxtransht}
     \bigl\<\s_1^- \s_{m+1}^+ (t)\bigr\>_T = \2 \biggl(- \frac{J}{T}\biggr)^m
        \exp\bigg\{\i c \tau  - \frac{\tau^2}{4}
	           + \int_0^\tau \rd \tau' \: u_m (\tau') \biggr\} \\[1ex] \times
	\frac{Q_{m+1}(- \i) P_m' (- \i) - P_m (- \i) Q_{m+1}' (- \i)}
	     {\bigl(Q_{m+1}(- \i) P_m' (- \i)
	            - P_m (- \i) Q_{m+1}' (- \i)\bigr)\bigr|_{\tau = 0}}
        \bigl(1 + {\cal O} (T^{-2}) \bigr) \epc
\end{multline}
where
\begin{equation} \label{defu}
     u_m (\tau) = \frac{\i}{2}
                  \bigl[c_m^{(m-1)} - \g_m \bigl\{F_m (0) (P_m (0) - Q_{m+1}'(0))
		        + G_{m+1} (0) P_m'(0) \bigr\} \bigr] \epp
\end{equation}
\end{theorem}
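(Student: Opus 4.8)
The plan is to combine the Fredholm-determinant representation of Theorem~\ref{th:freddetrep} with the two high-temperature results already in hand: the resolvent-part asymptotics~(\ref{resolveasy}) and the asymptotics~(\ref{dtlnfredasy}) of Proposition~\ref{prop:dtlndetasy} for $\partial_t\ln\det_{\mathcal{C}_p}(\id+\widehat V)$. Reading~(\ref{freddetrep}) as the product of the prefactor $(-1)^m\mathcal{F}(m)$, the resolvent bracket, and the Fredholm determinant, I would first substitute~(\ref{resolveasy}), turning the bracket into $\i(\a^{(+)})^2\g_m W(\tau)+\mathcal{O}(T^{-2})$ with $W(\tau)=Q_{m+1}(-\i)P_m'(-\i)-P_m(-\i)Q_{m+1}'(-\i)$. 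The determinant itself is recovered by integrating~(\ref{dtlnfredasy}) in time from the reference point $\tau=0$ (i.e. $t=\i/2T$), giving $\det_{\mathcal{C}_p}(\id+\widehat V)=\det_{\mathcal{C}_p}(\id+\widehat V)\big|_{\tau=0}\exp\{\int\cdots\}$. Throughout I pass to $\tau=-4J(t-\tfrac{\i}{2T})$, using $\partial_t=-4J\partial_\tau$.

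The core of the argument is the bookkeeping of the exponential factors. By the definition~(\ref{defu}) of $u_m$, the combination $-2\i J c_m^{(m-1)}+2\i J\g_m\{\cdots\}$ in~(\ref{dtlnfredasy}) equals $-4Ju_m(\tau)$, which after division by $-4J$ integrates to $\int_0^\tau\rd\tau'\,u_m(\tau')$. The remaining terms $-8J^2t-4J+\i4J^2/T$ collapse to $2J\tau-4J$ under the change of variables and integrate to $-\tau^2/4+\tau$, while $\partial_t\ln(\g_m^{-1}\re^{\i ht})$ integrates to $\ln(\g_m(0)/\g_m(\tau))-\i c\tau$ once one inserts $\re^{\i ht}=\re^{-\i c\tau-h/2T}$. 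Crucially, the $\g_m(\tau)$ of the resolvent bracket cancels the $1/\g_m(\tau)$ from this integration, leaving only the constant $\g_m(0)$. Finally, from~(\ref{alphapm}) and $t=-\tau/4J+\i/2T$ one extracts the explicit form $(\a^{(+)})^2=\re^{-2\la_F^+}\re^{2\i c\tau-\tau}$; combining its $\re^{2\i c\tau-\tau}$ with the $\re^{-\i c\tau+\tau}$ from the determinant leaves exactly $\exp\{\i c\tau-\tau^2/4+\int_0^\tau u_m\}$. At this stage
\[
     \bigl\<\s_1^-\s_{m+1}^+(t)\bigr\>_T = K\,W(\tau)\,\exp\Bigl\{\i c\tau-\tfrac{\tau^2}{4}+\int_0^\tau\rd\tau'\,u_m(\tau')\Bigr\}\bigl(1+\mathcal{O}(T^{-2})\bigr),
\]
with the $\tau$-independent constant $K=(-1)^m\mathcal{F}(m)\,\i\,\re^{-2\la_F^+}\g_m(0)\,\det_{\mathcal{C}_p}(\id+\widehat V)\big|_{\tau=0}$.

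It then remains to pin down $K$, and this is where the real work lies: I must show $K\,W(0)=\tfrac12(-J/T)^m(1+\mathcal{O}(T^{-2}))$, equivalently that the correlation function at $\tau=0$ equals $\tfrac12(-J/T)^m$ to this order. The Bessel-function expressions~(\ref{alkmodbesselupper}), (\ref{alkmodbessellower}) collapse at $\tau=0$ because $I_n(0)=\delta_{n,0}$, reducing the $A^l_k(m)$ to explicit elementary entries in $c$, from which $\g_m(0)$ and $W(0)$ can be read off. The genuinely delicate piece is the high-temperature expansion of $\mathcal{F}(m)$: one uses $z(\la)=\frac{1}{2\pi\i}(\ln(2T)-\ln\eps(\la))+\mathcal{O}(T^{-2})$, so that $-m\int_{\mathcal{C}_h}\frac{\rd\la}{2\pi}p'(\la)\ln|\cth(\eps/2T)|$ is responsible for the overall power $T^{-m}$, and this must be shown to combine with the $\re^{-\i m p_F}$ factor, the double contour integral of $z$ against $\cth'$, and the constants $\g_m(0),\re^{-2\la_F^+},\det_{\mathcal{C}_p}(\id+\widehat V)|_{\tau=0}$ to yield exactly $\tfrac12(-J/T)^m/W(0)$. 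I expect this constant-matching, in particular the careful evaluation of the divergent-looking $\ln T$ contributions in $\mathcal{F}(m)$ through the residue structure of $p'$ and $\eps$ relative to $\mathcal{C}_h$, to be the main obstacle; the case $m=0$ furnishes a consistency check, since there $K\,W(0)=\tfrac12$ must reproduce the Brandt--Jacoby value~(\ref{brandtjacoby}) at $\tau=0$.
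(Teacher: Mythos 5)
Your handling of the $\tau$-dependence is exactly the paper's: substitute (\ref{resolveasy}) into (\ref{freddetrep}), integrate (\ref{dtlnfredasy}) from $\tau=0$, and track the exponential factors; your bookkeeping ($-8J^2t-4J+\i 4J^2/T \to 2J\tau-4J \to \tau-\tau^2/4$ after division by $-4J$, the cancellation of $\g_m(\tau)$ against the $\g_m^{-1}(\tau)$ coming from integrating $\6_t\ln(\g_m^{-1}\re^{\i h t})$, and $(\a^{(+)})^2=\re^{-2\la_F^+ +2\i c\tau-\tau}$) is all correct and matches the paper's intermediate formulae.

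The gap is in fixing the constant $K$. You correctly reduce the remaining task to showing that the correlation function at $\tau=0$, i.e.\ at $t=\frac{\i}{2T}$, equals $\2(-J/T)^m\bigl(1+{\cal O}(T^{-2})\bigr)$, but you then propose to establish this by computing every factor of $K$ separately --- ${\cal F}(m)$, $\g_m(0)$, $\re^{-2\la_F^+}$ and $\det_{{\cal C}_p}(\id+\widehat V)\big|_{\tau=0}$ --- and matching against $W(0)$. This route is not carried out (you flag it as ``the main obstacle''), and as stated it is missing an essential ingredient: the differential identity (\ref{dtlnfredasy}) only gives the $\tau$-dependence of the Fredholm determinant, not its value at $\tau=0$, and you offer no independent way to evaluate $\det_{{\cal C}_p}(\id+\widehat V)\big|_{\tau=0}$ in closed form. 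The paper sidesteps all of this: it sets $t=\frac{\i}{2T}$ directly in the original thermal form factor series (\ref{ffseriesxxtrans}) and expands each term to leading order in $1/T$, which yields $\bigl\<\s_1^-\s_{m+1}^+(\frac{\i}{2T})\bigr\>_T=\2(-J/T)^m\bigl(1+{\cal O}(T^{-2})\bigr)$ in one stroke. Comparing this with the derived representation evaluated at $\tau=0$ fixes the entire product $K\,W(0)$ at once, without ever computing the individual constants. Redirecting your ``equivalently\dots'' observation to a direct expansion of the series at $t=\frac{\i}{2T}$, rather than to an unwinding of $K$, is the step needed to close the proof.
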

\begin{proof}
Expressing equation (\ref{dtlnfredasy}) in terms of $\tau$ and the
$\tau$-derivative and integrating over from $0$ to some fixed value
of $\tau$ we obtain
\begin{multline}
     \det_{{\cal C}_p} \bigl(\id + \widehat V\bigr) =
        \bigl[ \det_{{\cal C}_p} \bigl(\id + \widehat V\bigr) \g_m \bigr]_{\tau = 0} \\ \times
	\g_m^{-1} \exp\bigg\{- \i c \tau + \tau - \frac{\tau^2}{4}
	                     + \int_0^\tau \rd \tau' \: u_m (\tau') \biggr\}
        \bigl(1 + {\cal O} (T^{-2}) \bigr) \epc
\end{multline}
where the definition (\ref{defu}) was used as well. Inserting this
expression together with (\ref{resolveasy}) and (\ref{alphapm}) into
(\ref{freddetrep}) we obtain
\begin{multline} \label{xxtranshtafb}
     \bigl\<\s_1^- \s_{m+1}^+ (t)\bigr\>_T = \\ \i (-1)^m {\cal F} (m)
        \bigl[ \det_{{\cal C}_p} \bigl(\id + \widehat V\bigr) \g_m \bigr]_{\tau = 0}
	\exp\bigg\{- 2 \la_F^+ + \i c \tau - \frac{\tau^2}{4}
	           + \int_0^\tau \rd \tau' \: u_m (\tau') \biggr\} \\[1ex] \times
	\bigl(Q_{m+1}(- \i) P_m' (- \i) - P_m (- \i) Q_{m+1}' (- \i)\bigr)
        \bigl(1 + {\cal O} (T^{-2}) \bigr) \epp
\end{multline}
For $t = \frac{\i}{2T}$ this becomes
\begin{multline} \label{xxtranshtafbtnull}
     \bigl\<\s_1^- \s_{m+1}^+ \bigl(\tst{\frac{\i}{2T}}\bigr)\bigr\>_T
        = \i \re^{- 2 \la_F^+} (-1)^m {\cal F} (m)
        \bigl[ \det_{{\cal C}_p} \bigl(\id + \widehat V\bigr) \g_m \bigr]_{\tau = 0}
	   \\[1ex] \times
	\bigl(Q_{m+1}(- \i) P_m' (- \i) - P_m (- \i) Q_{m+1}' (- \i)\bigr)\bigr|_{\tau = 0}
        \bigl(1 + {\cal O} (T^{-2}) \bigr) \epp
\end{multline}
Setting $t = \frac{\i}{2T}$ in the form factor series (\ref{ffseriesxxtrans})
and expanding each term to leading order in $1/T$, on the other hand, we see that
\begin{equation}
     \bigl\<\s_1^- \s_{m+1}^+\bigl(\tst{\frac{\i}{2T}}\bigr)\bigr\>_T =
        \2 \biggl(- \frac{J}{T}\biggr)^m \bigl(1 + {\cal O} (T^{-2}) \bigr)
\end{equation}
which, together with (\ref{xxtranshtafb}) and (\ref{xxtranshtafbtnull}),
entails the claim.
\end{proof}
\begin{corollary}
In the high-$T$ limit the transverse auto-correlation function of the
XX-chain behaves as
\begin{equation} \label{xxtransautoht}
     \bigl\<\s_1^- \s_{1}^+ (t)\bigr\>_T = \2 \re^{- \i h(t - \i/(2T)) - 4 J^2 (t - \i/(2T))^2}
        \bigl(1 + {\cal O} (T^{-2}) \bigr) \epp
\end{equation}
\end{corollary}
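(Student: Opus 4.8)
The plan is to obtain the corollary as the $m=0$ specialisation of Theorem~\ref{theo:main}. First I would set $m=0$ in the general formula (\ref{xxtransht}) and evaluate its three $m$-dependent ingredients: the algebraic prefactor $\bigl(-J/T\bigr)^m$, the ratio of the Wronskian-type combinations $Q_{m+1}(-\i)P_m'(-\i)-P_m(-\i)Q_{m+1}'(-\i)$, and the integral $\int_0^\tau u_m$. At $m=0$ the prefactor is simply $1$, so the whole content of the corollary lies in showing that the polynomial ratio is trivial and that $u_0$ vanishes.

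Next I would pin down the degenerate polynomials. Since $P_m$ and $Q_{m+1}$ are monic of degrees $m$ and $m+1$ (see (\ref{pqcoeff})), for $m=0$ we have $P_0\equiv 1$ and $Q_1(z)=z+d_1^{(0)}$. Hence $P_0'\equiv 0$, $Q_1'\equiv 1$, $P_0(-\i)=1$, and the combination $Q_1(-\i)P_0'(-\i)-P_0(-\i)Q_1'(-\i)=-1$ is independent of $\tau$; consequently the entire ratio in (\ref{xxtransht}) equals $1$ identically. To handle $u_0$ I would return to the definition (\ref{defu}): the coefficient $c_0^{(-1)}$ is vacuous, since the sum defining $P_0$ in (\ref{pqcoeff}) is empty, and therefore zero, while the bracketed term vanishes because $P_0(0)-Q_1'(0)=1-1=0$ and $P_0'(0)=0$, regardless of the values of $F_0(0)$ and $G_1(0)$. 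Thus $\int_0^\tau\rd\tau'\,u_0(\tau')=0$ and (\ref{xxtransht}) collapses to $\tst{\2}\exp\{\i c\tau-\tau^2/4\}\bigl(1+\CO(T^{-2})\bigr)$.

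Finally I would reinstate the physical variables by substituting $\tau=-4J\bigl(t-\tfrac{\i}{2T}\bigr)$ and $c=h/(4J)$: then $\i c\tau=-\i h\bigl(t-\tfrac{\i}{2T}\bigr)$ and $-\tau^2/4=-4J^2\bigl(t-\tfrac{\i}{2T}\bigr)^2$, which reproduces (\ref{xxtransautoht}) exactly. The only step requiring genuine care — and essentially the sole obstacle — is the bookkeeping of the degenerate $m=0$ objects ($P_0\equiv 1$, the empty coefficient $c_0^{(-1)}$), where one must verify that the terms carrying the nontrivial $m$-dependence in the general formula truly drop out rather than producing ill-defined expressions; everything else is direct substitution.
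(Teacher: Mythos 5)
Your proposal is correct and follows essentially the same route as the paper: specialise Theorem~\ref{theo:main} to $m=0$, note $P_0\equiv 1$, $Q_1(z)=z+d_1^{(0)}$, whence the Wronskian combination equals $-1$ identically and the ratio is $1$, and argue $u_0=0$ before substituting $\tau=-4J(t-\i/(2T))$, $c=h/(4J)$. The only cosmetic difference is the justification of $c_0^{(-1)}=0$ — you appeal to the emptiness of the sum in (\ref{pqcoeff}), while the paper fixes the convention by consistency with the identity (\ref{loggammaid}) where $c_m^{(m-1)}$ originates; both yield the same conclusion.
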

\begin{proof}
For $m = 0$ we have $P_0 (z) = 1$, $Q_1 (z) = z + d_1^{(0)}$, implying that
\begin{equation}
     Q_1 (- \i) P_0' (- \i) - P_0 (- \i) Q_1' (- \i) = - 1
\end{equation}
and that $u_0 (\tau) = \i c_0^{(-1)}/2$. From (\ref{loggammaid}) we see
that the proper definition of the constant $c_0^{(-1)}$ is $c_0^{(-1)} = 0$.
Thus, $u_0 = 0$, and (\ref{xxtransautoht}) follows from (\ref{xxtransht}) for
$m = 0$.
\end{proof}
Equation (\ref{xxtransautoht}) reproduces the Brandt-Jacoby formula
(\ref{brandtjacoby}) for $T \rightarrow + \infty$. Looking at it the
other way round, we see that `switching on the inverse temperature'
means, to leading order, to replace $t$ by $t - \i/(2T)$.

\begin{remark}
A similar phenomenon can be observed in the longitudinal case (see e.g.\
\cite{GKKKS17}, equation (113)):
\begin{multline}
     \<\s_1^z \s_{m+1}^z (t)\>_T =
     \biggl[ \int_{- \p}^\p \frac{\rd p}{2 \p} \tgh \biggl(\frac{\e(p)}{2T}\biggr) \biggr]^2
     \\[1ex] +
        \biggl[ \int_{-\p}^\p \frac{\rd p}{2\p} \:
	   \frac{\re^{\i(mp - (t - \i/(2T))\e (p))}}{\ch \bigl(\e(p)/(2T)\bigr)} \biggr]
        \biggl[ \int_{-\p}^\p \frac{\rd p}{2\p} \:
	   \frac{\re^{- \i(mp - (t - \i/(2T))\e (p))}}{\ch \bigl(\e(p)/(2T)\bigr)} \biggr]
\end{multline}
with $\e(p) = h - 4J \cos(p)$, implying that
\begin{equation}
     \bigl\<\s_1^z \s_{m+1}^z (t)\bigr\>_T = J_m^2 \bigl(4J(t - \i/(2T))\bigr)
	\bigl(1 + {\cal O} (T^{-2}) \bigr) \epc
\end{equation}
where $J_m$ is a Bessel function.
\end{remark}
For the actual evaluation of the asymptotic formula (\ref{xxtransht})
one first has to compute $A^l_k (m)$ from (\ref{alkmodbesselupper})
and (\ref{alkmodbessellower}) for a given numerical value of $\tau$.
Then the coefficients $c_m^{(k)}$ and $d_{m+1}^{(l)}$ can be
efficiently computed by means of (\ref{pqlineqs}). This fixes $P_m$ and
$Q_{m+1}$ for a fixed value of $\tau$. For the computation of $u_m$
one can use the formulae
\begin{subequations}
\begin{align}
     & \g_m^{-1} = \sum_{k=0}^m A_k^m (m) c_m^{(k)} \epc \\
     & F_m (0) = \sum_{k=0}^m A_k^0 (m+1) c_m^{(k)} \epc \qd
       G_{m+1} (0) = \sum_{k=0}^{m+1} A_k^0 (m+1) d_{m+1}^{(k)} \epc \\[1ex]
     & P_m (0) = c_m^{(0)} \epc \qd
       P_m' (0) = c_m^{(1)} \epc \qd Q_{m+1}' (0) = d_{m+1}^{(1)} \epp
\end{align}
\end{subequations}

Alternatively, using a computer-algebra program, it is not difficult
to obtain explicit expressions for the above functions in terms of
modified Bessel functions. However, the size of these expressions grows
rapidly with $m$, especially when $h \ne 0$. For this reason we refrain
from providing an extensive list of examples. In Appendix~\ref{app:explicit_m}
we demonstrate that equation (\ref{xxtransht}) reproduces the leading
order of the high-T expansions for $h=0$ and $m=1, 2$ derived in \cite{PeCa77}
(the case $m=0$ having already been checked above). In addition, we derive
the following new explicit formula for $m = 3$,
\begin{multline} \label{examplem4}
     \bigl< \sigma_1^-  \sigma_4^+ (t) \bigr\>_T\bigr|_{ h=0}
        \sim 16 \Bigl( -\frac{J}{T} \Bigr)^3
          {\rm e}^{-4J^2 t^2} \, \frac{I_1(-4Jt)}{ (4Jt)^5} \\[1ex] \times
	    \bigl((4Jt)^2 I_0^2(-4Jt) + 4Jt  I_0(-4Jt) I_1(-4Jt)
		   - \bigl((4Jt)^2+2\bigr) I_1^2(-4Jt) \bigr) \epp
\end{multline}
The reader is encouraged to work out more examples.
 
\section{Conclusions} \label{sec:conclusions}
Together with \cite{GKSS19,GKS19bpp} this paper is part of a series of works
in which we reconsider the transverse dynamical correlation function of the
XX chain at finite temperature. Based on a novel thermal form-factor series
(\ref{ffseriesxxtrans}) we have derived a Fredholm determinant representation
that is manifestly different from the Fredholm determinant representation of
Colomo et al.\ \cite{CIKT92}. Our Fredholm determinant representation and
the associated matrix Riemann-Hilbert problem can be used to analyse the
transverse correlation function numerically \cite{GKSS19} and asymptotically,
either for long times and large distances \cite{GKS19bpp} or in the high-temperature
limit.  In this work we have concentrated on the high-temperature asymptotic
analysis and have generalized the classical result (\ref{brandtjacoby}) of
Brandt and Jacoby \cite{BrJa76} for the autocorrelation function at
infinite temperature to arbitrary separations of space-time points,
but also to include the first order corrections in $1/T$.

\vspace{.5ex}
\noindent {\bf Acknowledgements.}
The authors would like to thank Alexander Its and Nikita Slavnov
for helpful discussions. They are grateful to Jacques Perk
for his explanations on the literature. FG was supported by the Deutsche
Forschungsgemeinschaft within the framework of the research
unit FOR 2316 `Correlations in integrable quantum many-body systems'.
The work of KKK was supported by the CNRS and by the `Projet
international de coop\'eration scientifique No.\ PICS07877':
\textit{Fonctions de corr\'elations dynamiques dans la
cha\^{\nodoti}ne XXZ \`a temp\'erature finie}, Allemagne,
2018-2020. JS was supported by JSPS KAKENHI Grants, numbers 15K05208,
18K03452 and 18H01141.

\clearpage

{\appendix
\section{The model Riemann-Hilbert problem}
\label{app:modelmrhp} \vspace{-1ex} \noindent
The model matrix Riemann-Hilbert problem for the matrix function $Y$, see
equation (\ref{defy}) and below, was solved by Fokas, Its and Kitaev
\cite{FIK92}. For the sake of completeness we shall repeat their arguments
here.

First of all the jump condition (\ref{jumpy}) is equivalent to
\begin{equation}
     \begin{pmatrix}
        {Y_1^1}_+ (z) + M_m (z) {Y_2^1}_+ (z) - {Y_1^1}_- (z) &
        {Y_2^1}_+ (z) - {Y_2^1}_- (z) \\[1ex]
        {Y_1^2}_+ (z) + M_m (z) {Y_2^2}_+ (z) - {Y_1^2}_- (z) &
        {Y_2^2}_+ (z) - {Y_2^2}_- (z)
     \end{pmatrix}
        = 0 \epp
\end{equation}
This is satisfied if we choose $Y_2^a (z)$, $a = 1, 2$, as arbitrary
entire functions and
\begin{equation} \label{y1form}
     Y_1^a (z) = g^a (z) - \int_{\G_p} \frac{\rd w}{2 \p \i}
                                       \frac{M_m (w) Y_2^a (w)}{w - z} \epc
\end{equation}
where $g^a (z)$, $a = 1, 2$, are two more arbitrary entire functions.
The arbitrariness of $Y_2^a (z)$ and $g^a (z)$, $a = 1, 2$, is lifted
by imposing the asymptotic condition (\ref{asyy}) which reads more
explicitly
\begin{equation} \label{asyyexp}
     Y(z) = \begin{pmatrix}
	       z^{- m - 1} \bigl(1 + {\cal O} (z^{-1})\bigr) &
	       \g_m z^ m \bigl(1 + {\cal O} (z^{-1})\bigr) \\[1ex]
	       \widetilde \g_m z^{- m - 2} \bigl(1 + {\cal O} (z^{-1})\bigr) &
	       z^{m + 1} \bigl(1 + {\cal O} (z^{-1})\bigr)
            \end{pmatrix} \epc
\end{equation}
where $\g_m, \widetilde \g_m \in {\mathbb C}$ are constants. Thus,
by Liouville's theorem
\begin{equation}
\label{gy2form}
\begin{split}
     & g^a (z) = 0 \epc \qd a = 1, 2 \epc \\
     & Y_2^1 (z) = \g_m P_m (z) \epc \qd Y_2^2 (z) = Q_{m+1} (z) \epc
\end{split}
\end{equation}
where $P_m$ is a monic polynomial of degree $m$ and $Q_{m+1}$ is a monic
polynomial of degree $m + 1$. Inserting (\ref{gy2form}) into (\ref{y1form})
it further follows that
\begin{equation}
     Y_1^1 (z) = - \int_{\G_p} \frac{\rd w}{2 \p \i} \frac{M_m (w) \g_m P_m (w)}{w - z} \epc \
     Y_2^1 (z) = - \int_{\G_p} \frac{\rd w}{2 \p \i} \frac{M_m (w) Q_{m+1} (w)}{w - z} \epp
\end{equation}
Expanding these expressions asymptotically for large $|z|$ and comparing
once more with (\ref{asyyexp}) we obtain the `orthogonality conditions'
(\ref{ortho}), which fix $P_m$ and $Q_{m+1}$ uniquely for at least almost
all $t$ (cf.\ Appendix~\ref{app:expq}), and equation (\ref{defgammam}) for
$\g_m$.

\section{\boldmath Properties of the polynomials $P_m$ and $Q_{m+1}$}
\label{app:expq} \vspace{-1ex} \noindent
In this appendix we work out some of the properties of the polynomials
$P_m$ and $Q_{m+1}$. These polynomials are well-defined whenever the
determinants $\det_{i, j = 0, \dots, m - 1} \bigl(A^i_j (m)\bigr)$ and
$\det_{i, j = 0, \dots, m} \bigl(A^i_j (m)\bigr)$ do not vanish. Their
coefficients are then determined by (\ref{cdcramer}). In~\ref{app:alkexplicit}
we derive the explicit expressions (\ref{alkmodbesselupper}),
(\ref{alkmodbessellower}) for the $A^l_k (m)$. In~\ref{app:welldef} we
show that the two determinants do not vanish at $\tau = 0$. Being entire
functions of $\tau$ they can then at most vanish on a discrete
subset of ${\mathbb C}$ with an accumulation point at infinity.
In~\ref{app:alklarget} we study the $A^l_k (m)$ for $\tau \rightarrow
+ \infty$. This allows us to conclude that the determinants are non-zero
for $\tau$ in a neighbourhood of the real axis, where thus the
polynomials $P_m$ and $Q_{m+1}$ exist, for all but possibly finitely
many values of $\tau$. We also obtain explicit expressions for the
leading $\tau \rightarrow + \infty$ asymptotics of the polynomials that
allow us to estimate the behaviour of $Q_{m+1}(- \i) P_m' (- \i) -
P_m (- \i) Q_{m+1}' (- \i)$ for $t \rightarrow - \infty$.
\subsection{Explicit formulae for matrix elements}
\label{app:alkexplicit}
Setting
\begin{equation}
     f(w) = - \frac{\i}{2} \Bigl(w - \frac{1}{w}\Bigr) \epc  \qd
     b_m (w) = \frac{\re^{\i c \tau} w^{-m}}{2 \p c}
                 \biggl( \frac{w - \i}{w + \i} - \frac{2c w}{(w + \i)^2} \biggr)
\end{equation}
we can write (cf.\ (\ref{defm}), (\ref{defalk}))
\begin{equation} \label{alkrationalint}
     A^l_k (m) = \int_{\G_p} \rd w \: b_{m - k - l} (w) \re^{\tau f(w)} \epp
\end{equation}
We recall that $\G_p$ is a simple, clockwise oriented contour encircling
the points $0$ and $- \i$. The exponential factor under the integral
is equal to the generating function of the Bessel functions of the first
kind,
\begin{equation} \label{besselgenfun}
     \re^{\tau f(w)} = \sum_{n \in {\mathbb Z}} J_n (- \i \tau) w^n \epp
\end{equation}
Inserting this into (\ref{alkrationalint}) and using the residue theorem
we conclude that
\begin{multline}
     A^l_k (m) = - \frac{\i^{m - k - l}\re^{\i c \tau}}{c} \biggl\{
                 J_{m - k - l - 1} (- \i \tau) \, \i^{k + l + 1 - m} \\
		 + 2 \bigl(1 + \i c(m - k - l - 1)\bigr) \mspace{-12.mu}
		     \sum_{n = m - k - l}^\infty \mspace{-12.mu} J_n (- \i \tau) \, \i^{-n}
                 - 2 \i c  \mspace{-12.mu} \sum_{n = m - k - l}^\infty \mspace{-12.mu}
		                           n J_n (- \i \tau) \, \i^{-n} \biggr\} \epp
\end{multline}
It is straightforward to reduce this to equations (\ref{alkmodbesselupper}),
(\ref{alkmodbessellower}) of the main text using (\ref{besselgenfun}) as well
as the identities
\begin{equation}
     J_{-n} (z) = (-1)^n J_n (z) \epc \qd
     2 n J_n (z) = z \bigl( J_{n-1} (z) + J_{n+1} (z)\bigr)
\end{equation}
and the definition
\begin{equation}
     I_n (z) = \i^{-n} J_n (\i z)
\end{equation}
of the modified Bessel functions of the first kind.
\subsection{Well-definedness}
\label{app:welldef}
We shall prove below that
\begin{equation} \label{pqsolvability}
     \det_{k, l = 0, \dots, n} A_k^l (m) \bigr|_{\tau = 0} \ne 0
\end{equation}
for $n = m - 1, m$ and $0 < h < 4J$. This implies that the functions
\begin{equation}
     t \mapsto \det_{k, l = 0, \dots, n} A_k^l (m)
\end{equation}
$n = m - 1, m$, are not identically zero. Hence, being entire functions,
they can at most vanish on a discrete subset of ${\mathbb C}$ with an
accumulation point at infinity.

Using that $I_n (0) = \de_{n, 0}$ in (\ref{alkmodbesselupper}) and
(\ref{alkmodbessellower}) we obtain
\begin{equation} \label{aklm}
     A_k^l (m) \bigr|_{t = \frac{\i}{2T}}
        = - \frac{\i^{m - k - l}}{c}
	  \begin{cases}
	     0 & k + l < m - 1 \\ 1 & k + l = m - 1 \\
	     2 - 2 \i c (k + l - m + 1) & k + l > m - 1 \epp
	  \end{cases}
\end{equation}
It follows that
\begin{equation}
     \det_{k, l = 0, \dots, m - 1} A_k^l (m) \bigr|_{t = \frac{\i}{2T}}
        = (\i c)^{- m} (-1)^\frac{m (m - 1)}{2}
\end{equation}
which is clearly non-zero for $0 < h < 4J$.

The evaluation of the other determinant, $n = m$ in equation
(\ref{pqsolvability}), is more tricky. Using (\ref{aklm}) and
elementary row- and column manipulations of the determinant we
can rewrite it as
\begin{equation}
     \det_{k, l = 0, \dots, m} A_k^l (m) \bigr|_{t = \frac{\i}{2T}}
        = (- c)^{- m - 1} (-1)^\frac{m (m + 1)}{2} \det U
\end{equation}
where $U$ is the $(m + 1) \times (m + 1)$ tridiagonal matrix
\begin{equation}
     U = \begin{pmatrix}
	    1 - 2 \i c & 1 &&&& \\
	    - 1 & - 2 \i c & 1 &&& \\
	    & - 1 & - 2 \i c & 1 & \\
	    &&& \ddots  & & \\
	    &&& -1 & - 2 \i c & 1 \\
	    &&&& - 1 & 1 - 2 \i c
         \end{pmatrix}
\end{equation}
Clearly $\det U$ is a polynomial in $c$ of degree $m + 1$ with highest coefficient
$(- 2 \i)^{m+1}$. We shall show that this polynomial is non-zero for
$0 < h < 4J$.

For every root, $\det U = 0$, there is an $\xv = (x_0, \dots, x_m)^t
\in {\mathbb C}^{m+1}$, $\xv \ne 0$, such that $U \xv = 0$ or,
equivalently,
\begin{equation}
\begin{split} \label{kerneluxn}
     & x_1 + (1 - 2 \i c) x_0 = 0 \epc \\
     & x_{n+1} - 2 \i c x_n - x_{n-1} = 0 \epc \qd n = 1, \dots, m \epc \\
     & x_{m+1} = x_m \epp
\end{split}
\end{equation}
Hence, $x_0 = 0$ implies $\xv = 0$, and we must have $x_0 \ne 0$. Let
\begin{equation}
     T = \begin{pmatrix} 0 & 1 \\ 1 & 2 \i c \end{pmatrix} \epp
\end{equation}
Then it follows from (\ref{kerneluxn}) that
\begin{equation}
     x_m \begin{pmatrix} 1 \\ 1 \end{pmatrix}
        = T^m \begin{pmatrix} 1 \\ 2 \i c - 1 \end{pmatrix} x_0 \epp
\end{equation}
Since $x_0 \ne 0$, we obtain the following necessary condition for
$c$ to be a root of $\det U$,
\begin{equation} \label{neccondc}
     (1, - 1) \, T^m \begin{pmatrix} 1 \\ 2 \i c - 1 \end{pmatrix} = 0 \epp
\end{equation}
The matrix $T$ is non-degenerate with eigenvalues $\la_\pm = \i \re^{\mp \i p_F}$
and corresponding eigenvectors $\vv_\pm = (1, \la_\pm)^t$ if $0 < h < 4J$.
Moreover,
\begin{equation}
     \begin{pmatrix} 1 \\ 2 \i c - 1 \end{pmatrix}
        = \frac{\i \re^{- \i p_F} - 1}{2 \sin (p_F)} \vv_+
	  - \frac{\i \re^{\i p_F} - 1}{2 \sin (p_F)} \vv_- \epp
\end{equation}
Inserting the latter into (\ref{neccondc}) and using the explicit
form of $\la_\pm$ and $\vv_\pm$ we end up with
\begin{equation}
     (1, - 1) \, T^m \begin{pmatrix} 1 \\ 2 \i c - 1 \end{pmatrix}
        = 2 \i^m \biggl[\frac{\sin\bigl((m+1) p_F\bigr)}{\sin(p_F)}
	                - \i \cos \bigl((m+1) p_F\bigr)\biggr] \epp
\end{equation}
The right hand side never vanishes for $p_F \in (0,\p/2)$, since sine
and cosine are both real and do not have common zeros. This entails the
claim.
\subsection{Behaviour at large negative times}
\label{app:alklarget}
\begin{proposition}
For $t \rightarrow - \infty$ the polynomials $P_m$ and $Q_{m+1}$
behave as
\begin{equation} \label{largetpq}
     P_m (z) = (z - \i)^m + {\cal O} \bigl( t^{-1} \bigr) \epc \qd
     Q_{m+1} (z) = (z - \i)^{m+1} + {\cal O} \bigl( t^{-1} \bigr) \epp
\end{equation}
\end{proposition}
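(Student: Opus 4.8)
The plan is to pass to the variable $\tau = - 4J\bigl(t - \frac{\i}{2T}\bigr)$, so that $t \rightarrow - \infty$ corresponds to $\tau \rightarrow + \infty$ along a horizontal line infinitesimally above the real axis and $\mathcal{O}(t^{-1}) = \mathcal{O}(\tau^{-1})$. Combining (\ref{defm}) with the computation in~\ref{app:alkexplicit} one reads off that the orthogonality weight is
\begin{equation}
     M_m (w) = \frac{\i \re^{\i c \tau}}{c} \, w^{-m} \,
               \frac{(w - \re^{\i p_F})(w - \re^{- \i p_F})}{(w + \i)^2} \,
	       \re^{\tau f(w)} \epc \qquad f(w) = - \frac{\i}{2}\Bigl(w - \frac{1}{w}\Bigr) \epp
\end{equation}
The rational prefactor is $\tau$-independent, analytic and non-vanishing near $w = \i$ (its zeros $\re^{\pm \i p_F}$ sit on the unit circle with $p_F \in (0, \p/2)$, and its only pole is the double pole at $w = - \i$), so all the $\tau$-dependence relevant for the asymptotics resides in $\re^{\tau f(w)}$. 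The saddle points $f'(w) = - \frac{\i}{2}(1 + w^{-2}) = 0$ are $w = \pm \i$, with $f(\pm \i) = \pm 1$ and $f''(\i) = - 1$, so locally $f(w) = 1 - \frac{1}{2}(w - \i)^2 + \cdots$ at the dominant saddle.

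First I would set up the steepest-descent picture for the closed-contour integrals (\ref{alkrationalint}). Since $\Re f(\i) = 1 > - 1 = \Re f(- \i)$, as $\tau \rightarrow + \infty$ the loop $\G_p$ (which encircles the only singularities $w = 0$ and $w = - \i$ of the integrand, while $w = \i$ is a regular point) can be deformed into an optimal loop that still encircles $0$ and $-\i$ but crosses the dominant saddle $w = \i$ along the local steepest-descent direction, which by the above is horizontal. On this loop the factor $\re^{\tau f(w)} = \re^{\tau}\re^{- \frac{\tau}{2}(w - \i)^2 + \cdots}$ concentrates as a Gaussian of width $\tau^{-1/2}$ about $w = \i$; the remainder of the loop, in particular its passage near the subdominant saddle/pole at $w = - \i$, contributes a relative factor $\mathcal{O}(\re^{- 2\tau})$. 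Thus, up to an overall constant and exponentially small corrections, the linear functional against which $P_m$ and $Q_{m+1}$ are orthogonalised localises near $w = \i$ as $\varrho(w) \re^{- \frac{\tau}{2}(w - \i)^2}\,\rd w$ with $\varrho$ smooth and $\varrho(\i) \ne 0$; this is fully consistent with the exact Bessel evaluation, whose magnitude $\re^{\tau}/\sqrt{\tau}$ matches the Gaussian weight of the saddle.

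Next I would invoke the standard algebra of orthogonal polynomials for a sharply peaked Gaussian functional. Rescaling $w - \i = \xi/\sqrt{\tau}$, the moments of $\varrho(w)\re^{- \frac{\tau}{2}(w - \i)^2}\,\rd w$ are, to leading order, those of a Gaussian centred at $\i$, whose monic orthogonal polynomials of degree $n$ are the rescaled monic Hermite polynomials
\begin{equation}
     \tau^{- n/2}\,\mathrm{He}_n\bigl(\sqrt{\tau}\,(w - \i)\bigr)
        = (w - \i)^n - \binom{n}{2}\tau^{-1}(w - \i)^{n - 2} + \cdots \epp
\end{equation}
The slowly varying modulation $\varrho$ and the replacement of the basis $\{1, w, \dots, w^{n-1}\}$ by the equivalent $\{1, w - \i, \dots, (w - \i)^{n-1}\}$ only perturb the moments, and hence the polynomial coefficients, at order $\tau^{-1}$. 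Taking $n = m$ and $n = m + 1$ then gives $P_m (z) = (z - \i)^m + \mathcal{O}(\tau^{-1})$ and $Q_{m+1}(z) = (z - \i)^{m+1} + \mathcal{O}(\tau^{-1})$, which is (\ref{largetpq}).

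The main obstacle is making the localisation rigorous with a controlled $\mathcal{O}(\tau^{-1})$ remainder: one must justify the deformation of the closed loop $\G_p$ to an optimal steepest-descent loop, bound the off-saddle portion and the subdominant singularity uniformly, and then propagate the error through the linear system (\ref{pqlineqs}) (equivalently Cramer's rule (\ref{cdcramer})) so that it survives as a genuine $\tau^{-1}$ correction to the coefficients rather than being amplified. A self-contained alternative that avoids the deformation is to insert the large-$\tau$ asymptotics of the modified Bessel functions, $I_n(- \tau) = (-1)^n I_n(\tau) \sim (-1)^n \re^{\tau}/\sqrt{2 \p \tau}$ together with their first subleading terms, into the explicit formulae (\ref{alkmodbesselupper}), (\ref{alkmodbessellower}) for $A^l_k (m)$ and solve (\ref{pqlineqs}) asymptotically; here the delicate point is the leading-order cancellation between consecutive orders, since $I_{N-1}(- \tau) + I_N(- \tau) = (-1)^N\bigl(I_N(\tau) - I_{N-1}(\tau)\bigr)$ is one power of $\tau$ smaller than either summand, so the subleading Bessel correction must be retained in order to recover the clean $(z - \i)^n$ structure.
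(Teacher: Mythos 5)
Your proposal is correct in its analytic core and arrives at the right answer, but it organises the final step differently from the paper. Both arguments begin identically: write $A^l_k(m)$ as the contour integral (\ref{alkrationalint}), locate the saddle points of $f$ at $w = \pm\i$, observe that $w = +\i$ dominates as $\tau \rightarrow +\infty$, and localise the orthogonality weight to a Gaussian of width $\tau^{-1/2}$ about $w = \i$ modulated by a smooth, non-vanishing prefactor (your explicit factorisation of $M_m$ with zeros at $\re^{\pm\i p_F}$ is correct and confirms non-degeneracy of the modulation at $w=\i$ for $0<h<4J$). From there the paper stays in the monomial basis and works directly with the determinants of Cramer's rule (\ref{cdcramer}): it inserts the full steepest-descent expansion of each $A^l_k(m)$, uses multilinearity and a symmetrisation argument to isolate the leading non-vanishing contribution (a squared Vandermonde determinant produced by derivatives of distinct even orders acting on distinct columns), and computes the ratio of numerator and denominator explicitly, obtaining $d^{(n)}_{m+1} = (-\i)^{m+1-n}\binom{m+1}{n}\bigl(1+{\cal O}(\tau^{-1})\bigr)$. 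You instead identify the localised functional as a perturbed Gaussian and read off the monic orthogonal polynomials as rescaled Hermite polynomials $\tau^{-n/2}\mathrm{He}_n\bigl(\sqrt{\tau}(w-\i)\bigr) = (w-\i)^n + {\cal O}(\tau^{-1})$. The ``obstacle'' you flag --- that the moment matrix is, to leading order, the rank-one matrix $\i^{k+l}\,\G(\2)\,\fb_m(0)\,\re^{\tau}\tau^{-1/2}$, so that errors could in principle be amplified by the near-singular linear system --- is resolved precisely by your own rescaling: in the centred, rescaled basis the moment matrix becomes the non-degenerate Gaussian Hankel matrix plus ${\cal O}(\tau^{-1/2})$, and Cramer's rule there is stable; the paper instead quantifies and tames the same degeneracy head-on, finding $\det_{k,l=0,\dots,m}A^l_k(m) \propto \re^{(m+1)(1+\i c)\tau}\,\tau^{-(m+1)^2/2}$ rather than the naive power $\tau^{-(m+1)/2}$. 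What the paper's computation buys is a fully explicit, self-contained evaluation of all leading coefficients; what your route buys is that the binomial structure of $(z-\i)^{m+1}$ is visible from the outset rather than emerging from a ratio of Vandermonde-type determinants. Your side remark on the alternative Bessel route, including the one-power-of-$\tau$ cancellation in $I_{N-1}(-\tau)+I_N(-\tau)$, is also correct.
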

\begin{proof}

We shall analyse the integrals (\ref{alkrationalint}) for $\tau \rightarrow
+ \infty$ ($t \rightarrow - \infty$) by means of the steepest-descent method.
This will be enough to understand the behaviour of the correlation function
for $t \rightarrow \pm \infty$, since $\<\s_1^- \s_{m+1}^+ (t)\>_T =
\<\s_1^- \s_{m+1}^+ (-t) \>_T^\ast$. The reason why $t \rightarrow - \infty$
is easier to analyse than $t \rightarrow + \infty$ is that in the latter
case the relevant saddle point coincides with the pole of $b_m$ at
$w = - \i$. In fact, the two saddle points, $f' (w) = 0$, are at $w = \pm \i$,
where $f(\pm \i) = \pm 1$. Hence, the saddle point at $+ \i$ is dominant for
$\tau \rightarrow + \infty$, while the saddle point at $- \i$ dominates
for $\tau \rightarrow - \infty$. The saddle-point contours are easily
determined in this case. They consist of the unit circle plus the imaginary
axis. The steepest descent and steepest ascent directions close to the
saddle points are indicated in figure~\ref{fig:alk_saddle_point}.
\begin{figure}
\begin{center}
\includegraphics[width=.70\textwidth]{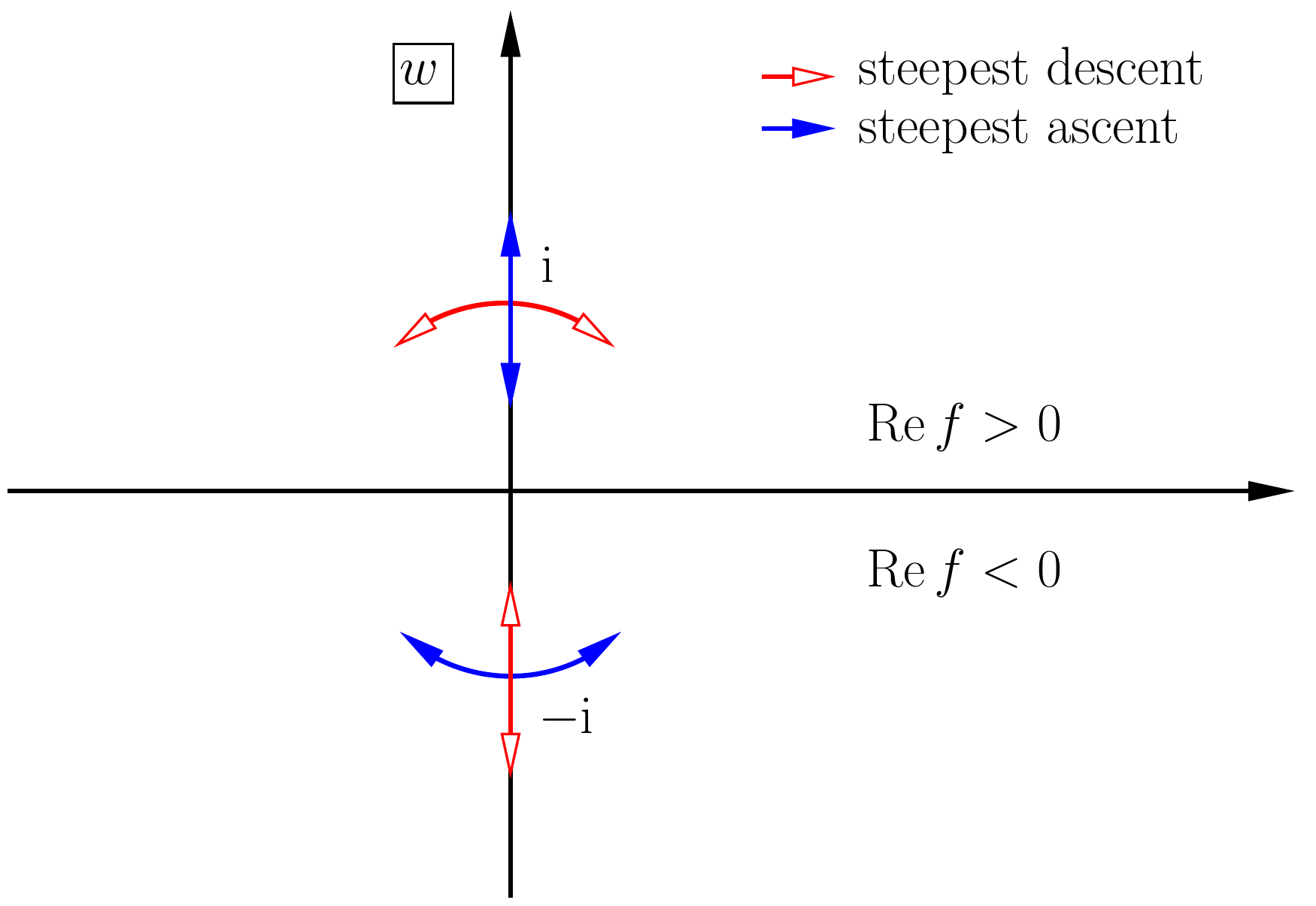}
\caption{\label{fig:alk_saddle_point} Sketch of saddle points and
saddle point contours.
}
\end{center}
\end{figure}

According to the usual reasoning of the steepest descent method we can restrict
the integration contour to a vicinity of the saddle point at $+ \i$. We may,
for instance, choose a semi-circle of unit radius in the upper half plane around
the origin, explicitly parameterized as $\th \mapsto w = \i \re^{\i \th}$,
$\th \in [- \p/2, \p/2]$. Then
\begin{equation} \label{alksp1}
     A^l_k (m) \re^{- \tau} =
        \int_{-\p/2}^{\p/2} \rd \th \: \re^{\i \th} b_m \bigl(\i \re^{\i \th}\bigr)
	                               \bigl(\i \re^{\i \th}\bigr)^{k + l} \re^{- 2 \tau \sin^2 (\th/2)}
				       + {\cal O} \bigl(\re^{- \tau}\bigr) \epp
\end{equation}
Setting
\begin{equation}
     h(\th) = \sqrt{2} \sin (\th/2) \epc \qd
     \fb_m (u) = \frac{\re^{\i h^{-1} (u)} b_m \bigl(\i \re^{\i h^{-1} (u)}\bigr)}{h' \circ h^{-1} (u)}
\end{equation}
and substituting $u = h(\th)$ as integration variable in (\ref{alksp1})
we obtain an all-order asymptotic expansion of the integral on the
right hand side of (\ref{alksp1}),
\begin{equation}
     A^l_k (m) \re^{- \tau} = \i^{k + l}
        \sum_{p \ge 0} \frac{\G(p + \2)}{\tau^{p + \2} (2p!)}
	   \Bigl[\6_u^{2p} \fb_m (u) \re^{\i (k + l) h^{-1} (u)} \Bigr]_{u = 0}
	   + {\cal O} \bigl(\re^{- \tau}\bigr) \epp
\end{equation}

Inserting this into the determinant in the denominator of (\ref{dcramer}) and
using the multi-linearity of the determinant we obtain
\begin{multline} \label{detalkop}
     \det_{k, l = 0, \dots, m} \bigl\{ A^l_k (m) \bigr\} \sim
	\re^{(m+1) \tau} (-1)^\frac{m(m+1)}{2}
	\sum_{p_0, \dots, p_m \ge 0}
	\biggl[ \prod_{r=0}^m \frac{\G(p_r + \2)}{\tau^{p_r + \2} (2p_r!)} \6_{u_r}^{2p_r}\biggr] \\
	\times \biggl( \prod_{s=0}^m \fb_m (u_s) \re^{\i s \cdot h^{-1} (u_s)} \biggr)
     \det_{k, l = 0, \dots, m} \bigl\{ \re^{\i k h^{-1} (u_l)} \bigr\} \Bigr|_{u_j = 0} \epc
\end{multline}
where ``$\sim$'' means asymptotically equal. The operator
\begin{equation}
     \sum_{p_0, \dots, p_m \ge 0}
     \biggl[ \prod_{r=0}^m \frac{\G(p_r + \2)}{\tau^{p_r + \2} (2p_r!)}
             \6_{u_r}^{2p_r}\biggr]_{u_j \rightarrow 0}
\end{equation}
sends functions $\Ph (u_0, \dots, u_m)$ that are antisymmetric in any pair
of variables $u_j$, $u_k$ to zero. We may therefore replace the term in
the second line of (\ref{detalkop}) by its symmetrized version,
\begin{multline} \label{detalkopsym}
     \det_{k, l = 0, \dots, m} \bigl\{ A^l_k (m) \bigr\} \sim
	\re^{(m+1) \tau} (-1)^\frac{m(m+1)}{2} \frac{1}{(m+1)!} \\
	\sum_{p_0, \dots, p_m \ge 0}
	\biggl[ \prod_{r=0}^m \frac{\G(p_r + \2)}{\tau^{p_r + \2} (2p_r!)} \6_{u_r}^{2p_r}\biggr]
	\biggl[ \prod_{s=0}^m \fb_m (u_s) \biggr]
        \Bigl( \det_{k, l = 0, \dots, m} \bigl\{ \re^{\i k h^{-1} (u_l)} \bigr\}
	        \Bigr)^2 \Bigr|_{u_j = 0} \epp
\end{multline}

>From this expression we want to extract the leading term in $\tau$. Many terms
under the sum on the right hand side vanish, e.g.\ the term $p_0 = p_1 = \dots =
p_m$, because the determinants vanish. Non-vanishing terms are generated by
the action of the derivatives on the columns of the determinants. The determinants
are non-vanishing only if the degrees of the derivatives inside the columns are
mutually different. The term of lowest possible degree of the derivatives is
generated by $\prod_{r=0}^m \6_{u_r}^{2r}$ and corresponds to summands with
$p_r = r$; $r = 0, \dots, m$, modulo permutations. This term can be calculated
explicitly,
\begin{multline} \label{leadingder}
     \biggl[ \prod_{r=0}^m \6_{u_r}^{2p_r}\biggr]
	\biggl[ \prod_{s=0}^m \fb_m (u_s) \biggr]
        \Bigl( \det_{k, l = 0, \dots, m} \bigl\{ \re^{\i k h^{-1} (u_l)} \bigr\}
	        \Bigr)^2 \Bigr|_{u_j = 0} \\ =
        \biggl[ \prod_{r=0}^m \frac{(2r)! \fb_m (0) \i^m}{(r!)^2 (h' (0))^m} \biggr]
        \Bigl( \det_{k, l = 1, \dots, m} \bigl\{ k^l \bigr\} \Bigr)^2 \epp
\end{multline}
Here the combinatorial factor $\prod_{r=0}^m (2r)!/(r!)^2$ comes from the
application of the Leibniz rule. Since the degrees of the derivatives in
(\ref{detalkopsym}) are connected with the powers of $\tau$, (\ref{leadingder})
corresponds to the leading asymptotics, and we conclude that
\begin{multline} \label{detalkasy}
     \det_{k, l = 0, \dots, m} \bigl\{ A^l_k (m) \bigr\} =
	\biggl[ \prod_{r=0}^m \frac{\re^\tau \G(r + \2) \fb_m (0)}
	                           {\tau^{r + \2} (r!)^2 (h' (0))^m}\biggr]
        \Bigl( \det_{k, l = 1, \dots, m} \bigl\{ k^l \bigr\} \Bigr)^2
	\bigl(1 + {\cal O} (\tau^{-1})\bigr) \\ =
	\frac{G(m + \frac{3}{2}) \bigl( - \i \sqrt{2} \bigr)^{m^2 - 1} \re^{(m+1)(1 + \i c) \tau}}
	     {G(\2) (2 \p)^{m+1} \tau^\frac{(m+1)^2}{2}}
	\bigl(1 + {\cal O} (\tau^{-1})\bigr) \epc
\end{multline}
where $G$ is the Barnes function.

For the numerator in (\ref{dcramer}) we have to replace in the $n$th column of
the determinant $\Dv_n \rightarrow \Dv_{m+1}$. This amounts to replacing the
column index $k$ in (\ref{alksp1}) and in the following equations by
\begin{equation}
     x_k^{(n)} = k + \de_{n, k} (m + 1 - n) \epp
\end{equation}
The analysis stays very similar with this minor modification. After a few steps
we obtain
\begin{multline} \label{detalkmodnasy}
     \det_{k, l = 0, \dots, m} \Bigl\{ A^l_{x_k^{(n)}} (m) \Bigr\} =
	\i^{m + 1 - n}
	\biggl[ \prod_{r=0}^m \frac{\re^\tau \G(r + \2) \fb_m (0)}
	                           {\tau^{r + \2} (r!)^2 (h' (0))^m}\biggr]
        \det_{k, l = 1, \dots, m} \bigl\{ k^l \bigr\} \\[1ex] \times
	\det_{k, l = 0, \dots, m} \bigl\{- \i h'(0) \6_{u_l}^l \re^{\i x_k^{(n)} h^{-1} (u_l)}
	                                   \big|_{u_l = 0} \bigr\}
	\bigl(1 + {\cal O} (\tau^{-1})\bigr) \epp
\end{multline}
The ratio between (\ref{detalkasy}) and (\ref{detalkmodnasy}) can be easily
calculated explicitly. Then, using (\ref{dcramer}), we end up with
\begin{equation}
     d_{m+1}^{(n)} = (- \i)^{m + 1 - n} \binom{m+1}{n}
                     \bigl(1 + {\cal O} (\tau^{-1})\bigr) \epp
\end{equation}
Inserting this into (\ref{pqcoeff}) we obtain the asymptotic formula
(\ref{largetpq}) for $Q_{m+1}$. The derivation of the large-$\tau$
asymptotics of $P_m$ is similar.
\end{proof}
\begin{corollary} For large negative times the `prefactor' in the
asymptotic formula (\ref{xxtransht}) for the transverse correlation
function behaves as
\begin{equation}
     Q_{m+1}(- \i) P_m' (- \i) - P_m (- \i) Q_{m+1}' (- \i) =
        (-1)^{m+1} 4^m \bigl(1 + {\cal O} (t^{-1})\bigr) \epp
\end{equation}
\end{corollary}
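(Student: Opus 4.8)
The plan is to substitute the large-negative-time asymptotics of the polynomials, established in the preceding proposition, directly into the prefactor and to evaluate the resulting expression at the point $z = -\i$. First I would record that the proposition supplies, for $t \rightarrow -\infty$ (equivalently $\tau \rightarrow +\infty$),
\begin{equation*}
     P_m (z) = (z - \i)^m + {\cal O} \bigl( t^{-1} \bigr) \epc \qd
     Q_{m+1} (z) = (z - \i)^{m+1} + {\cal O} \bigl( t^{-1} \bigr) \epp
\end{equation*}
Differentiating in $z$ acts only on the finitely many coefficients of these fixed-degree monic polynomials and therefore preserves the order of the remainder, so that $P_m' (z) = m (z - \i)^{m-1} + {\cal O}(t^{-1})$ and $Q_{m+1}' (z) = (m+1)(z - \i)^m + {\cal O}(t^{-1})$.

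Next I would evaluate the four quantities at $z = -\i$, where $z - \i = -2\i$. This gives $P_m (-\i) = (-2\i)^m$, $Q_{m+1} (-\i) = (-2\i)^{m+1}$, $P_m' (-\i) = m (-2\i)^{m-1}$ and $Q_{m+1}' (-\i) = (m+1)(-2\i)^m$, each up to a correction of relative order $t^{-1}$. Forming the required combination and factoring out the common power $(-2\i)^{2m}$, I would obtain
\begin{equation*}
     Q_{m+1} (-\i) P_m' (-\i) - P_m (-\i) Q_{m+1}' (-\i)
        = (-2\i)^{2m} \bigl( m - (m+1) \bigr) + {\cal O} \bigl( t^{-1} \bigr)
	= - (-2\i)^{2m} + {\cal O} \bigl( t^{-1} \bigr) \epp
\end{equation*}
Since $(-2\i)^2 = -4$, one has $(-2\i)^{2m} = (-4)^m = (-1)^m 4^m$, whence the leading term equals $(-1)^{m+1} 4^m$.

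Finally, because this leading term is nonzero, the absolute ${\cal O}(t^{-1})$ correction can be absorbed into a relative one, yielding the claimed form $(-1)^{m+1} 4^m \bigl(1 + {\cal O}(t^{-1})\bigr)$. There is essentially no deep obstacle here, as the computation rides entirely on the preceding proposition; the one point meriting a line of justification — the closest thing to a difficulty — is that the $z$-derivatives of $P_m$ and $Q_{m+1}$ inherit the same remainder order as the polynomials themselves. This is immediate, since each polynomial has a fixed degree with finitely many coefficients, every one of which approaches its limiting value at rate $t^{-1}$, so that differentiation in $z$ cannot degrade the estimate.
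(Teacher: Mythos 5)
Your computation is correct and is exactly the (unwritten) argument the paper intends: the corollary is stated without proof immediately after the proposition giving $P_m(z) = (z-\i)^m + {\cal O}(t^{-1})$ and $Q_{m+1}(z) = (z-\i)^{m+1} + {\cal O}(t^{-1})$, and your substitution at $z=-\i$ with $(-2\i)^{2m} = (-1)^m 4^m$ reproduces the stated leading term. Your remark that differentiation of fixed-degree polynomials with ${\cal O}(t^{-1})$-perturbed coefficients preserves the error order is the right (and only) point needing justification, and it is handled correctly.
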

The long-time behaviour of the function $u_m$ appearing in the exponent
in (\ref{xxtransht}) is harder to estimate, since the asymptotic forms
of the polynomials $P_m$ and $Q_{m+1}$ have high-order zeros at $w = \i$.

\section{\boldmath Explicit results for small $m$}
\label{app:explicit_m} \vspace{-1ex} \noindent
We set $\tilde{\tau}=-4Jt$.  Even the first few coefficients of the 
polynomials $P_m(z)$ and $Q_m(z)$ are already too lengthy to be reproduced
here. On the other hand, the combination  $W_m(\tilde{\tau}) :=
\bigl(Q_{m+1}(-i) P_m'(-i) - Q'_{m+1}(-i) P_m(-i) \bigr) \,(m=1,2)$
turns out to be relatively simple. We write 
\begin{equation}
     {W}_m(\tilde{\tau}) = \frac{N_m (\tilde{\tau}, h)}{D_m( \tilde{\tau}, h)}
                           \epc \qd m = 1, 2 \epp
\end{equation}
The denominators and numerators are then explicitly given by
\begin{subequations}
\begin{align}
     N_1( \tilde{\tau}, h) & = 2 e^{\tilde{\tau}}(4 i J \tilde{\tau}^{-1}+ h)  I_1(\tilde{\tau}) \epc \\[1ex]
     D_1( \tilde{\tau}, h) & =
        - 4 i J + h  \tau e^{\tilde{\tau}} \bigl(I_0( \tilde{\tau})-I_1(\tilde{\tau}) \bigr) \epc \\[1ex]
     N_2(\tilde{\tau}, h) & = \tilde{\tau} ^2 \Bigr(e^{2 \tilde{\tau} } I_0(\tilde{\tau} ){}^2
        \left(h^2 (2 \tilde{\tau} -1)-16 J^2\right) \notag \\
	& \mspace{72.mu} + 2 e^{2 \tilde{\tau} } I_1(\tilde{\tau} ) I_0(\tilde{\tau} )
	  \left(16 J^2-h^2 \tilde{\tau} \right)
	  + \left(h-4 i J e^{\tilde{\tau} } I_1(\tilde{\tau} )\right){}^2 \Bigr) \epc \\[1ex]
     D_2( \tilde{\tau}, h) & = 4 e^{2 \tilde{\tau} }
          \Bigl(h \tilde{\tau}  I_0(\tilde{\tau} ){}^2 (h \tilde{\tau} +8 i J) \notag \\
	  & \mspace{108.mu} - 2 h I_1(\tilde{\tau} ) I_0(\tilde{\tau} ) (h \tilde{\tau} +8 i J)
            + I_1(\tilde{\tau} ){}^2 (4 J-i h \tilde{\tau} )^2\Bigr) \epp
\end{align}
\end{subequations}

The final pieces $u_m\,(m=1,2)$ are also represented in simple forms,
\begin{subequations}
\begin{align}
     u_1 & = 4J - 4J \frac{\partial}{\partial \tau} \ln D_1( \tau, h) \bigr|_{\tau=  \tilde{\tau} }
             \epc \\[1ex]
     u_2 & = -8J \frac{1+\tilde{\tau}}{\tilde{\tau}} -4J
                 \frac{\partial}{\partial \tau} \ln \Bigl( D_2({\tau}, h) \Bigr)
		 \Bigr|_{\tau=\tilde{\tau}} \epp
\end{align}
\end{subequations}
By substituting these into the formula (\ref{xxtransht}), we obtain
\begin{subequations}
\begin{align}
     \bigl\< \sigma_1^- \sigma_2^+ (t) \bigr\>_{T}  & \sim  
        - \frac{J}{T} {\rm e}^{-4J^2 t^2 -ih t }
	  \frac{4 i J + h \tilde{\tau}  }{4i J \tilde{\tau}} I_1(\tilde{\tau}) \epc \\[1ex]
     \bigl\< \sigma_1^- \sigma_3^+ (t) \bigr\>_{T} & \sim
          \frac{1}{2} \Bigl( -\frac{J}{T}\Bigr)^2  \frac{{\rm e}^{-4J^2 t^2 -ih t}}{4 J^2 \tilde{\tau} ^2}
          \bigl( h \tilde{\tau} I_0(\tilde{\tau}){}^2 (h \tilde{\tau} +8 i J) \notag \\
	  & \mspace{90.mu} -2 h I_1(\tilde{\tau} ) I_0(\tilde{\tau} ) (h \tilde{\tau} +8 i J)
	  +I_1(\tilde{\tau} ){}^2 (4 J-i h \tilde{\tau} )^2 \bigr) \epp
\end{align}
\end{subequations}
When $h=0$, these expressions reduce to
\begin{subequations}
\begin{align}
     \bigl\<\sigma_1^- \sigma_2^+ (t) \bigr\>_{T, h=0}
        & \sim- \frac{J}{T} {\rm e}^{-4J^2 t^2}  \frac{I_1(\tilde{\tau})}{\tilde{\tau}}, \\[1ex]
     \bigl\< \sigma_1^- \sigma_3^+ (t) \bigr\>_{T, h=0} & \sim  2 \Bigl(\frac{J}{T}\Bigr)^2  {\rm e}^{-4J^2 t^2} 
          \Bigl( \frac{I_1(\tilde{\tau})}{\tilde{\tau}} \Bigr)^2, 
\end{align}
\end{subequations}
and the leading order terms in \cite{PeCa77} (eq.\ (6.36)) are recovered if we identify
$J_{\text{Perk-Capel}}=2 J$.

For larger $m$, we still have difficulties in  manipulating huge expressions and present
only the result for $m=3$ and $h=0$,
\begin{subequations}
\begin{align}
     {W}_3(\tilde{\tau})|_{h=0} & =-\frac{8 e^{\tilde{\tau} }
        \left(-\tilde{\tau} ^2 I_0(\tilde{\tau} ){}^2
	+ \tilde{\tau} ^2 I_1(\tilde{\tau} ){}^2
	+ \tilde{\tau}  I_1(\tilde{\tau} ) I_0(\tilde{\tau} )
	+ 2 I_1(\tilde{\tau} ){}^2\right)}{\tilde{\tau} ^3 I_1(\tilde{\tau})} \epc \\[1ex]
     u_3|_{h=0} & = 4J  \frac{\partial}{\partial \tau}
        \Bigl[ \tau+2  \ln \bigl( \frac{\tau}{I_1(\tau)} \bigr)  \Bigr]_{\tau=\tilde{\tau}} \epc \\[1ex]
     \bigl< \sigma_1^-  \sigma_4^+ (t) \bigr\>_{T, h=0}  &\sim 16 \Bigl( -\frac{J}{T} \Bigr)^3
          {\rm e}^{-4J^2 t^2} \frac{I_1(\tilde{\tau})}{ \tilde{\tau}^5} \notag \\
	  & \mspace{72.mu} \times
	    \Bigl( - \tilde{\tau}^2 I_0( \tilde{\tau})^2
	           + \tilde{\tau}  I_0( \tilde{\tau}) I_1( \tilde{\tau})
		   + (\tilde{\tau}^2+2) I_1( \tilde{\tau})^2 \Bigr) \epp
\end{align}
\end{subequations}
This is (\ref{examplem4}) of the main text.

}


\providecommand{\bysame}{\leavevmode\hbox to3em{\hrulefill}\thinspace}
\providecommand{\MR}{\relax\ifhmode\unskip\space\fi MR }
\providecommand{\MRhref}[2]{%
  \href{http://www.ams.org/mathscinet-getitem?mr=#1}{#2}
}
\providecommand{\href}[2]{#2}

\end{document}